\numberwithin{equation}{section}
\newtheorem{theorem}{Theorem}
\newaliascnt{proposition}{theorem}
\newtheorem{proposition}[proposition]{Proposition}
\newaliascnt{lemma}{theorem}
\newtheorem{lemma}[lemma]{Lemma}
\newaliascnt{corollary}{theorem}
\newtheorem{corollary}[corollary]{Corollary}
\newaliascnt{definition}{theorem}
\newaliascnt{example}{theorem}
\newaliascnt{remark}{theorem}
\newtheorem{remark}[remark]{Remark}
\newtheorem{assumption}{Assumption}
\newcounter{hypA}
\newenvironment{hypA}{\refstepcounter{hypA}\begin{itemize}
  \item[{\bf (A\arabic{hypA})}]}{\end{itemize}}
\newcounter{hypB}
\newenvironment{hypB}{\refstepcounter{hypB}\begin{itemize}
  \item[{\bf (B\arabic{hypB})}]}{\end{itemize}}
\def\equationautorefname~#1\null{%
  Equation~(#1)\null
}
\newcommand{\1}[1]{\mathbbm{1}_{#1}}
\newcommandx\A[2][1=]{
\ifthenelse{\equal{#1}{}}
{\hspace{-1mm}(\textbf{A\ref{#2}})\hspace{-1mm}}
{\hspace{-1mm}(\textbf{A\ref{#1}--\ref{#2}})\hspace{-1mm}}
}
\newcommand{\af}[1]{h_{#1}} 
\newcommand{\alg}[1]{\mathcal{#1}} 
\newcommand{\ans}[3]{\mathcal{J}_{#1}^{(#2,#3)}}
\newcommand{\addf}[1]{\termletter_{#1}} 
\newcommand{\adds}[1]{\af{#1}} 
\newcommandx{\arr}[2][1=]{
\ifthenelse{\equal{#1}{}}
{\upsilon_{\N}^{#2}}
{(\upsilon_{\N}^{#2})^{#1}}
}
\newcommandx{\arrterm}[3][1=]{
\ifthenelse{\equal{#1}{}}
{\tilde{\upsilon}_{\N}(#3,#2)}
{\tilde{\upsilon}_{\N}^{#1}(#3,#2)}
}
\newcommandx{\asvar}[4][1=]{
\ifthenelse{\equal{#1}{}}
{\sigma_{#2} \langle #3, #4 \rangle(\af{#1})}
{\sigma_{#2}^2 \langle #3, #4 \rangle(\af{#2})}
}
\newcommandx{\asvarFFBSm}[4][1=]{
\ifthenelse{\equal{#1}{}}
{\tilde{\sigma}_{#2} \langle #3, #4 \rangle(\af{#1})}
{\tilde{\sigma}_{#2}^2 \langle #3, #4 \rangle(\af{#2})}
}
\newcommandx{\asvarstd}[2][1=]{
\ifthenelse{\equal{#1}{}}
{\sigma_{#2}(\af{#2})}
{\sigma_{#2}^2(\af{#2})}
}
\newcommandx{\asvarFFBSmstd}[2][1=]{
\ifthenelse{\equal{#1}{}}
{\tilde{\sigma}_{#2}(\af{#2})}
{\tilde{\sigma}_{#2}^2(\af{#2})}
}
\newcommandx\B[2][1=]{
\ifthenelse{\equal{#1}{}}
{\hspace{-1mm}(\textbf{B\ref{#2}})\hspace{-1mm}}
{\hspace{-1mm}(\textbf{B\ref{#1}--\ref{#2}})\hspace{-1mm}}
}
\newcommandx{\BF}[3][1=]{
\ifthenelse{\equal{#1}{}}
{\kernel{D}_{#2, #3}}
{\kernel{D}_{#2, #3}^{#1}}
}
\newcommandx{\BFcent}[3][1=]{
\ifthenelse{\equal{#1}{}}
{\tilde{\kernel{D}}_{#2, #3}}
{\tilde{\kernel{D}}_{#2, #3}^{#1}}
}
\newcommand{\bi}[3]{J_{#1}^{(#2, #3)}}
\newcommandx{\bk}[2][1=]{ 
\ifthenelse{\equal{#1}{}}
{\overleftarrow{\kernel{Q}}_{#2}}
{\overleftarrow{\kernel{Q}}_{#2}^{#1}}
}
\newcommand{\bmf}[1]{\set{F}(#1)} 
\newcommand{\borel}[1]{\mathcal{B}(\set{#1})} 
\newcommandx{\cexp}[3][1=]{
\ifthenelse{\equal{#1}{}}
{\mathbb{E}\left[ #2 \mid #3 \right]} 
{\mathbb{E}[ #2 \mid #3 ]}
}
\newcommand{\convd}{\overset{\mathcal{D}}{\longrightarrow}}
\newcommand{\convp}{\overset{\prob}{\longrightarrow}}
\newcommand{\covterm}[2]{\eta_{#1}(#2)}
\newcommand{\E}{\mathbb{E}}
\newcommand{\epart}[2]{\xi_{#1}^{#2}}
\newcommand{\eqdef}{\vcentcolon=} 
\newcommand{\eqsp}{}
\newcommandx{\genfd}[1][1=]{
\ifthenelse{\equal{#1}{}}
{\mathcal{F}}
{\mathcal{F}_{\N}}
}
\newcommand{\hbd}{| \tilde{h} |_{\infty}}
\newcommand{\hk}{\kernel{Q}} 
\newcommand{\hklow}{\ushort{\varepsilon}}
\newcommand{\hkup}{\bar{\varepsilon}}
\newcommand{\hd}{q} 
\newcommand{\intvect}[2]{\llbracket #1, #2 \rrbracket}
\renewcommand{\k}{j}
\newcommand{\kernel}[1]{\mathbf{#1}}
\newcommand{\kletter}{\tilde{\N}}
\newcommandx{\K}[1][1=]{
\ifthenelse{\equal{#1}{}}{{\kletter}}{{\tilde{\N}^{#1}}}
}
\newcommand{\leba}{a_\N}
\newcommand{\lebb}{\tilde{a}_\N}
\newcommandx{\lebfun}[1][1=]{
\ifthenelse{\equal{#1}{}}
{\lebfunletter}
{\lebfunletter_{#1}}
}
\newcommand{\lebfunbd}{|\lebfunletter|_{\infty}}
\newcommand{\lebref}{\lambda}
\newcommand{\lebfunletter}{\varphi}
\newcommand{\lebden}{\psi}
\newcommand{\lebker}{\boldsymbol{\Psi}}
\newcommand{\lk}[1]{\kernel{L}_{#1}} 
\newcommand{\Lp}[1]{\set{L}_{#1}}
\newcommandx{\M}[1][1=]{
\ifthenelse{\equal{#1}{}}
{\N}
{\N}
}
\newcommand{\md}[1]{g_{#1}} 
\newcommand{\mdlow}{\ushort{\delta}}
\newcommand{\mdup}{\bar{\delta}}
\newcommand{\meas}[1]{\mathsf{M}(#1)}
\newcommand{\measSpace}[1]{(\set{#1},\alg{#1})} 
\newcommand{\mr}{\varrho}
\newcommand{\N}{N}
\newcommand{\noises}[1]{\sletter_{#1}}
\newcommand{\noiseo}[1]{\oletter_{#1}}
\newcommand{\nset}{\mathbb{N}}
\newcommand{\nsetpos}{\mathbb{N}^*}
\newcommand{\oletter}{\zeta}
\newcommand{\ordo}{\mathcal{O}}
\newcommandx{\oscn}[2][1=]{
\ifthenelse{\equal{#1}{}}{\operatorname{osc}(#2)}{\operatorname{osc}^{#1}(#2)}
}
\newcommand{\partfilt}[1]{\mathcal{F}_{#1}^{\N}}
\newcommand{\partfiltbar}[1]{\tilde{\mathcal{F}}_{#1}^\N }
\newcommand{\PF}{\mathsf{PF}}
\newcommandx\post[2][1=]{
\ifthenelse{\equal{#1}{}}
	{\phi_{#2}}
	{\phi_{#2}^\N}
}
\newcommand{\prob}{\mathbb{P}} 
\newcommand{\probdist}{\mathsf{Pr}}
\newcommand{\probmeas}[1]{\mathsf{M}_1(#1)}
\newcommand{\probSpace}{(\Omega,\alg{F},\prob)} 
\newcommand{\refM}{\mu} 
\newcommand{\rmd}{\mathrm{d}}
\newcommand{\rset}{\mathbb{R}}
\newcommand{\rsetnonneg}{\mathbb{R}_+}
\newcommand{\rsetpos}{\mathbb{R}_+^*}
\newcommand{\set}[1]{\mathsf{#1}} 
\newcommand{\sigmao}{\sigma_{\oletter}}
\newcommand{\sigmas}{\sigma_{\sletter}}
\newcommand{\sletter}{\varepsilon}
\newcommand{\sptime}{{\hat{s}}}
\newcommand{\supn}[1]{\|#1\|_{\infty}} 
\newcommand{\term}[1]{\termletter_{#1}}
\newcommand{\termletter}{\tilde{h}}
\newcommand{\testfsymb}{f}
\newcommandx{\testf}[1][1=]{  
\ifthenelse{\equal{#1}{}}{\testfsymb}{\testfsymb_{#1}}
}
\newcommand{\testfpsymb}{\tilde{f}}
\newcommandx{\testfp}[1][1=]{  
\ifthenelse{\equal{#1}{}}{\testfpsymb}{\testfpsymb_{#1}}
}
\newcommand{\tstatletter}{\kernel{T}}
\newcommandx\tstat[2][1=]{
\ifthenelse{\equal{#1}{}}
	{\tstatletter_{#2}}
	{\tau_{#2}^{#1}}
}
\newcommand{\tstattil}[2]{\tilde{\tau}_{#2}^{#1}}
\newcommand{\unif}{\mathsf{U}}
\newcommand{\wgt}[2]{\omega_{#1}^{#2}}
\newcommand{\wgtsum}[1]{\Omega_{#1}}
\newcommand{\Xinit}{\chi}
\begin{document}

\begin{frontmatter}


\title{Efficient particle-based online smoothing in general hidden Markov models: the P{a}RIS algorithm}
\runtitle{The P{a}RIS algorithm}

\begin{aug}
\author{\fnms{Jimmy} \snm{Olsson}\thanksref{t1}
\ead[label=e1]{jimmyol@kth.se}}
\and
\author{\fnms{Johan} \snm{Westerborn}
\ead[label=e2]{johawes@kth.se}}

\thankstext{t1}{This work is supported by the Swedish Research Council, Grant 2011-5577.}
\runauthor{J.~Olsson and J.~Westerborn}

\affiliation{KTH Royal Institute of Technology}

\address{Department of Mathematics \\
KTH Royal Institute of Technology \\
SE-100 44  Stockholm, Sweden \\
\printead{e1}, \printead*{e2}}
\end{aug}

\runauthor{J.~Olsson and J.~Westerborn}

\begin{abstract}
    This paper presents a novel algorithm, the particle-based, rapid incremental smoother (PaRIS), for efficient online approximation of smoothed expectations of additive state functionals in general hidden Markov models. The algorithm, which has a linear computational complexity under weak assumptions and very limited memory requirements, is furnished with a number of convergence results, including a central limit theorem. An interesting feature of PaRIS, which samples on-the-fly from the retrospective dynamics induced by the particle filter, is that it requires two or more backward draws per particle in order to cope with degeneracy of the sampled trajectories and to stay numerically stable in the long run with an asymptotic variance that grows only linearly with time. 
\end{abstract}

\begin{keyword}[class=MSC]
\kwd[Primary ]{62M09}
\kwd[; secondary ]{62F12}
\end{keyword}

\begin{keyword}
\kwd{central limit theorem}
\kwd{general hidden {M}arkov models}
\kwd{Hoeffding-type inequality}
\kwd{online estimation}
\kwd{particle filter}
\kwd{particle path degeneracy}
\kwd{sequential Monte Carlo}
\kwd{smoothing}
\end{keyword}

\end{frontmatter}



\section{Introduction}
\label{sec:introduction}

This paper deals with the problem of state estimation in \emph{general state-space hidden Markov models} (HMMs) using \emph{sequential Monte Carlo} (SMC) \emph{methods} (also known as \emph{particle filters}), and presents a novel online algorithm for the computation of smoothed expectations of \emph{additive state functionals} in models of this sort. The algorithm, which copes with the well-known problem of \emph{particle ancestral path degeneracy} at a computational complexity that is only \emph{linear} in the number of particles, is provided with a rigorous theoretical analysis establishing its convergence and long-term stability as well as a simulation study illustrating its computational efficiency. 

Given measurable spaces $(\set{X}, \alg{X})$ and $(\set{Y}, \alg{Y})$, an HMM is a bivariate stochastic process $\{ (X_t, Y_t) \}_{t \in \nset}$ (where $t$ will often be referred to as ``time'' without being necessarily a temporal index) taking its values in the product space $(\set{X} \times \set{Y}, \alg{X} \varotimes \alg{Y})$, where the $\set{X}$-valued marginal process $\{ X_t \}_{t \in \nset}$ is a \emph{Markov chain} (often referred to as the \emph{state sequence}) which is only partially observed through the $\set{Y}$-valued \emph{observation process} $\{ Y_t \}_{t \in \nset}$.  Conditionally on the unobserved state sequence $\{ X_t \}_{t \in \nset}$, the observations are assumed to be independent and such that the conditional distribution of each $Y_t$ depends on the corresponding state $X_t$ only. HMMs are nowadays used within a large variety of scientific and engineering disciplines such as econometrics \cite{chib:02}, speech recognition \cite{rabiner:juang:1993}, and computational biology \cite{koski:2001} (the more than 360 references in \cite{cappe:2001:hmmbib} for the period 1989--2000 gives an idea of the applicability of these models). 

Any kind of statistical inference in HMMs involves typically the computation of conditional distributions of  unobserved states given observations. Of particular interest are the sequences of \emph{filter distributions}, i.e., the conditional distributions of $X_t$ given $Y_{0:t} \eqdef (Y_0, \ldots, Y_t)$ (this will be our generic notation for vectors), and \emph{smoothing distributions}, i.e., the joint conditional distributions of $X_{0:t}$ given $Y_{0:t}$, for $t \in \nset$. We will denote these distributions by $\post{t}$ Êand $\post{0:t}$, respectively (precise definitions of these measures are given in \autoref{sec:HMM}). In this paper we are focusing on the problem of computing, recursively in time, smoothed expectations 
\begin{equation} \label{eq:smoothed:exp}
    \post{0:t} \af{t} = \int \af{t}(x_{0:t}) \, \post{0:t}(\rmd x_{0:t}) \quad (t \in \nset) \eqsp,
\end{equation}
for additive functionals $\af{t}$ of form 
\begin{equation} \label{eq:additive:functional}
    \adds{t}(x_{0:t}) \eqdef \sum_{\ell = 0}^{t - 1} \addf{\ell}(x_{\ell: \ell + 1}) \quad (x_{0:t} \in \set{X}^{t + 1}) \eqsp. 
\end{equation}
Expectations of the form \eqref{eq:smoothed:exp} appear naturally in the context of parameter estimation using the \emph{maximum-likelihood method}, e.g., when computing the \emph{score-function} (the gradient of the log-likelihood function) via the \emph{Fisher identity} or when computing the \emph{intermediate quantity} of the \emph{expectation-maximization} (EM) \emph{algorithm}. Of particular relevance is the situation where the HMM belongs to an \emph{exponential family}. We refer to \cite[Sections 10 and 11]{cappe:moulines:ryden:2005} for a comprehensive treatment of these matters. Moreover, \emph{online} implementations of EM (see, e.g., \cite{mongillo:deneve:2008,cappe:2009}) require typically such smoothed expectations to be computed in an online fashion. In the case of \emph{marginal smoothing} the interest lies in computing conditional expectations of some state $X_{\sptime}$ given $Y_{0:t}$ for $t \geq \sptime$, which can be cast into our framework by letting, in \eqref{eq:additive:functional}, $\addf{\ell} = 0$ for $\ell \neq \sptime$ and $\addf{\sptime}(x_{\sptime:\sptime + 1}) = \addf{\sptime}(x_{\sptime})$. Nevertheless, since exact computation of smoothed expectations is possible only in the cases of linear Gaussian HMMs or HMMs with finite state space, we are in general referred to finding approximations of these quantities, and the present paper focuses on the use of SMC-based techniques for this task. A particle filter approximates the flow $\{Ê\post{t} \}_{t \in \nset}$ of filter distributions by a sequence of occupation measures associated with samples $\{ (\epart{t}{i}, \wgt{t}{i}) \}_{i = 1}^\N$, $t \in \nset$, of random draws, \emph{particles} (the $\epart{t}{i}$'s), with associated non-negative importance weights (the $\wgt{t}{i}$'s). Particle filters revolve around two operations: a \emph{selection step} duplicating/discarding particles with large/small importance weights, respectively, and a \emph{mutation step} evolving randomly the selected particles in the state space. The first and most basic implementation, the so-called \emph{bootstrap particle filter} \cite{gordon:salmond:smith:1993} (see also \cite{kitagawa:1996}), propagates, in the mutation step, the particles according to the dynamics of the hidden Markov chain and selects the same multinomially according to importance weights proportional to the local likelihood of each particle given the current observation. This scheme imposes a dynamics of the particle cloud that resembles closely that of the filter distribution flow. Due to its very strong potential to solve nonlinear/non-Gaussian filtering problems, SMC methods have been subject to extensive research during the last two decades, resulting in a broad range of developments and variations of the original scheme; see, e.g., \cite{doucet:defreitas:gordon:2001,cappe:moulines:ryden:2005,cappe:godsill:moulines:2007,doucet:johansen:2009} and the references therein.   

Interestingly, the particle filter provides, as a by-product, approximations also of the joint smoothing distributions in the sense that for each $t \in \nset$,Ê the occupation measure associated with the \emph{ancestral lines} of the particles $\{ \epart{t}{i} \}_{i = 1}^\N$ forms, when the lines are assigned the corresponding weights $\{ \wgt{t}{i} \}_{i = 1}^\N$, an estimate of $\post{0:t}$. Unfortunately, this \emph{Poor man's smoother} (using the terminology of \cite{douc:moulines:stoffer:2014}) has a major flaw in that resampling systematically the particles leads to significant depletion of the trajectories and the existence of a random time before which all the ancestor paths coincide. In fact, \cite{jacob:murray:rubenthaler:2013} established, in the case of a compact state space $\set{X}$, a bound on the expected hight of the ``crown'' of the ancestral tree, i.e., the expected time distance from the last generationÊ back to the most recent common ancestor, which is proportional to $N \log (N)$ and \emph{uniform} in time. Thus, the ratio of the length of the ``crown'' to that of the ``trunk'' tends to zero when time increases, implying that the Monte Carlo approximation obtained through this naive approach will, for long observation records, be based on practically a \emph{single draw}, leading to a depleted estimator with a variance that grows quadratically with time. 

\subsection{Previous work}

In the case of additive state functionals it is possible to cope partly with the degeneracy problem described above by means of a \emph{fixed-lag smoothing} technique \cite{kitagawa:sato:2001,olsson:cappe:douc:moulines:2006,olsson:strojby:2011}. This approach avoids the particle path degeneracy by ``localizing'' the smoothing of a certain state around observations that are only significantly statistically dependent of the state in question and discarding remote and weakly influential observations, i.e., subsequent observations located at a time distance from the state exceeding a lag chosen by the user. The method is expected to work well if the mixing properties of the model allow the lag to be smaller than the length of the ``crown'' of the ancestral tree. Still, such truncation introduces a mixing-dependent bias, and designing the size of the lag is thus a non-trivial task. 

A completely different way of approaching the problem goes via the so-called \emph{forward-filtering backward-smoothing decomposition}, which is based on the fact that the latent process still satisfies the Markov property when evolving backward in time and conditionally on the observations. Consequently, each smoothing measure $\post{0:t}$ Êcan be represented as the joint law of this inhomogeneous backward chain with initial distribution given by the corresponding filter $\post{t}$. Since the transition kernels of the backward chain depend on the filter distributions, which may be estimated efficiently by a particle filter, a particle-based approximation of the smoothing distribution can thus be naturally obtained by running, in a prefatory filtering pass, the particle filter up to time $t$ (if $\post{0:t}$ is the distribution of interest) and, in a backward pass, forming particle-based estimates of the backward kernels (and consequently the smoothing distribution) by modifying the particle weights computed in the forward pass. This scheme, which avoids completely the path degeneracy problem at the cost of a rather significant computational complexity, is referred to as the \emph{forward-filtering backward smoothing} (FFBSm) \emph{algorithm}  \cite{doucet:godsill:andrieu:2000,huerzeler:kuensch:1998,kitagawa:1996}. As an alternative, the \emph{forward-filtering backward simulation} (FFBSi) \emph{algorithm} \cite{godsill:doucet:west:2004} generates, in order to reduce the computational overhead of FFBSm, trajectories being approximately distributed according to the smoothing distribution by \emph{simulating} transitions according to the backward dynamics induced by the particle filter approximations produced by the forward pass; as a consequence, FFBSm can be viewed as a Rao-Blackwellized version of FFBSi. These two algorithms correspond directly to the \emph{Rauch-Tung-Striebel smoother} \cite{rauch:tung:striebel:1965} for linear Gaussian HMMs or the \emph{Baum-Welch algorithm} \cite{baum:petrie:soules:weiss:1970} for HMMs with finite state space. FFBSm and FFBSi were analyzed theoretically in \cite{douc:garivier:moulines:olsson:2010} (see also \cite{delmoral:doucet:singh:2010}), which provides exponential concentration inequalities and well as central limit theorems (CLTs) for these algorithms. Since each backward draw of FFBSi requires a normalizing constant with $\N$ terms to be computed, the overall complexity of the algorithm is $\ordo ( \N^2 )$. Under the mild assumption that the transition density of the latent chain is uniformly bounded, this complexity can be reduced to $\ordo ( \N )$ by means of simple accept-reject approach. The latter technique, which was found in \cite{douc:garivier:moulines:olsson:2010}, will play a key role also in the development of the present paper. Since the Markov transition kernels of the backward chain depend on the filter distributions, the FFBSm and FFBSi algorithms require in general \emph{batch mode} processing of the observations. This is the case also for the $\ordo ( \N )$ smoother proposed in \cite{fearnhead:wyncoll:tawn:2010}, which is based on the \emph{two-filter representation} of each marginal smoothing distribution. 

When the objective consists in \emph{online} smoothing of additive state functionals \eqref{eq:additive:functional}, \emph{recursive} approximation of the forward-filtering backward-smoothing decomposition can be achieved by introducing the auxiliary statistics 
$$
    \tstat{t} \af{t}(x_t) = \E [ \af{t}( X_{0:t} ) \mid X_t = x_t, Y_{0:t} ] \quad ( t \in \nset, x_t \in \set{X} ) \eqsp, 
$$
where $\E$ denotes expectation associated with the law of the canonical version of the HMM (more precisely, in the previous expression $\tstat{t}$ Êis a normalized transition kernel which will be defined in \autoref{sec:HMM}). This auxiliary statistic can be updated online according to  
\begin{equation} \label{eq:T:stat:recursion}
    \tstat{t + 1} \af{t + 1}( x_{t + 1} ) = \E [ \tstat{t} \af{t} ( X_t ) + \addf{t} ( X_t, x_{t + 1} ) \mid X_{t + 1} = x_{t + 1}, Y_{0:t} ] \quad ( t \in \nset , x_{t + 1} \in \set{X} ) \eqsp;
\end{equation}
see \cite{mongillo:deneve:2008,cappe:2009,delmoral:doucet:singh:2010}. In this recursive formula, the expectation is taken under the backward kernel describing the conditional distribution of $X_t$ given $X_{t + 1}$ and $Y_{0:t}$. On the basis of the auxiliary statistics, each smoothed additive functional may be computed as 
$$
    \post{0:t} \af{t} = \int \tstat{t} \af{t} (x_t) \, \post{t}(\rmd x_t)  \quad (t \in \nset) \eqsp.   
$$
Following \cite{delmoral:doucet:singh:2010}, a particle representation of the recursion \eqref{eq:T:stat:recursion} is naturally formed using the estimates of the retrospective dynamics provided by the FFBSm algorithm. Interestingly, this yields a procedure that estimates, as new observations become available, the smoothing distribution flow in a forward-only manner while avoiding completely any problems of particle path degeneracy. However, since the method requires the normalizations of the backward kernels to be computed for each  forward particle, the overall complexity of this algorithm is again $\ordo(\N^2)$, which is unrealistic for large particle sample sizes.    

\subsection{Our approach}

Our novel algorithm, which we will refer to as the \emph{particle-based, rapid incremental smoother} (PaRIS), is, similarly to the forward-only implementation of FFBSm proposed in \cite{delmoral:doucet:singh:2010}, based on \eqref{eq:T:stat:recursion} and can be viewed as an adaptation of the FFBSi algorithm to this recursion. It also shares some similarities with the ancestor sampling approach within the framework of particle Gibbs sampling \cite{lindsten:schon:2013}. Appealingly, we are able to adopt the accept-technique proposed by \cite{douc:garivier:moulines:olsson:2010}, yielding a fast algorithm with $\ordo(\N)$ complexity. PaRIS differs from the forward-only implementation of FFBSm in the way the update \eqref{eq:T:stat:recursion} of the auxiliary function is implemented; more specifically, instead of computing each subsequent auxiliary statistic as the expected sum of the previous statistic and the incremental term under the retrospective dynamics induced by the particle filter, PaRIS \emph{simulates} $\K$ such sums using the backward kernel and updates each statistic by taking the \emph{sample mean} of these draws. Thus, as for the FFBSi algorithm, forward-only FFBSm can be viewed as a Rao-Blackwellization of PaRIS. Interestingly, the design of the sample size $\K$ is ultimately critical, as the naive choice $\K = 1$ leads to a degeneracy phenomenon that resembles closely that of the Poor man's smoother and, consequently, a variance that grows quadratically with $t$; on the other hand, for all $\K \geq 2$ Êthe algorithm stays numerically stable in the long run with a \emph{linearly} increasing variance. The main objective of the present paper is to investigate theoretically this phase transition by, first, deriving, via a non-asymptotic Hoeffding-type inequality, the asymptotic (as $\N$ Êtends to infinity) variance of the Monte Carlo estimates produced by the algorithm (which is highly nontrivial due to the complex dependence structures induced by the backward simulation) and, second, verifying that this asymptotic variance is, for any $\K \geq 2$, of order $\ordo(t)$ and $\ordo(1)$ in the cases of joint smoothing and marginal smoothing, respectively.ÊThe authors are not aware of any similar analysis in the SMC literature.  The stability results are obtained under strong mixing assumptions that are standard in the literature of SMC analysis (see, e.g., \cite{delmoral:guionnet:2001,delmoral:2004,cappe:moulines:ryden:2005}). Also the numerical performance of algorithm is investigated in a simulation study, comprising a linear Gaussian state space model (for which any quantity of interest may be computed exactly using the Rauch-Tung-Striebel smoother) and a stochastic volatility model. 

We finally point out that the PaRIS algorithm was outlined by us in the conference note \cite{olsson:westerborn:2014} without any theoretical support; in the present paper we are able to confirm, through a rigorous theoretical analysis, the conjectures made in the note in question concerning the stability properties of the algorithm. 

To sum up, the smoothing algorithm we propose 
\begin{itemize}
    \item is computationally very efficient and easy to implement, 
    \item does not suffer from particle lineage degeneracy,
    \item allows the observed data of the HMM to be processed online with minimal memory requirements, and 
    \item is furnished with rigorous theoretical results describing the convergence and numeric stability of the same. 
\end{itemize}

\subsection{Outline}

After having introduced some kernel notation, HMMs, and the smoothing problem in~\autoref{sec:preliminaries}, we describe carefully, in~\autoref{sec:particle:smoothing}, particle filters, FFBSm (and its forward-only implementation), and FFBSi. \autoref{sec:PaRIS} contains the derivation of our novel algorithm as well as some discussion of the choice of the design parameter $\K$. Our theoretical results are presented in \autoref{sec:theoretical:results}, including a Hoeffding-type inequality (\autoref{thm:hoeffding:affine}) and a CLT (\autoref{thm:CLT:affine}). \autoref{sec:long:term:stability} is devoted to the numerical stability of PaRIS in the case $\K \geq 2$, and \autoref{thm:linear:bound:normalized:PaRIS} and \autoref{thm:stability:marginal:smoothing:PaRIS} provide variance bounds in the cases of joint and marginal smoothing, respectively. In~\autoref{sec:simulations} we test numerically the algorithm and some conclusions are drawn in \autoref{sec:con}. Finally, \autoref{sec:proofs} and \autoref{sec:technical:lemmas} provide all proofs and some technical results, respectively.


\section{Preliminiaries}
\label{sec:preliminaries}

\subsection{Notation}
\label{sec:notation}

Before going into the details concerning HMMs and particle filters we introduce some notation. For any measurable space $\measSpace{X}$, where $\alg{X}$ is a countably generated $\sigma$-algebra, we denote by $\bmf{\alg{X}}$ the set of bounded $\alg{X}/\borel{\mathbb{R}}$-measurable functions on $\set{X}$. For any $h \in \bmf{\alg{X}}$, we let $\supn{h} \eqdef \sup_{x \in \set{X}} |Êh(x) |$ and $\oscn{h} \eqdef \sup_{(x, x') \in \set{X}^2} | h(x) - h(x') |$ denote the sup and oscillator norms of $h$, respectively. Let $\meas{\alg{X}}$ be the set of $\sigma$-finite measures on $\measSpace{X}$ and $\probmeas{\alg{X}} \subset \meas{\alg{X}}$ the probability measures. Given $n \in \nset$, we will denote product sets and product $\sigma$-fields by $\set{X}^n \eqdef \set{X} \times \cdots \times \set{X}$ and $\alg{X}^n \eqdef \alg{X} \varotimes \cdots \varotimes \alg{X}$ ($n$ times), respectively.Ê For real numbers and integers we define the sets $\rsetnonneg \eqdef [0, \infty)$, $\rsetpos \eqdef (0, \infty)$, $\nset \eqdef \{0, 1, 2, \ldots \}$, and $\nsetpos \eqdef \{1, 2, 3, \ldots \}$. For any quantities $\{Êa_\ell \}_{\ell = m}^n$Êwe denote vectors as $a_{m:n} \eqdef (a_m, \ldots, a_n)$ and for any $(m, n) \in \nset^2$ such that $m \leq n$ we denote $\intvect{m}{n} \eqdef \{m, m + 1, \ldots, nÊ\}$. The cardinality of a set $\set{S}$ is denoted by $\# \set{S}$. 

An unnormalized transition kernel $\kernel{K}$ from $\measSpace{X}$ to $\measSpace{Y}$ induces two integral operators, one acting on functions and the other on measures. More specifically, let $h \in \bmf{\alg{X} \varotimes \alg{Y}}$ and $\nu \in \meas{\alg{X}}$, and define the measurable function
$$
    \kernel{K} h : \set{X} \ni x \mapsto \int h(x,y) \, \kernel{K}(x, \rmd y),
$$
and the measure
$$
    \nu \kernel{K} : \alg{Y} \ni \set{A} \mapsto \int \kernel{K}(x, \set{A}) \, \nu(\rmd x),
$$
whenever these quantities are well-defined. Moreover, let $\kernel{K}$ be defined as above and let $\kernel{L}$ be another unnormalized transition kernel from $\measSpace{Y}$ to a third measurable space $\measSpace{Z}$; we then define two different \emph{products} of $\kernel{K}$ and $\kernel{L}$, namely
$$
    \kernel{K} \kernel{L} : \set{X} \times \alg{Z} \ni (x, \set{A}) \mapsto \int \kernel{K}(x, \rmd y) \, \kernel{L}(y, \set{A})
$$
and 
$$
    \kernel{K} \varotimes \kernel{L} : \set{X} \times (\alg{Y} \varotimes \alg{Z}) \ni (x,\set{A}) \mapsto \int \1{\set{A}}(y,z) \, \kernel{K}(x, \rmd y) \, \kernel{L}(y, \rmd z) \eqsp,
$$
whenever these are well-defined. Note that the previous products form new transition kernels from $\measSpace{X}$ to $\measSpace{Z}$ and from $\measSpace{X}$ to $(\set{Y} \times \set{Z}, \alg{Y} \varotimes \alg{Z})$, respectively. We also define the $\varotimes$-product of a kernel $\kernel{K}$ and a measure $\nu \in \meas{\alg{X}}$ as the new measure
$$
    \nu \varotimes \kernel{K} : \alg{X} \varotimes \alg{Y} \ni \set{A} \mapsto \int \1{\set{A}}(x,y) \, \kernel{K}(x, \rmd y) \, \nu(\rmd x) \eqsp.
$$
The concept of \emph{reverse kernels} will be of importance in the coming developments. For a kernel $\kernel{K}$ from $\measSpace{X}$ to $\measSpace{Y}$ and a probability measure $\eta \in \probmeas{\alg{X}}$, the reverse kernel $\overleftarrow{\kernel{K}}_\eta$ associated with $(\eta, \kernel{K})$ is a transition kernel from $\measSpace{Y}$ to $\measSpace{X}$ satisfying, for all $h \in \bmf{\alg{X} \varotimes \alg{Y}}$, 
$$
    (\eta \varotimes \kernel{K}) h = (\eta \kernel{K}) \varotimes \overleftarrow{\kernel{K}}_\eta h \eqsp.
$$
A reverse kernel does not always exist; however if $\kernel{K}$ has a transition density $\kappa$ with respect to some reference measure in $\meas{\alg{Y}}$, then $\overleftarrow{\kernel{K}}_\eta$ exists and is given by 
\begin{equation} \label{eqn:bk}
    \overleftarrow{\kernel{K}}_\eta h(x) \eqdef \frac{ \int h(\tilde{x}) \kappa(\tilde{x}, x) \, \eta(\rmd \tilde{x})}{\int \kappa(\tilde{x}, x) \, \eta(\rmd \tilde{x})} \quad (h \in \bmf{\alg{X}}, x \in \set{X}) 
\end{equation}
(see \cite[Section~2.1]{cappe:moulines:ryden:2005} for details). 

Finally, for any kernel $\kernel{K}$ and any bounded measurable function $h$ we write $\kernel{K}^2 h \eqdef (\kernel{K} h)^2$ and $\kernel{K} h^2 \eqdef \kernel{K}(h^2)$. Similar notation will be used for measures.


\subsection{Hidden Markov models}
\label{sec:HMM}

Let $\measSpace{X}$ and $\measSpace{Y}$ be some measurable spaces, $\hk : \set{X} \times \alg{X} \rightarrow [0, 1]$ and $\kernel{G} : \set{X} \times \alg{Y} \rightarrow [0, 1]$ some Markov transition kernels, and $\Xinit \in \probmeas{\alg{X}}$. We define an HMM as the canonical version of the bivariate Markov chain $\{ (X_t, Y_t) \}_{t \in \nset}$ having transition kernel 
\begin{equation} \label{eq:HMM:kernel}
    \set{X} \times \set{Y} \times (\alg{X} \varotimes \alg{Y}) : ((x, y), \set{A}) \mapsto \hk \varotimes \kernel{G}(x, \set{A}) \eqsp.  
\end{equation}
and initial distribution $\Xinit \varotimes \kernel{G}$. The state process $\{ X_t \}_{t \in \nset}$ is assumed to be only partially observed through the observations process $\{ Y_t \}_{t \in \nset}$. The dynamics \eqref{eq:HMM:kernel} implies that (we refer to \cite[Section~2.2]{cappe:moulines:ryden:2005} for details)
\begin{enumerate}[(i)]
    \item the state sequence $\{ X_t \}_{t \in \nset}$ is a Markov chain with transition kernel $\hk$ and initial distribution $\Xinit$,   
    \item the observations are, conditionally on the states, independent and such that the conditional distribution of each $Y_t$ depends on the corresponding $X_t$ only and is given by the \emph{emission distribution} $\kernel{G}(X_t, \cdot)$.  
\end{enumerate}
We will throughout the paper assume that $\kernel{G}$ admits a density $\md{}$ (referred to as the \emph{emission density}) with respect to some reference measure $\nu \in \meas{\alg{Y}}$, i.e.,
$$
    \kernel{G} h (x) = \int h(y) \md{}(x, y) \, \nu(\rmd y) \quad (x \in \set{X}, h \in \bmf{\alg{Y}}) \eqsp.
$$

In the following we assume that we are given a distinguished sequence $\{Êy \}_{t \in \nset}$ of observations of $\{ Y_t \}_{t \in \nset}$, and will in general omit the dependence on these observations from the notation. Thus, define $\md{t}(x) \eqdef \md{}(x, y_t)$, $x \in \set{X}$. For any $(s, s', t) \in \nset^3$ such that $0 \leq s \leq s' \leq t$ we denote by $\post{s:s' \mid t}$ the conditional distribution (posterior) of $X_{s:s'}$ given the observations $Y_{0:t} = y_{0:t}$.  This distribution may be expressed as 
\begin{equation} \label{eq:def:posterior}
    \post{s:s' \mid t} h = \frac{\idotsint h(x_{s:s'}) \md{0}(x_0) \, \Xinit(\rmd x_0) \prod_{\ell = 0}^{t - 1}  \md{\ell + 1}(x_{\ell + 1}) \, \hk(x_\ell, \rmd x_{\ell + 1})}{\idotsint \md{0}(x_0) \, \Xinit(\rmd x_0) \prod_{\ell = 0}^{t - 1}  \md{\ell + 1}(x_{\ell + 1}) \, \hk(x_\ell, \rmd x_{\ell + 1})} \quad (h \in \bmf{\alg{X}^{s' - s + 1}})
\end{equation}
(assuming that the denominator is non-zero). If $s = s' = t$, we let $\post{t}$ be shorthand for $\post{t \mid t}$, i.e., the filter distribution at time $t$. If $s = 0$ and $s' = t$, then $\post{0:t \mid t}$ is the joint smoothing distribution. For $t \in \nset$, define the unnormalized transition kernels 
$$
    \lk{t} h(x) \eqdef \hk(\md{t + 1} h)(x) \quad (x \in \set{X}, h \in \bmf{\alg{X}}) \eqsp,
$$
with the convention that $\lk{s} \lk{t} \equiv \operatorname{id}$ whenever $s > t$. In addition, we let $\lk{-1}$ be the Boltzmann multiplicative operator associated with $\md{0}$, i.e., $\lk{-1} h(x) \equiv \md{0}(x) h(x)$ for all $h \in \bmf{\alg{X}}$ and $x \in \set{X}$. By combining this notation with \eqref{eq:def:posterior} we may express each filter distribution as  
$$
    \post{t} = \frac{\Xinit \lk{-1} \cdots \lk{t - 1}}{\Xinit \lk{-1} \cdots \lk{t - 1} \1{\set{X}}} \quad (t \in \nset) \eqsp,
$$
which implies immediately the \emph{filter recursion} 
\begin{equation} \label{eq:filter:recursion}
    \post{t + 1} = \frac{\post{t} \lk{t}}{\post{t} \lk{t} \1{\set{X}}} \quad (t \in \nset) \eqsp.
\end{equation}

In the following we will often deal with sums and products of functions with possibly different arguments. Since these functions will be defined on products of $\set{X}$, we will, when needed, with a slight abuse of notation, let subscripts define the domain and the values of such sums and products. For instance, $\testf[t] \testfp[t] : \set{X} \ni x_t \mapsto \testf[t](x_t)\testfp[t](x_t) $ while $\testf[t] + \testfp[t+1] : \set{X}^2 \ni (x_t, x_{t+1}) \mapsto \testf[t](x_t) + \testfp[t+1](x_{t+1})$. 

We will for simplicity assume that the HMM is \emph{fully dominated}, i.e., that also $\hk$ admits a transition density $\hd$ with respect to some reference measure $\refM \in \meas{\alg{X}}$. In this case the reverse kernel $\bk{\eta}$Ê of $\kernel{Q}$ with respect to any $\eta \in \probmeas{\alg{X}}$ is well-defined and specified by \eqref{eqn:bk} (with $\kappa = \hd$). It may be shown (see, e.g., \cite[Proposition~3.3.6]{cappe:moulines:ryden:2005}) that the state process has still the Markov property when evolving conditionally on $Y_{0:t} = y_{0:t}$ in the time-reversed direction; moreover, the distribution of $X_s$ given $X_{s + 1}$ and $Y_{0:t} = y_{0:t}$ is, for any $s \leq t$, given by $\bk{\post{s}}$, which is referred to as the \emph{backward kernel at time} $s$.Ê Consequently, we may express each joint smoothing distribution $\post{0:t \mid t}$ as 
\begin{equation} \label{eqn:back:decomp}
    \post{0:t \mid t} = \post{t} \tstat{t} \eqsp, 
\end{equation}
where we have defined the kernels 
$$
    \tstat{t} \eqdef 
    \begin{cases}
        \bk{\post{t - 1}} \varotimes \bk{\post{t - 2}} \varotimes \cdots \varotimes \bk{\post{0}} & \mbox{ for } t \in \nsetpos \eqsp, \\ 
        \operatorname{id} & \mbox{ for } t = 0 \eqsp. 
    \end{cases}
$$

As discussed in the introduction, the aim of this paper is, given a sequence $\{Ê\term{t} \}_{t \in \nset}$ of terms, to estimate the sequence $\{ \post{0:t \mid t} \af{t} \}_{t \in \nset}$, where each $\af{t}$ is given by \eqref{eq:additive:functional}. By convention, $\af{0} \equiv 0$ (implying, e.g., that $\tstat{0} \af{0} = 0$). Using \eqref{eqn:back:decomp}, each quantity of interest may be expressed as $\post{t} \tstat{t} \af{t} = \post{0:t \mid t} \af{t}$. In addition, note that $\{Ê\tstat{t} \af{t} \}_{t \in \nset}$ may be expressed recursively as 
\begin{equation} \label{eqn:tstat:update}
    \tstat{t + 1} \af{t + 1} = \bk{\post{t}}(\tstat{t} \af{t} + \addf{t}) \eqsp, 
\end{equation}
a formula that will play a key role in the coming developments. 

Finally, define, for $(s, t) \in \nset^2$ such that $s \leq t$, the \emph{retro-prospective kernels}
\[
    \begin{split}
        \BF{s}{t} h(x_s) &\eqdef \iint h(x_{0:t}) \, \tstat{s}(x_s, \rmd x_{0:s - 1}) \, \lk{s} \cdots \lk{t - 1}(x_s, \rmd x_{s + 1:t}) \\
        \BFcent{s}{t} h(x_s) &\eqdef \BF{s}{t}(h - \post{0:t \mid t} h)(x_s) 
    \end{split}  
    \quad (x_s \in \set{X}, h \in \bmf{\alg{X}^{t + 1}}) 
\]
operating simultaneously in the backward and forward directions. Note that the only difference between $\BF{s}{t}$ and $\BFcent{s}{t}$ is that the latter is centralized around the joint smoothing distribution. 


\subsection{Particle-based smoothing in HMMs}
\label{sec:particle:smoothing}

\subsubsection{The bootstrap particle filter}

In the following we assume that all random variables are defined on a common probability space $\probSpace$. The bootstrap particle filter updates sequentially in time a set of particles and associated weights in order to approximate the filter distribution flow $\{Ê\post{t} \}_{t \in \nset}$ given the sequence $\{Êy_t \}_{t \in \nset}$ of observations. Assume that we have at hand a particle sample $\{ (\wgt{t}{i}, \epart{t}{i}) \}_{i = 1}^{\N}$ approximating the filter distribution $\post{t}$ in the sense that for all $h \in \bmf{\alg{X}}$,
\begin{equation} \label{eq:initial:part:appr}
    \post[part]{t} h = \sum_{i = 1}^\N \frac{\wgt{t}{i}}{\wgtsum{t}} h(\epart{t}{i}) \stackrel{\tiny \N \rightarrow \infty}{\backsimeq} \post{t} h \eqsp, 
\end{equation}
where $\wgtsum{s} \eqdef \sum_{i = 1}^\N \wgt{t}{i}$ denotes the weight sum. To form a weighted particle sample $\{(\wgt{t + 1}{i}, \epart{t + 1}{i})\}_{i = 1}^\N$ targeting the subsequent filter $\post{t + 1}$ we simply the plug the approximation $\post[part]{t}$Êinto the filter recursion \eqref{eq:filter:recursion}, yielding the approximation of $\post{t + 1}$ by a mixture distribution proportional to $\sum_{i = 1}^\N \wgt{t}{i} \lk{t}(\epart{t}{i}, \cdot)$, and aim at updating the particle cloud by sampling from this mixture. However, since $\lk{t}$ is generally intractable, we augment the space by the index $i$Ê and apply importance sampling from the extended distribution proportional to $\wgt{t}{i} \lk{t}(\epart{t}{i}, \cdot)$ using the distribution proportional to $\wgt{t}{i} \hk(\epart{t}{i}, \cdot)$ as instrumental distribution. This yields a sampling schedule comprising two operations: selection and mutation. In the selection step, a set $\{ I_{t + 1}^i \}_{i = 1}^\N$ of indices are drawn multinomially according to probabilities proportional to $\{ \wgt{t}{i} \}_{i = 1}^\N$. After this, the mutation step propagates the particles forward according to the dynamics of the state process and assigns the mutated particles importance weights given by the emission density, i.e., for all $i \in \intvect{1}{\N}$, Ê
\[
    \begin{split}
        \epart{t + 1}{i} &\sim \hk(\epart{t}{I_{t + 1}^i}, \cdot) \eqsp, \\
        \wgt{t + 1}{i} &= \md{t + 1}(\epart{t + 1}{i}) \eqsp.
    \end{split}
\]
The algorithm, which is the standard bootstrap particle filter presented in~\cite{gordon:salmond:smith:1993}, is initialized by drawing $\{Ê\epart{0}{i} \}_{i = 1}^\NÊ\sim \Xinit^{\varotimes \N}$ and letting $\wgt{0}{i} = \md{0}(\epart{0}{i})$ Êfor all $i \in \intvect{1}{\N}$. In this basic scheme, which is summarized in \autoref{alg:PF}, the information provided by the most current observation $y_{t + 1}$ Êenters the algorithm via the importance weights only. However, instead of moving the particles ``blindly'' according to the latent dynamics $\hk$, it is, in order to direct the particle swarm toward regions of the state space with large posterior probability, possible to increase the influence of the last observation on the mutation moves as well as the selection mechanism step via the framework of \emph{auxiliary particle filters} \cite{pitt:shephard:1999}. Even though all the results of the present paper can be extended straightforwardly to auxiliary particle filters, we have chosen to limit the presentation to bootstrap-type particle filters only for clarity. 

\begin{algorithm}[htb]
    \caption{Bootstrap particle filter} \label{alg:PF}
    \begin{algorithmic}[1]
        \Require A weighted particle sample $\{ (\epart{t}{i}, \wgt{t}{i}) \}_{i = 1}^\N$ targeting $\post{t}$.
        \For{$i = 1 \to \N$}
        \State $I_{t + 1}^i \sim \probdist( \{Ê\wgt{t}{\ell} \}_{\ell = 1}^\N )$;
        \State draw $\epart{t + 1}{i} \sim \hk(\epart{t}{I_{t + 1}^i}, \cdot)$;
        \State set $\wgt{t + 1}{i} \gets \md{t + 1}(\epart{t + 1}{i})$;Ê
        \EndFor
        \State \Return $\{ (\epart{t}{i}, \wgt{t}{i}) \}_{i = 1}^\N$.
    \end{algorithmic}
\end{algorithm}

In the following we will express \autoref{alg:PF} in a compact form by writing 
$$
    \mbox{``}\{ (\epart{t + 1}{i}, \wgt{t + 1}{i}) \}_{i = 1}^\N \gets \PF\left( \{ (\epart{t}{i}, \wgt{t}{i}) \}_{i = 1}^\N \right)\mbox{''} \eqsp. 
$$
\subsubsection{Forward-filtering backward-smoothing (FFBSm)}

As discussed in the introduction, the bootstrap filter may also be used for smoothing, as the weighted occupation measures associated with the genealogical trees of the particle samples generated by the algorithm form consistent estimates of the joint smoothing distributions. A way of detouring the particle path degeneracy of this Poor man's smoother goes via the backward decomposition~\eqref{eqn:back:decomp}, granted that we are able to approximate  each kernel $\bk{\post{s}}$, $s \in \nset$. However, considering instead the reverse kernel associated with the particle filter $\post[part]{s}$, yields, via \eqref{eqn:bk}, the particle approximations  
\begin{equation} \label{eq:part:approx:bk}
    \bk{\post[part]{s}}h(x) = \sum_{i = 1}^\N \frac{\wgt{s}{i} \hd(\epart{s}{i}, x)}{\sum_{\ell = 1}^\N \wgt{s}{\ell} \hd(\epart{s}{\ell}, x)} h(\epart{s}{i}) \quad (x \in \set{X}, h \in \bmf{\alg{X}}) \eqsp. 
\end{equation}
FFBSm consists in simply inserting these approximations into~\eqref{eqn:back:decomp}, i.e., approximating, for $h \in \bmf{\alg{X}^{t + 1}}$,Ê $\post{0:t \mid t} h$ by Ê
\begin{equation} \label{eqn:ffbsm}
    \post[part]{0:t \mid t} h \eqdef \sum_{i_0 = 1}^\N \cdots \sum_{i_t = 1}^\N \left( \prod_{s = 0}^{t - 1} \frac{\wgt{s}{i_s} \hd(\epart{s}{i_s}, \epart{s + 1}{i_{s + 1}})}{\sum_{\ell = 1}^\N \wgt{s}{\ell} \hd(\epart{s}{\ell}, \epart{s + 1}{i_{s + 1}})} \right) \frac{\wgt{t}{i_t}}{\wgtsum{t}} h(\epart{0}{i_0}, \ldots, \epart{t}{i_t}) \eqsp. 
\end{equation}
For general objective functions $h$, this occupation measure is impractical as the cardinality of its support grows geometrically fast with time. In the case where the objective function $h$Ê is of additive form \eqref{eq:additive:functional} the computational complexity is still quadratic, since computation of the normalizing constants $\sum_{\ell = 1}^\N \wgt{s}{\ell} \hd(\epart{s}{\ell}, \epart{s + 1}{i})$ is required for all $i \in \intvect{1}{N}$ and $s \in \intvect{0}{t - 1}$. Consequently, FFBSm a computationally intensive approach. 

\subsubsection{Forward-only implementation of FFBSm}

Appealingly, as noted by \cite{delmoral:doucet:singh:2010}, in the case of additive state functionals the sequence $\{Ê\post[part]{t} \af{t} \}_{t \in \nset}$ Êcan be computed \emph{on-the-fly} as $t$ increases on the basis of the recursion \eqref{eqn:tstat:update}. More specifically, plugging the estimates \eqref{eq:part:approx:bk} into the recursion in question yields particle approximations $\{Ê\tstattil{i}{t} \}_{i = 1}^\N$ of the statistics $\{Ê\tstat{t} \af{t}(\epart{t}{i}) \}_{i = 1}^\N$ evaluated at the particle locations. After initializing $\tstattil{i}{0} = 0$ for all $i \in \intvect{1}{\N}$, these approximations may, when new observations Êbecome available, be updated by first evolving the particle filter sample one step and then setting   
\begin{equation} \label{eq:FFBSm:forward:update}
    \tstattil{i}{t + 1} = \sum_{j = 1}^\N \frac{\wgt{t}{j} \hd(\epart{t}{j}, \epart{t + 1}{i})}{\sum_{\ell = 1} \wgt{t}{\ell} \hd(\epart{t}{\ell}, \epart{t + 1}{i})} \{ \tstattil{j}{t} + \term{t}(\epart{t}{j}, \epart{t + 1}{i}) \} \quad (t \in \nset) \eqsp,
\end{equation}
yielding, via \eqref{eqn:back:decomp}, the estimate 
$$
    \post[part]{0:t \mid t} \af{t} = \sum_{i = 1}^\N \frac{\wgt{t}{i}}{\wgtsum{t}} \tstattil{i}{t} \eqsp,
$$
of $\post{0:t \mid t} \af{t}$. Besides allowing for online processing of the data, the algorithm has also the appealing property that only the current statistics $\{Ê\tstattil{i}{t} \}_{i = 1}^\N$ and particle sample $\{ (\epart{t}{i}, \wgt{t}{i}) \}_{i = 1}^\N$ need to be stored in the memory. Still, the complexity of the scheme is $\ordo(\N^2)$ due to the computation of the normalizing constants of the backward kernel induced by the particle filter. 


\subsubsection{Forward-filtering backward-simulation (FFBSi)}
\label{sec:FFBSi}

In order to remedy the high computational complexity of FFBSm, FFBSi generates trajectories on the \emph{index space} $\intvect{1}{\N}^{t + 1}$ by simulating repeatedly a time-reversed, inhomogeneous Markov chain $\{Ê\tilde{J}_s \}_{s = 0}^t$Ê with transition probabilities 
\begin{equation} \label{eq:def:trans:prob}
    \kernel{\Lambda}_s^\N (i, j) \eqdef \frac{\wgt{s}{j} \hd(\epart{s}{j}, \epart{s + 1}{i})}{\sum_{\ell = 1}^\N \wgt{t}{\ell} \hd(\epart{s}{\ell}, \epart{s + 1}{i})} \quad ((s, i, j) \in \intvect{0}{t - 1} \times \intvect{1}{\N}^2) 
\end{equation}
and initial distribution (i.e., distribution at time $t$) $\probdist( \{Ê\wgt{t}{j} \}_{j = 1}^\N )$. Given $\{Ê\tilde{J}_s \}_{s = 0}^t$, an approximate draw from the joint smoothing distribution is formed by the random vector $(\epart{0}{\tilde{J}_0}, \ldots, \epart{t}{\tilde{J}_t})$. Consequently, the uniformly weighted occupation measure associated with a set of conditionally independent such draws provides a finite-dimensional approximation of the smoothing distribution $\post{0:t \mid t}$; see~\cite{godsill:doucet:west:2004}. In this basic formulation of FFBSi, the backward sampling pass requires the normalizing constants of the particle-based backward kernels to be computed, and hence the algorithm suffers from a quadratic complexity. On the other hand, on the contrary to FFBSm, this complexity is the same for \emph{all} types of objective functions (whereas FFBSm has quadratic complexity only when applied to additive state functionals). However, following \cite{douc:garivier:moulines:olsson:2010} it is, under the assumption that there exists $\hkup \in \rsetpos$ such that $\hd(x, x') \leq \hkup$ for all $(x, x') \in \set{X}^2$ (an assumption that is satisfied for most models of interest), possible to reduce the computational complexity of FFBSi by simulating the approximate backward kernel using the following accept-reject technique. In order to sample from $\probdist( \{ \kernel{\Lambda}_s^\N(i, j) \}_{j = 1}^\N )$ Êfor given $s \in \intvect{0}{t - 1}$ and $i \in \intvect{1}{\N}$, a candidate $J^*$ drawn from the proposal distribution $\probdist( \{ \wgt{s}{j} \}_{j = 1}^{\N} )$ is accepted with probability $\hd(\epart{s}{J^*}, \epart{s + 1}{i}) / \hkup$. The procedure repeated until acceptance; see~\autoref{alg:accept:reject} for an efficient way of implementing this approach. Under the additional assumption that the transition density is bounded also from below (see \autoref{ass:strong:mixing} below) it can be shown (see~\cite[Proposition 2]{douc:garivier:moulines:olsson:2010}) that the computational complexity of this accept-reject-based FFBSi algorithm is indeed \emph{linear} (i.e., $\mathcal{O}(N)$).


\section{Main results}
\label{sec:main:results}

Requiring separate forward and backward processing of the data, the standard design of FFBSi is not useful in online applications. We hence propose a novel algorithm which can be viewed as a hybrid between the forward-only implementation of the FFBSm algorithm and the FFBSi algorithm. In order to gain computational effort, it the replaces, in the spirit of FFBSi, exact computation of \eqref{eq:FFBSm:forward:update} by a Monte Carlo estimate. The algorithm, which is presented in the next section, is furnished with rigorous theoretical results concerning its convergence and numerical stability in \autoref{sec:theoretical:results}.    

\subsection{The particle-based, rapid incremental smoother (PaRIS)} \label{sec:PaRIS}

Given estimates $\{ \tstat[i]{t} \}_{i = 1}^\N$ of the auxiliary statistics $\{ \tstat{t} \af{t}(\epart{t}{i}) \}_{i = 1}^\N$ and a particle sample $\{(\epart{t}{i}, \wgt{t}{i})\}_{i = 1}^\N$ targeting the filter $\post{t}$, the algorithm updates the estimated auxiliary statistics by, first, propagating the particle cloud one step, yielding $\{ (\epart{t + 1}{i}, \wgt{t + 1}{i}) \}_{i = 1}^\N$, second, drawing, for each $i \in \intvect{1}{\N}$, conditionally independent and identically distributed indices $\{ \bi{t + 1}{i}{\k} \}_{\k = 1}^{\K}$, where $\K \in \nsetpos$ is some given sample size referred to as the \emph{precision parameter},Ê according to 
$$
    \{ \bi{t + 1}{i}{\k} \}_{\k = 1}^{\K} \sim \probdist( \{Ê\kernel{\Lambda}_t^\N(i, \ell) \}_{\ell = 1}^{\K} )^{\varotimes \K} \quad (i \in \intvect{1}{\N}) \eqsp,  
$$
where the transition probabilities $\kernel{\Lambda}_t^\N$ are defined in \eqref{eq:def:trans:prob}, and, third, letting  
$$
    \tstat[i]{t + 1} = \K^{-1} \sum_{\k = 1}^{\K} \left( \tstat[\bi{t + 1}{i}{\k}]{t} + \term{t}(\epart{t}{\bi{t + 1}{i}{\k}}, \epart{t + 1}{i}) \right) \quad (i \in \intvect{1}{\N}) \eqsp.
$$
Using the updated statistics $\{ \tstat[i]{t + 1} \}_{i = 1}^{\K}$, an estimate of $\post{0:t + 1 \mid t + 1} \af{t + 1} = \post{t + 1} \tstat{t + 1} \af{t + 1}$ is obtained as $\sum_{i = 1}^\N \wgt{t + 1}{i} \tstat[i]{t + 1} / \wgtsum{t + 1}$. As for FFBSm, the algorithm is initialized by setting $\tstat[i]{0} = 0$ for $i \in \intvect{1}{\N}$. The resulting smoother, which is summarized in \autoref{alg:PaRIS}, allows for online processing with constant memory requirements, as it requires only the current particle cloud and estimated auxiliary statistics to be stored at each iteration. In addition, applying, in Step~(4), the accept-reject technique described in the previous section yields, for a given $\K$,  an algorithm with \emph{linear complexity}. 

\begin{algorithm}[htb]
    \caption{Particle-based, rapid incremental smoother (PaRIS)}
    \label{alg:PaRIS}
    \begin{algorithmic}[1]
        \Require Particles sample $\{ (\epart{t}{i}, \wgt{t}{i}) \}_{i = 1}^\N$ targeting $\post{t}$ and estimated auxiliary statistics $\{ \tstat[i]{t} \}_{i = 1}^\N$.
        \State run $\{ (\epart{t + 1}{i}, \wgt{t + 1}{i}) \}_{i = 1}^\N \gets \PF( \{ (\epart{t}{i}, \wgt{t}{i}) \}_{i = 1}^\N )$;
        \For{$i = 1 \to \N$}
            \For{$\k = 1 \to \K$}
                \State draw $\bi{t + 1}{i}{j} \sim \probdist( \{ \wgt{t}{j} q(\epart{t}{\ell}, \epart{t + 1}{i} ) \}_{\ell = 1}^\N )$;
            \EndFor
            \State Set $\tstat[i]{t + 1} \gets \K^{-1} \sum_{\k = 1}^{\K} \left( \tstat[\bi{t + 1}{i}{\k}]{t} + \term{t}(\epart{t}{\bi{t + 1}{i}{\k}}, \epart{t + 1}{i}) \right)$;
        \EndFor 
        \State \Return $\{ \tstat[i]{t + 1} \}_{i = 1}^\N$ and $\{ (\epart{t + 1}{i}, \wgt{t + 1}{i}) \}_{i = 1}^\N$.
    \end{algorithmic}
\end{algorithm}

In the PaRIS scheme, the precision parameter $\K$ has to be set by the user. As shown in \autoref{sec:theoretical:results}, the algorithm is asymptotically consistent (as the particle sample size $\N$ tends to infinity)  \emph{for any fixed} $\K \in \nsetpos$ (i.e., the precision parameter does not need to be increased with $\N$ Êin order to guarantee consistency). Increasing the precision parameter increases the accuracy of the algorithm at the cost of additional computational complexity. Importantly, there is a \emph{significant qualitative difference between the cases $\K = 1$ and $\K \geq 2$}, and it turns out that the latter is required to keep PaRIS numerically stable. This will be clear from the theoretical bounds on the asymptotic variance obtained in \autoref{sec:theoretical:results} as well as from the numerical experiments in \autoref{sec:simulations}. 

In order to understand the fundamental difference between the cases $\K = 1$ and $\K \geq 2$ we may use the backward indices to connect the particles of different generations. Hence, let, for all $t \in \nset$ and $i \in \intvect{1}{\N}$, $\ans{t, t}{i}{\varnothing} \eqdef i$ and, for all $s \in \intvect{0}{t - 1}$ and $j_{s:t - 1} \in \intvect{1}{\K}^{t - s}$,
$$
    \ans{s, t}{i}{j_{s:t - 1}} \eqdef \bi{s + 1}{\ans{s + 1, t}{i}{j_{s + 1:t - 1}}}{j_s} 
$$
and let us write ``$\epart{s}{\ell} \dashleftarrow \epart{t}{i}$'' if there exists a sequence $j_{s:t - 1}$ of indices such that $\ell = \ans{s, t}{i}{j_{s:t - 1}}$. Note that the support of the PaRIS estimator at time $t$ is given by $\set{S}_t \eqdef \prod_{s = 0}^t \set{A}_{s, t} \subset \set{X}^{t + 1}$, where $\set{A}_{s, t} \eqdef \cup_{i = 1}^\N \{Ê\epart{s}{\ell} : \epart{s}{\ell} \dashleftarrow \epart{t}{i} \} \subset \{Ê\epart{s}{\ell} \}_{\ell = 1}^\N$ (so that, since $\epart{t}{i} \dashleftarrow \epart{t}{i}$ for all $i \in \intvect{1}{\N}$, $\set{A}_{t, t} = \{Ê\epart{t}{\ell} \}_{\ell = 1}^\N$). When $\K = 1$, the sequence $\{Ê\# \set{A}_{s, t} \}_{t = s}^\infty$ is non-decreasing, and $\# \set{A}_{s, t} = 1 \Rightarrow \# \set{A}_{u, t} = 1$ for all $u \in \intvect{0}{s}$. This implies a degeneracy phenomenon that resembles closely that of the Poor man's smoother. On the contrary, in the case $\K \geq 2$ it may well occur that $\# \set{A}_{s, t}Ê>  \# \set{A}_{s, t + 1}$, also when $\# \set{A}_{s, t + 1} = 1$. The previous is, for $\N = 3$ and $t = 4$, illustrated graphically in~\autoref{fig:IllDeg}, where columns of nodes represent particle clouds at different time steps (with time increasing rightward) and arrows indicate connections through the relation $\dashleftarrow$. Black-colored particles are included in the support $\set{S}_4$ of the final estimator, while gray-colored ones are inactive. As clear from \autoref{fig:IllDeg}(a), setting $\K = 1$ depletes quickly the support of the estimator, leading to a numerically unstable algorithm. \autoref{fig:IllDeg}(b) shows the same configuration as in (a), but with one additional backward sample (i.e., $\K = 2$). In this case, the sequence $\{Ê\# \set{A}_{s, 4} \}_{s = 0}^4$ is no longer non-decreasing, and a high degree of depletion at some time points (such as $s = 2$) has merely local effect of the support of the estimator. In the coming sections, the fact that PaRIS stays numerically stable for any fixed $\K \geq 2$ is established theoretically as well as through simulations. Ê

\begin{figure}[htb]
\begin{minipage}[t]{0.45\linewidth}
\centering
\centerline{\begin {tikzpicture}[-latex ,auto ,node distance = 0.5 cm,
semithick ,
state/.style ={ circle ,
fill = black,black , text=black , minimum width =0.5 cm, radius = 1}]
\node[state, fill = lightgray] (A1){};
\node[state] (A2) [above =of A1]{};
\node[state, fill = lightgray] (A3) [above =of A2]{};
\node[state, fill = lightgray] (B1) [right =of A1]{};
\node[state] (B2) [right =of A2]{};
\node[state, fill = lightgray] (B3) [right =of A3]{};
\node[state] (C1) [right =of B1]{};
\node[state, fill = lightgray] (C2) [right =of B2]{};
\node[state, fill = lightgray] (C3) [right =of B3]{};
\node[state, fill = lightgray] (D1) [right =of C1]{};
\node[state] (D2) [right =of C2]{};
\node[state] (D3) [right =of C3]{};
\node[state] (E1) [right =of D1]{};
\node[state] (E2) [right =of D2]{};
\node[state] (E3) [right =of D3]{};
\path[draw,dashed,color = lightgray] (B1) -- (A1);
\path[draw,dashed,color = black] (B2) -- (A2);
\path[draw,dashed,color = lightgray] (B3) -- (A2);
\path[draw,dashed,color = black] (C1) -- (B2);
\path[draw,dashed,color = lightgray] (C2) -- (B3);
\path[draw,dashed,color = lightgray] (C3) -- (B2);
\path[draw,dashed,color = lightgray] (D1) -- (C2);
\path[draw,dashed,color = black] (D2) -- (C1);
\path[draw,dashed,color = black] (D3) -- (C1);
\path[draw,dashed,color = black] (E1) -- (D2);
\path[draw,dashed,color = black] (E2) -- (D2);
\path[draw,dashed,color = black] (E3) -- (D3);
\end{tikzpicture}}
\centerline{(a) $\K = 1$}
\end{minipage}
\hfill
\begin{minipage}[t]{0.45\linewidth}
\centering
\centerline{\begin {tikzpicture}[-latex ,auto ,node distance = 0.5 cm,
semithick ,
state/.style ={ circle ,
fill = black,black , text=black , minimum width =0.5 cm, radius = 1}]
\node[state] (A1){};
\node[state] (A2) [above =of A1]{};
\node[state] (A3) [above =of A2]{};
\node[state] (B1) [right =of A1]{};
\node[state] (B2) [right =of A2]{};
\node[state] (B3) [right =of A3]{};
\node[state] (C1) [right =of B1]{};
\node[state] (C2) [right =of B2]{};
\node[state, fill = lightgray] (C3) [right =of B3]{};
\node[state, fill = lightgray] (D1) [right =of C1]{};
\node[state] (D2) [right =of C2]{};
\node[state] (D3) [right =of C3]{};
\node[state] (E1) [right =of D1]{};
\node[state] (E2) [right =of D2]{};
\node[state] (E3) [right =of D3]{};
\path[draw,dashed,color = black] (B1) -- (A1);
\path[draw,dashed,color = black] (B2) -- (A2);
\path[draw,dashed,color = black] (B3) -- (A2);
\path[draw,dashed,color = black] (C1) -- (B2);
\path[draw,dashed,color = black] (C2) edge [bend right] (B3);
\path[draw,dashed,color = lightgray] (C3) -- (B2);
\path[draw,dashed,color = lightgray] (D1) -- (C2);
\path[draw,dashed,color = black] (D2) -- (C1);
\path[draw,dashed,color = black] (D3) -- (C1);
\path[draw,dashed,color = black] (E1) -- (D2);
\path[draw,dashed,color = black] (E2) -- (D2);
\path[draw,dashed,color = black] (E3) -- (D3);
\path[draw,dashed,color = black] (B1) -- (A3);
\path[draw,dashed,color = black] (B2) -- (A1);
\path[draw,dashed,color = black] (B3) -- (A3);
\path[draw,dashed,color = black] (C1) -- (B1);
\path[draw,dashed,color = black] (C2) edge [bend left] (B3);
\path[draw,dashed,color = lightgray] (C3) -- (B1);
\path[draw,dashed,color = lightgray] (D1) -- (C1);
\path[draw,dashed,color = black] (D2) -- (C2);
\path[draw,dashed,color = black] (D3) -- (C2);
\path[draw,dashed,color = black] (E1) -- (D3);
\path[draw,dashed,color = black] (E2) -- (D3);
\path[draw,dashed,color = black] (E3) -- (D2);
\end{tikzpicture}}
\centerline{(b) $\K = 2$}
\end{minipage}
\caption{Genealogical traces corresponding to backward simulation in the PaRIS algorithm. Columns of nodes refer to different particle populations (with $\N = 3$) at different time points (with time increasing rightward) and arrows indicate connections through the relation $\dashleftarrow$. Black-colored particles are included in the support $\set{S}_4$ of the final estimator, while gray-colored ones are inactive.}
\label{fig:IllDeg}
\end{figure}
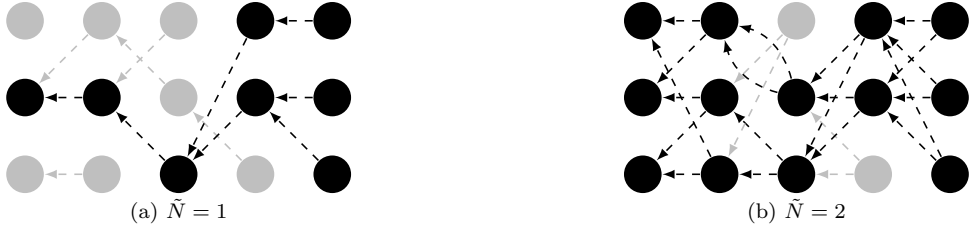

\subsection{Theoretical results}
\label{sec:theoretical:results}

The coming convergence analysis is driven by the following assumption. 

\begin{assumption} \label{ass:boundedness:g:q} \ 
    \begin{enumerate}[(i)]
        \item For all $t \in \nset$, $\md{t} \in \bmf{\alg{X}}$ and $\md{t}(x) > 0$, $x \in \set{X}$,
        \item $\hd \in \bmf{\alg{X}^2}$.
    \end{enumerate}
\end{assumption}
\autoref{ass:boundedness:g:q}(i) implies finiteness and positiveness of the particle weights; the boundedness of the transition density $\hd$ implied by \autoref{ass:boundedness:g:q}(ii) allows, besides certain technical arguments (formalized in \autoref{lemma:generalized:lebesgue}) based on the generalized Lebesgue theorem, the accept-reject sampling technique discussed in \autoref{sec:FFBSi} to be used. 

In turns out to be necessary to establish the convergence of PaRIS for a slightly more general \emph{affine modification} of the additive state functional \eqref{eq:additive:functional}Ê under consideration. More specifically, we will verify that for all $t \in \nset$ and $(\testf[t], \testfp[t]) \in \bmf{\alg{X}}^2$,
\begin{equation} \label{eq:generic:consistency:affine}
    \sum_{i = 1}^\N \frac{\wgt{t}{i}}{\wgtsum{t}}\{ \tstat[i]{t} \testf[t](\epart{t}{i}) + \testfp[t](\epart{t}{i}) \}  \stackrel{\tiny \N \rightarrow \infty}{\backsimeq}  \post{t}(\tstat{t} \af{t} \testf[t] + \testfp[t]) \eqsp,
\end{equation}
where $\{Ê\tstat[i]{t} \}_{i = 1}^\N$ and $\{ (\epart{t}{i}, \wgt{t}{i})Ê\}_{i = 1}^\N$Êare the output of \autoref{alg:PaRIS}, 
in the senses of exponential concentration, weak convergence, and $\Lp{p}$ error. The analogous results for the original additive state functional are then obtained as corollaries by simply applying \eqref{eq:generic:consistency:affine}Ê with $\testf[t] \equiv \1{\set{X}}$ and $\testfp[t] \equiv \1{\set{X}^c}$. Our proofs, which are presented in \autoref{sec:proofs}, are based on single-step analyses of the scheme and rely on techniques developed in \cite{douc:garivier:moulines:olsson:2010} and \cite{douc:moulines:2008}. Nevertheless, the analysis of PaRIS is, especially in the case of weak convergence, highly non-trivial due to the complex dependence between the ancestral lineages of the particles induced by the backward sampling approach (on the contrary to standard FFBSi, where the backward trajectories are conditionally independent; see the previous section).  

\subsubsection{Hoeffding-type inequalities} 
\label{sec:hoeffding}

Besides being a result of independent interest, the following exponential concentration inequality for finite sample sizes $\N$ Êplays an instrumental role in the proof of the CLT in the next section. For reasons that will be clear in the proof of \autoref{thm:CLT:affine}, the bound is established for the unnormalized (i) as well as normalized (ii) estimator. 

\begin{theorem} \label{thm:hoeffding:affine}
    Let \autoref{ass:boundedness:g:q} hold. Then for all $t \in \nset$, $(\testf[t], \testfp[t]) \in \bmf{\alg{X}}^2$, and $\K \in \nsetpos$ there exist constants $(c_t, \tilde{c}_t) \in (\rsetpos)^2$ (depending on $\af{t}$, $\K$, $\testf[t]$, and $\testfp[t]$) such that for all $\N \in \nsetpos$ and all $\varepsilon \in \rsetpos$, 
    \begin{itemize}
        \item[(i)] 
        $ 
            \displaystyle \prob \left( \left| \frac{1}{\N} \sum_{i = 1}^\N \wgt{t}{i} \{\tstat[i]{t} \testf[t](\epart{t}{i}) + \testfp[t](\epart{t}{i}) \} - \post{t - 1} \lk{t - 1} (\tstat{t} \af{t} \testf[t] + \testfp[t]) \right| \geq \varepsilon \right) \leq c_t \exp (- \tilde{c}_t \N \varepsilon^2) \eqsp, 
        $
        \item[(ii)]
        $
        \displaystyle \prob \left( \left| \sum_{i = 1}^\N \frac{\wgt{t}{i}}{\wgtsum{t}} \{ \tstat[i]{t} \testf[t](\epart{t}{i}) + \testfp[t](\epart{t}{i}) - \post{t}(\tstat{t} \af{t} \testf[t] + \testfp[t]) \}\right| \geq \varepsilon \right) \leq c_t \exp (- \tilde{c}_t \N \varepsilon^2)
        $
    \end{itemize} 
    (with the convention $\post{-1} \equiv \Xinit$).
\end{theorem}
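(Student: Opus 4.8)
The plan is to establish (i) and (ii) together by induction on $t$, proving the unnormalized bound (i) at each step and then deducing the normalized bound (ii) from it by a ratio argument. Two nested $\sigma$-fields organize the randomness: $\partfilt{t}$, generated by the particles and the statistics $\{\tstat[i]{s}\}$ up to time $t$, and $\partfiltbar{t+1}\supseteq\partfilt{t}$, generated additionally by the filter sample $\{(\epart{t+1}{i},\wgt{t+1}{i})\}_{i=1}^\N$ from Step~(1) of \autoref{alg:PaRIS} but \emph{not} by the backward indices $\{\bi{t+1}{i}{\k}\}$. A preliminary remark is that the PaRIS recursion forces the deterministic bound $\max_i|\tstat[i]{t}|\leq\sum_{s=0}^{t-1}\supn{\term{s}}$, so every summand below is bounded and Hoeffding's inequality applies. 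The base case $t=0$ is immediate, since $\tstat[i]{0}=0$ reduces the estimator in (i) to the i.i.d.\ average $\N^{-1}\sum_i\md{0}(\epart{0}{i})\testfp[0](\epart{0}{i})$ with mean $\Xinit(\md{0}\testfp[0])=\post{-1}\lk{-1}\testfp[0]$.

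For the inductive step I would prove (i) at time $t+1$ from (ii) at time $t$. Write $S^\N$ for the left-hand side of (i) at $t+1$, put $U_i^{(\k)}\eqdef(\tstat[\bi{t+1}{i}{\k}]{t}+\term{t}(\epart{t}{\bi{t+1}{i}{\k}},\epart{t+1}{i}))\testf[t+1](\epart{t+1}{i})+\testfp[t+1](\epart{t+1}{i})$ so that $S^\N=\N^{-1}\sum_i\wgt{t+1}{i}\K^{-1}\sum_{\k}U_i^{(\k)}$, let $\bar U_i\eqdef\cexp{U_i^{(\k)}}{\partfiltbar{t+1}}$ (independent of $\k$; the affine analogue of the forward-only FFBSm update \eqref{eq:FFBSm:forward:update}), $\bar S^\N\eqdef\N^{-1}\sum_i\wgt{t+1}{i}\bar U_i$, $M^\N\eqdef\cexp{\bar S^\N}{\partfilt{t}}$, and $\Psi\eqdef\post{t}\lk{t}(\tstat{t+1}\af{t+1}\testf[t+1]+\testfp[t+1])$. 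Then I would telescope
\[
    S^\N-\Psi=\underbrace{(S^\N-\bar S^\N)}_{(A)}+\underbrace{(\bar S^\N-M^\N)}_{(B)}+\underbrace{(M^\N-\Psi)}_{(C)}.
\]
Conditionally on $\partfiltbar{t+1}$ the draws $\{\bi{t+1}{i}{\k}\}$ are independent over $(i,\k)$, so $(A)=(\N\K)^{-1}\sum_{i,\k}\wgt{t+1}{i}(U_i^{(\k)}-\bar U_i)$ is a weighted sum of conditionally centered, independent, bounded terms; the conditional Hoeffding inequality gives $\prob(|(A)|\geq\varepsilon\mid\partfiltbar{t+1})\leq2\exp(-c\N\varepsilon^2)$ uniformly. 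Conditionally on $\partfilt{t}$ the particles $\{\epart{t+1}{i}\}$ are i.i.d.\ draws from the mixture $\wgtsum{t}^{-1}\sum_\ell\wgt{t}{\ell}\hk(\epart{t}{\ell},\cdot)$, so $(B)=\bar S^\N-\cexp{\bar S^\N}{\partfilt{t}}$ is a centered i.i.d.\ average of bounded terms and a second conditional Hoeffding bound controls it. Both estimates integrate to unconditional exponential bounds.

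The crux is $(C)$. Computing $M^\N=\cexp{\bar S^\N}{\partfilt{t}}$, the expectation over one mixture draw $\epart{t+1}{i}$ carries the density $\wgtsum{t}^{-1}\sum_\ell\wgt{t}{\ell}\hd(\epart{t}{\ell},\cdot)$, whose factor $\sum_\ell\wgt{t}{\ell}\hd(\epart{t}{\ell},\cdot)$ cancels against the normalizer of the backward weights $\kernel{\Lambda}_t^\N(i,\cdot)$, leaving
\[
    M^\N=\sum_{j=1}^\N\frac{\wgt{t}{j}}{\wgtsum{t}}\bigl(\tstat[j]{t}\,\lk{t}\testf[t+1](\epart{t}{j})+\bigl[\lk{t}\testfp[t+1]+\lk{t}(\term{t}\testf[t+1])\bigr](\epart{t}{j})\bigr),
\]
where $\lk{t}(\term{t}\testf[t+1])(x)\eqdef\int\hd(x,x')\md{t+1}(x')\term{t}(x,x')\testf[t+1](x')\,\refM(\rmd x')$. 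This is precisely the normalized estimator of (ii) at time $t$ evaluated at the bounded (by \autoref{ass:boundedness:g:q}) test functions $\testf[t]^\star\eqdef\lk{t}\testf[t+1]$ and $\testfp[t]^\star\eqdef\lk{t}\testfp[t+1]+\lk{t}(\term{t}\testf[t+1])$, and a direct computation using the backward recursion \eqref{eqn:tstat:update} shows that its limit $\post{t}(\tstat{t}\af{t}\testf[t]^\star+\testfp[t]^\star)$ equals $\Psi$. Hence the induction hypothesis (ii) at time $t$, applied with $(\testf[t]^\star,\testfp[t]^\star)$, bounds $(C)$. A union bound over $(A),(B),(C)$ yields (i) at time $t+1$.

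Finally, (ii) follows from (i) at the same time by the standard ratio device: the normalized estimator is the quotient of the quantity in (i) and $\wgtsum{t}/\N=\N^{-1}\sum_i\wgt{t}{i}$, the latter controlled by (i) with $(\testf[t],\testfp[t])=(0,\1{\set{X}})$ and converging to the strictly positive $\post{t-1}\lk{t-1}\1{\set{X}}$; a Hoeffding bound for such ratios (in the spirit of \cite{douc:garivier:moulines:olsson:2010}) transfers the concentration. The principal obstacle is term $(C)$: closing the induction requires recognizing that the FFBSm conditional mean $M^\N$ is itself a normalized PaRIS estimator at the previous time, which rests entirely on the cancellation of the backward normalizer against the mutation mixture. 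The otherwise delicate dependence between the sampled lineages—singled out by the authors as the hard part of the CLT—is here neutralized by the two-level conditioning and enters only through the already-estimated statistics $\{\tstat[j]{t}\}$, treated as a black box via the induction hypothesis.
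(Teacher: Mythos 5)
Your proof is correct and follows essentially the same route as the paper's: induction on $t$, a conditional Hoeffding bound for the fluctuation of the time-$(t+1)$ sample around its conditional mean, identification of that conditional mean as the normalized time-$t$ estimator evaluated at the transformed test functions $\lk{t}\testf[t+1]$ and $\lk{t}(\term{t}\testf[t+1]+\testfp[t+1])$ (the cancellation you highlight is exactly the paper's \autoref{lemma:property-2}, and the limit identification is \autoref{lemma:property-1}), followed by the ratio device for the self-normalized bound. The only cosmetic difference is that you split the fluctuation term into backward-sampling and mutation contributions via an intermediate $\sigma$-field, whereas the paper absorbs both into a single conditional Hoeffding application by noting the $i$-indexed summands are already conditionally i.i.d.\ given the time-$t$ history.
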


The following is an immediate consequence of \autoref{thm:hoeffding:affine}. 

\begin{corollary} \label{thm:hoeffding}
        Let \autoref{ass:boundedness:g:q} hold. Then for all $t \in \nset$ and $\K \in \nsetpos$ there exist constants $(c_t, \tilde{c}_t) \in (\rsetpos)^2$ (depending on $\af{t}$ and $\K$) such that for all $\N \in \nsetpos$ and all $\varepsilon \in \rsetpos$, 
        $$
        \displaystyle \prob \left( \left| \sum_{i = 1}^\N \frac{\wgt{t}{i}}{\wgtsum{t}} (\tstat[i]{t} - \post{t} \tstat{t} \af{t}) \right| \geq \varepsilon \right) \leq c_t \exp (- \tilde{c}_t \N \varepsilon^2) \eqsp.
        $$
\end{corollary}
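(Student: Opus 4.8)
The plan is to obtain this corollary as a direct specialization of the normalized bound in \autoref{thm:hoeffding:affine}(ii). The key observation is that the random variable appearing in the corollary is exactly the affine functional of \eqref{eq:generic:consistency:affine} evaluated at the trivial test functions $\testf[t] \equiv \1{\set{X}}$ and $\testfp[t] \equiv 0$, so that no new estimation is required at all.

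First I would note that both $\1{\set{X}}$ and the zero function belong to $\bmf{\alg{X}}$, so that the pair $(\testf[t], \testfp[t]) = (\1{\set{X}}, 0)$ is an admissible choice in \autoref{thm:hoeffding:affine}(ii). With this choice each summand collapses: since $\testf[t](\epart{t}{i}) = 1$ and $\testfp[t](\epart{t}{i}) = 0$ for every $i \in \intvect{1}{\N}$, one has $\tstat[i]{t} \testf[t](\epart{t}{i}) + \testfp[t](\epart{t}{i}) = \tstat[i]{t}$. Moreover, the pointwise product $\tstat{t} \af{t} \, \testf[t]$ equals $\tstat{t} \af{t}$, multiplication by the constant function $\1{\set{X}}$ being the identity, so that the centering constant becomes $\post{t}(\tstat{t} \af{t} \, \testf[t] + \testfp[t]) = \post{t} \tstat{t} \af{t}$.

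Consequently, the weighted sum inside the probability in \autoref{thm:hoeffding:affine}(ii) reduces precisely to $\sum_{i = 1}^\N (\wgt{t}{i}/\wgtsum{t})(\tstat[i]{t} - \post{t} \tstat{t} \af{t})$, which is exactly the quantity controlled in the corollary, and the constants $(c_t, \tilde{c}_t)$ are inherited directly from \autoref{thm:hoeffding:affine}(ii) for this fixed pair of test functions. Since $\testf[t]$ and $\testfp[t]$ have now been frozen to universal values, the residual dependence of the constants is only on $\af{t}$ and $\K$, as asserted.

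Since all of the analytic work has already been carried out in the proof of \autoref{thm:hoeffding:affine}, there is no genuine obstacle; the only point that deserves a moment's care is verifying that the substitution produces exactly the centering $\post{t} \tstat{t} \af{t}$ of the corollary rather than some perturbed quantity, which is immediate from the identity $\tstat{t} \af{t} \, \1{\set{X}} = \tstat{t} \af{t}$.
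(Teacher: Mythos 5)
Your proposal is correct and coincides with the paper's own argument: the corollary is obtained by specializing \autoref{thm:hoeffding:affine}(ii) to $\testf[t] \equiv \1{\set{X}}$ and $\testfp[t] \equiv 0$ (the paper writes the latter as $\1{\set{X}^c}$, which is the same zero function). No further comment is needed.
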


\subsubsection{Central limit theorems and asymptotic $\Lp{p}$ error} 
\label{sec:clt}

\begin{theorem} \label{thm:CLT:affine}
    Let \autoref{ass:boundedness:g:q} hold. Then for all $t \in \nset$, $(\testf[t], \testfp[t]) \in \bmf{\alg{X}}^2$, and $\K \in \nsetpos$, as $\N \rightarrow \infty$,
    $$
        \sqrt{\N} \sum_{i = 1}^{\N} \frac{\wgt{t}{i}}{\wgtsum{t}} \{\tstat[i]{t} \testf[t](\epart{t}{i}) + \testfp[t](\epart{t}{i}) - \post{t}(\tstat{t} \af{t} \testf[t] + \testfp[t]) \} \convd \asvar{t}{\testf[t]}{\testfp[t]} Z \eqsp, 
    $$
    where $Z$ has standard Gaussian distribution and
    \begin{multline} \label{eq:asvar:closed:form}
        \asvar[2]{t}{\testf[t]}{\testfp[t]} \eqdef \asvarFFBSm[2]{t}{\testf[t]}{\testfp[t]} \\
        + \sum_{s = 0}^{t - 1} \sum_{\ell = 0}^s \K^{\ell - (s + 1)} \frac{\post{\ell} \lk{\ell} \{ \bk{\post{\ell}} ( \tstat{\ell} \af{\ell} + \term{\ell} - \tstat{\ell + 1} \af{\ell + 1} )^2 \lk{\ell + 1} \cdots \lk{s} (\md{s + 1} \{\lk{s + 1} \cdots \lk{t - 1} \testf[t] \}^2 ) \}}{(\post{\ell} \lk{\ell} \cdots \lk{s - 1} \1{\set{X}}) (\post{s} \lk{s} \cdots \lk{t - 1} \1{\set{X}})^2} 
    \end{multline} 
    with
    $$
        \asvarFFBSm[2]{t}{\testf[t]}{\testfp[t]} \eqdef \sum_{s = 0}^{t - 1} \frac{\post{s} \lk{s} \{\md{s + 1} \BFcent[2]{s + 1}{t} ( \af{t} \testf[t] + \testfp[t] ) \}}{(\post{s} \lk{s} \cdots \lk{t - 1} \1{\set{X}})^2} 
    $$
    being the asymptotic variance of the FFBSm algorithm (where, by convention, $\lk{m} \lk{n} = \operatorname{id}$ if $m > n$). 
\end{theorem}

\begin{remark}
    Since for all $s \in \intvect{0}{t - 1}$ and $\ell \in \intvect{0}{s}$,Ê $\ell - (s + 1) \leq -1$, it holds, in~\eqref{eq:asvar:closed:form}, that  
    $$
        \lim_{\K \to \infty} \asvar[2]{t}{\testf[t]}{\testfp[t]} = \asvarFFBSm[2]{t}{\testf[t]}{\testfp[t]} \eqsp,
    $$
    i.e., for large $\K$ the asymptotic variance of PaRIS tends to that of the FFBSm algorithm. This is in line with our expectations, as the forward-only version of FFBSm can be viewed as a Rao-Blackwellization of PaRIS. 
\end{remark}

Again, the following is an immediate consequence of \autoref{thm:CLT:affine}.

\begin{corollary} \label{thm:CLT}
    Let \autoref{ass:boundedness:g:q} hold. Then for all $t \in \nset$ and $\K \in \nsetpos$, as $\N \rightarrow \infty$,
    $$
        \sqrt{\N} \sum_{i = 1}^{\N} \frac{\wgt{t}{i}}{\wgtsum{t}}(\tstat[i]{t} - \post{t} \tstat{t} \af{t}) \convd \asvarstd{t} Z \eqsp,
    $$
    where $Z$ has standard Gaussian distribution and 
    \begin{multline} \label{eq:asvar:non-affine}
        \asvarstd[2]{t} \eqdef \asvarFFBSmstd[2]{t} \\
        + \sum_{s = 0}^{t - 1} \sum_{\ell = 0}^s \K^{\ell - (s + 1)} \frac{\post{\ell} \lk{\ell} \{ \bk{\post{\ell}} ( \tstat{\ell} \af{\ell} + \term{\ell} - \tstat{\ell + 1} \af{\ell + 1} )^2 \lk{\ell + 1} \cdots \lk{s} (\md{s + 1} \{\lk{s + 1} \cdots \lk{t - 1} \1{\set{X}} \}^2 \}}{(\post{\ell} \lk{\ell} \cdots \lk{s - 1} \1{\set{X}}) (\post{s} \lk{s} \cdots \lk{t - 1} \1{\set{X}})^2} \eqsp. 
    \end{multline} 
    with
    $$
        \asvarFFBSmstd[2]{t} \eqdef \sum_{s = 0}^{t - 1} \frac{\post{s} \lk{s} ( \md{s + 1} \BFcent[2]{s + 1}{t} \af{t} )}{(\post{s} \lk{s} \cdots \lk{t - 1} \1{\set{X}})^2} 
    $$
    being the asymptotic variance of the FFBSm algorithm. 
\end{corollary}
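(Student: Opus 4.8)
The plan is to obtain \autoref{thm:CLT} as a direct specialization of \autoref{thm:CLT:affine}, exactly as announced in the text surrounding \eqref{eq:generic:consistency:affine}. Concretely, I would invoke \autoref{thm:CLT:affine} with the distinguished pair of test functions $\testf[t] \equiv \1{\set{X}}$ and $\testfp[t] \equiv 0$. With this choice each summand of the normalized estimator collapses, since $\tstat[i]{t}\testf[t](\epart{t}{i}) + \testfp[t](\epart{t}{i}) = \tstat[i]{t}$, while the centering constant reduces to $\post{t}(\tstat{t}\af{t}\testf[t] + \testfp[t]) = \post{t}\tstat{t}\af{t}$. Consequently the random variable on the left-hand side of the CLT in \autoref{thm:CLT:affine} becomes verbatim the quantity $\sqrt{\N}\sum_{i=1}^{\N}(\wgt{t}{i}/\wgtsum{t})(\tstat[i]{t} - \post{t}\tstat{t}\af{t})$ appearing in the corollary, and the asserted Gaussian limit follows immediately, with limiting standard deviation $\asvarstd{t}$.

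It then remains only to verify that the closed-form variance \eqref{eq:asvar:closed:form} specializes to \eqref{eq:asvar:non-affine} under this substitution. In the FFBSm contribution the argument $\af{t}\testf[t] + \testfp[t]$ fed into $\BFcent[2]{s+1}{t}$ is, by the domain/subscript conventions of \autoref{sec:HMM}, the function $x_{0:t}\mapsto \af{t}(x_{0:t})\cdot 1 + 0 = \af{t}(x_{0:t})$, so that $\asvarFFBSm[2]{t}{\testf[t]}{\testfp[t]}$ collapses to $\asvarFFBSmstd[2]{t}$. In the PaRIS correction double sum the only factor involving $\testf[t]$ is the inner term $\{\lk{s+1}\cdots\lk{t-1}\testf[t]\}^2$, which becomes $\{\lk{s+1}\cdots\lk{t-1}\1{\set{X}}\}^2$; every remaining factor (in particular the incremental contributions $\tstat{\ell}\af{\ell} + \term{\ell} - \tstat{\ell+1}\af{\ell+1}$ and the normalizing denominators) is left unchanged. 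Matching the two expressions term by term in $s$ and $\ell$ therefore yields precisely \eqref{eq:asvar:non-affine}, i.e. $\asvarstd[2]{t} = \asvar[2]{t}{\1{\set{X}}}{0}$.

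Since the argument is a single evaluation of the affine result at a fixed pair of test functions, there is no genuine obstacle: the corollary is immediate once \autoref{thm:CLT:affine} is available. The only point deserving a brief check is the consistency of the extension-of-domain conventions, which is exactly what guarantees that the pointwise products $\tstat{t}\af{t}\testf[t]$ and $\af{t}\testf[t]$, together with the iterated kernel action $\lk{s+1}\cdots\lk{t-1}\testf[t]$, all reduce correctly when $\testf[t] = \1{\set{X}}$. With that observed, the identification of the limiting variance and hence the statement of \autoref{thm:CLT} follow.
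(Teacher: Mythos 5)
Your proposal is correct and is exactly the paper's argument: the paper declares the corollary ``an immediate consequence'' of \autoref{thm:CLT:affine}, having already noted in \autoref{sec:theoretical:results} that the non-affine results follow by taking $\testf[t] \equiv \1{\set{X}}$ and $\testfp[t] \equiv \1{\set{X}^c}$ (i.e.\ the zero function, matching your choice $\testfp[t] \equiv 0$). Your additional term-by-term verification that \eqref{eq:asvar:closed:form} specializes to \eqref{eq:asvar:non-affine} is exactly the check the paper leaves implicit.
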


By following identically the lines of the proof of \cite[Theorem~8]{douc:moulines:olsson:2014}, we may use \autoref{thm:hoeffding} and \autoref{thm:CLT} for deriving also the asymptotic $\Lp{p}$ error of the estimates produced by the algorithm. 
\begin{corollary} \label{thm:Lp:error}
    Let \autoref{ass:boundedness:g:q} hold. Then for all $p \in \rsetpos$, $t \in \nset$, and $\K \in \nsetpos$, 
    $$
        \lim_{\N \rightarrow \infty} \sqrt{\N} \left \| \sum_{i = 1}^{\N} \frac{\wgt{t}{i}}{\wgtsum{t}}(\tstat[i]{t} - \post{t} \tstat{t} \af{t}) \right \|_{\Lp{p}} = \sqrt{2} \asvarstd{t} \left( \frac{\Gamma\{(p + 1) / 2 \}}{\sqrt{2 \pi}} \right)^{1/p} \eqsp,
    $$
    where $\asvarstd[2]{t}$ is given in \eqref{eq:asvar:non-affine}. 
\end{corollary}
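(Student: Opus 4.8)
The plan is to combine the central limit theorem of \autoref{thm:CLT} with the non-asymptotic exponential inequality of \autoref{thm:hoeffding} through a uniform-integrability argument, exactly along the lines of \cite[Theorem~8]{douc:moulines:olsson:2014}: weak convergence alone only controls the \emph{distribution} of the scaled error, whereas the Hoeffding bound is precisely what is needed to upgrade this into convergence of the $\Lp{p}$-norms. Write
$$
    Y_\N \eqdef \sqrt{\N} \sum_{i = 1}^{\N} \frac{\wgt{t}{i}}{\wgtsum{t}} (\tstat[i]{t} - \post{t} \tstat{t} \af{t}) \eqsp.
$$
By \autoref{thm:CLT}, $Y_\N \convd \asvarstd{t} Z$ with $Z$ standard Gaussian, and the target quantity is $\lim_{\N \to \infty} \| Y_\N \|_{\Lp{p}} = \lim_{\N \to \infty} (\E[\,|Y_\N|^p\,])^{1/p}$. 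It is a standard fact that if $Y_\N \convd Y$ and the family $\{ |Y_\N|^p \}_\N$ is uniformly integrable, then $\E[\,|Y_\N|^p\,] \to \E[\,|Y|^p\,]$; it therefore suffices to (a) verify this uniform integrability and (b) compute $\E[\,|\asvarstd{t} Z|^p\,]$.

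For (a), the point is to convert the concentration bound of \autoref{thm:hoeffding} into a moment bound on $Y_\N$ that is uniform in $\N$. That corollary gives, for the centered and normalized statistic, $\prob(|\N^{-1/2} Y_\N| \geq \varepsilon) \leq c_t \exp(-\tilde{c}_t \N \varepsilon^2)$, so substituting $\varepsilon \mapsto \varepsilon / \sqrt{\N}$ yields the $\N$-free sub-Gaussian tail $\prob(|Y_\N| \geq \varepsilon) \leq c_t \exp(-\tilde{c}_t \varepsilon^2)$ for all $\varepsilon \in \rsetpos$; here the $\N$ in the exponent cancels exactly the $\sqrt{\N}$ scaling in $Y_\N$. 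Integrating this tail through $\E[\,|Y_\N|^q\,] = \int_0^\infty q \varepsilon^{q - 1} \prob(|Y_\N| \geq \varepsilon) \, \rmd \varepsilon$ (bounding $\prob(\cdot) \leq 1$ on the range where the exponential estimate exceeds one) shows that $\sup_\N \E[\,|Y_\N|^q\,] < \infty$ for every $q \in \rsetpos$. In particular $\sup_\N \E[\,|Y_\N|^{2p}\,] < \infty$, which forces $\{ |Y_\N|^p \}_\N$ to be uniformly integrable. Combined with the weak convergence this gives $\E[\,|Y_\N|^p\,] \to \E[\,|\asvarstd{t} Z|^p\,] = (\asvarstd{t})^p \, \E[\,|Z|^p\,]$.

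Taking $p$-th roots then yields $\lim_{\N \to \infty} \| Y_\N \|_{\Lp{p}} = \asvarstd{t} \, (\E[\,|Z|^p\,])^{1/p}$, and a routine Gamma-function computation of the $p$-th absolute moment of a standard Gaussian (via the substitution $u = z^2/2$, giving $\E[\,|Z|^p\,] = 2^{p/2} \Gamma\{(p+1)/2\} / \sqrt{\pi}$) identifies this limit with the constant displayed on the right-hand side of the assertion. The only genuinely substantive step is the uniform moment bound of the second paragraph; everything else is the generic ``weak convergence plus uniform integrability implies moment convergence'' mechanism together with an explicit Gaussian moment. Since both the CLT and the exponential inequality have already been established in \autoref{thm:CLT} and \autoref{thm:hoeffding}, no further model-specific obstacle arises, and the main thing to check is simply that the rescaled tail bound is uniform in $\N$, which it is by the exact cancellation noted above.
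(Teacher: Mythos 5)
Your argument is exactly the one the paper intends: the paper's proof consists of the single remark that one should follow the lines of \cite[Theorem~8]{douc:moulines:olsson:2014}, i.e.\ use \autoref{thm:hoeffding} to turn the $\N$-free sub-Gaussian tail of the rescaled error into uniform moment bounds, deduce uniform integrability of $|Y_\N|^p$, and combine this with the weak convergence from \autoref{thm:CLT} to pass to the limit in the $\Lp{p}$-norm, which is precisely what you spell out. One small caveat: your (correct) computation gives $\E[\,|Z|^p\,] = 2^{p/2}\Gamma\{(p+1)/2\}/\sqrt{\pi}$, so the limit is $\sqrt{2}\,\asvarstd{t}\,(\Gamma\{(p+1)/2\}/\sqrt{\pi})^{1/p}$ rather than the displayed constant with $\sqrt{2\pi}$ in the denominator (check $p=2$, where the limit must equal $\asvarstd{t}$); this appears to be a typo in the statement rather than a flaw in your proof, so you should not claim the two constants coincide.
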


\subsubsection{Time uniform asymptotic variance bounds}
\label{sec:long:term:stability}

In the present section we establish the long-term numerical stability of the PaRIS algorithm by bounding the asymptotic variance \eqref{eq:asvar:non-affine} (and hence, by \autoref{thm:Lp:error}, the asymptotic $\Lp{p}$ error) using mixing-based arguments. We will treat separately joint smoothing and marginal smoothing, and derive, for precision parameters $\K \geq 2$, $\ordo(t)$ and $\ordo(1)$ bounds, respectively, on the asymptotic variances in these cases. Since such time dependence is the best possible for SMC error bounds on the path and marginal spaces, these results confirm the conjecture that the algorithm stays numerically stable for precision parameters of this sort. Similar results for the FFBSm and FFBSi algorithms were obtained in \cite{douc:garivier:moulines:olsson:2010,dubarry:lecorff:2013}. The analysis will be carried through under the following \emph{strong mixing} assumption, which is standard in the literature of SMC analysis (see \cite{delmoral:guionnet:2001} and, e.g., \cite{delmoral:2004,cappe:moulines:ryden:2005,crisan:heine:2008,douc:moulines:olsson:2014} for refinements) and points to applications where the state space $\set{X}$ is a compact set.   
\begin{assumption} \label{ass:strong:mixing}
    \ \\
    \begin{enumerate}[(i)]
    \item There exist constants $0 < \hklow < \hkup < \infty$ such that for all $(x, \tilde{x}) \in \set{X}^2$, 
    $$
        \hklow \leq \hd(x, \tilde{x}) \leq \hkup \eqsp.
    $$
    \item There exist constants $0 < \mdlow < \mdup < \infty$ such that for all $t \in \nset$, $\supn{\md{t}} \leq \mdup$ and $\mdlow \leq \lk{t} \1{\set{X}}(x)$, $x \in \set{X}$.
    \end{enumerate}
\end{assumption}

\subsubsection*{Joint smoothing}

The following assumption implies that the additive functional under consideration grows at most linearly with time,Ê which is a minimal requirement for obtaining an $\ordo(t)$ asymptotic variance.

\begin{assumption} \label{ass:term:bounded}
    There exists $\hbd \in \rsetpos$ such that for all $s \in \nset$,  $\oscn{\term{s}} \leq \hbd$. 
\end{assumption}

As an auxiliary result, we provide an $\ordo(t)$ bound on the asymptotic variance of the FFBSm algorithm; see \cite{dubarry:lecorff:2013} for a similar result on the $\Lp{p}$ error for finite particle sample sizes. 

\begin{proposition} \label{prop:bound:normalized:FFBSm}
    Let \autoref{ass:strong:mixing} and \autoref{ass:term:bounded} hold. Then 
    $$
        \limsup_{t \rightarrow \infty} \frac{1}{t} \asvarFFBSmstd[2]{t} \leq \hbd^2 \frac{4 \mdup}{\mdlow (1 - \mr)^4} \eqsp. 
    $$
\end{proposition}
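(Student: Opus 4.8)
The plan is to bound, uniformly in $t$ and in $s \in \intvect{0}{t-1}$, each individual summand
\[
    T_{s,t} \eqdef \frac{\post{s} \lk{s} ( \md{s + 1} \BFcent[2]{s + 1}{t} \af{t} )}{(\post{s} \lk{s} \cdots \lk{t - 1} \1{\set{X}})^2}
\]
of the FFBSm variance, since $\asvarFFBSmstd[2]{t} = \sum_{s = 0}^{t - 1} T_{s,t}$. Once I establish $T_{s,t} \le \hbd^2 4 \mdup / [\mdlow (1 - \mr)^4]$ for every admissible pair, summing the $t$ terms, dividing by $t$, and letting $t \to \infty$ yields the assertion directly. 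The whole difficulty is therefore concentrated in a single \emph{uniform} bound on one summand.

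\textbf{Reduction to a bounded centred function.} The forward part $\lk{s + 1} \cdots \lk{t - 1}$ of the retro-prospective kernel $\BF{s + 1}{t}$ is \emph{unnormalised}, whereas its backward part $\tstat{s + 1}$ is a probability kernel. I would first factor out the forward normaliser $\Lambda_{s+1:t} \eqdef \lk{s + 1} \cdots \lk{t - 1} \1{\set{X}}$ and write $\BFcent{s + 1}{t} \af{t} = \Lambda_{s+1:t} \, D_{s,t}$, where $D_{s,t}(x) \eqdef \E[ \af{t}(X_{0:t}) \mid X_{s + 1} = x, Y_{0:t} ] - \post{0:t \mid t} \af{t}$ is the genuinely normalised conditional expectation, centred around the full smoothing mean. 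The point of this factorisation is that $D_{s,t}$ is a \emph{bounded} function, controllable by forgetting, while the surviving $\Lambda_{s+1:t}^2$ is exactly what will cancel against the squared denominator.

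\textbf{Forgetting step.} By additivity, $D_{s,t} = \sum_{\ell = 0}^{t - 1} ( \E[ \addf{\ell}(X_{\ell:\ell+1}) \mid X_{s+1} = \cdot, Y_{0:t}] - \post{0:t \mid t} \addf{\ell} )$. The heart of the argument is a uniform-in-$t$ geometric forgetting bound $\oscn{ x \mapsto \E[\addf{\ell}(X_{\ell:\ell+1}) \mid X_{s+1}=x, Y_{0:t}] } \le C \, \mr^{|s - \ell|} \oscn{\addf{\ell}}$, obtained by combining the contraction of the conditional backward (smoothing) kernels for $\ell \le s$ with the contraction of the forward filtering recursion for $\ell > s$; both hold under \autoref{ass:strong:mixing} at rate $\mr$. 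Since $\oscn{\addf{\ell}} \le \hbd$ by \autoref{ass:term:bounded}, summing the two-sided geometric series over $\ell$ gives $\oscn{D_{s,t}} \le 2\hbd/(1 - \mr)$, and since $D_{s,t}$ has vanishing integral against the smoothing marginal of $X_{s+1}$, also $\supn{D_{s,t}} \le \oscn{D_{s,t}} \le 2\hbd/(1 - \mr)$; this produces the factor $4$ and two of the four powers of $(1 - \mr)^{-1}$.

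\textbf{Normalisation step and main obstacle.} It then remains to control $\post{s} \lk{s} ( \md{s+1} \Lambda_{s+1:t}^2 ) / (\post{s} \lk{s} \Lambda_{s+1:t})^2$. Bounding $\md{s+1} \le \mdup$, using $\lk{s} (\Lambda^2) \le \supn{\Lambda} \lk{s} \Lambda$, the lower bound $\post{s} \lk{s} \Lambda \ge \mdlow \inf \Lambda$ (from $\lk{s}\1{\set{X}} \ge \mdlow$), and the Dobrushin ratio $\supn{\Lambda}/\inf{\Lambda} \le \hkup/\hklow$ (from $\hklow \le \hd \le \hkup$) yields a uniform bound of order $\mdup / [\mdlow (1 - \mr)^2]$ on this ratio, the residual $\hkup/\hklow$ being folded into the rate $\mr$; multiplying by $\supn{D_{s,t}}^2$ closes the estimate. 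I expect the \emph{main obstacle} to be the forgetting step: proving the clean uniform bound $\oscn{ \E[\addf{\ell} \mid X_{s+1} = \cdot] } \le C \mr^{|s-\ell|} \hbd$ with constants sharp enough that the geometric sums and the normalisation together reproduce \emph{exactly} $4\mdup/[\mdlow (1 - \mr)^4]$. Because each increment $\addf{\ell}$ couples the coordinates $x_\ell, x_{\ell+1}$ while the conditioning time $s+1$ may lie on either side of $\ell$, one must simultaneously invoke forward and backward forgetting and track how their rates and prefactors propagate through both the centring and the unnormalised weighting $\Lambda_{s+1:t}$; keeping these aligned so that precisely four factors of $(1 - \mr)^{-1}$ emerge is the delicate bookkeeping.
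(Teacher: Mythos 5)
Your proposal follows essentially the same route as the paper's proof: a term-by-term bound on the sum over $s$, obtained by combining a uniform-in-$t$, two-sided geometric forgetting bound on $\BFcent{s+1}{t} \af{t}$ (the paper imports this as Lemma~1 of Dubarry and Le~Corff, which is precisely the estimate you single out as the main obstacle, stated with the forward normaliser factored out and summing to $2\hbd/(1-\mr)$ per term) with a ratio bound of order $\mdup (\hkup/\hklow)^2 / \mdlow$ on the normalisation under the strong mixing assumption. The only cosmetic difference is that you factor out the pointwise normaliser $\lk{s+1} \cdots \lk{t-1} \1{\set{X}}(x)$ rather than its supremum norm; carried out carefully, the ingredients you list for the normalisation step actually produce only one factor of $\hkup/\hklow = (1-\mr)^{-1}$ rather than the two you claim, i.e.\ a slightly sharper constant, which of course still implies the stated bound.
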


In the light of \eqref{prop:bound:normalized:FFBSm} Êit suffices to bound the second term of \eqref{eq:asvar:non-affine} by a quantity of order $\ordo(t)$. This yields the following result, where, interestingly, the incremental asymptotic variance caused by the  backward simulation is \emph{inversely proportional to the precision parameter} $\K$. This is well in line with the theory of \emph{random weight SMC methods}, in which, in similarity to our algorithm, intractable quantities (the importance weights) are replaced by random and unbiased estimates of the same (see \cite{olsson:strojby:2011,olsson:strojby:2010}). 

\begin{theorem} \label{thm:linear:bound:normalized:PaRIS}
    Let \autoref{ass:strong:mixing} and \autoref{ass:term:bounded} hold. Then for all $\K \geq 2$,
    \begin{equation*}
        \limsup_{t \rightarrow \infty} \frac{1}{t} \asvarstd[2]{t} 
        \leq \hbd^2 \frac{\mdup}{\mdlow (1 - \mr)^4} \left( 4 + \frac{(4 \mdup \mr^2 + 1 - \mr)^2}{(\K - 1)(1 - \mr)} \right) \eqsp, 
    \end{equation*}
    where $\asvarstd[2]{t}$ is defined in \eqref{eq:asvar:non-affine}. 
\end{theorem}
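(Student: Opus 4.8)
The plan is to treat the two summands of $\asvarstd[2]{t}$ in~\eqref{eq:asvar:non-affine} separately. The first, $\asvarFFBSmstd[2]{t}$, is exactly the FFBSm asymptotic variance, and \autoref{prop:bound:normalized:FFBSm} already furnishes $\limsup_{t\to\infty} t^{-1}\asvarFFBSmstd[2]{t} \leq \hbd^2 \cdot 4\mdup/\{\mdlow(1-\mr)^4\}$, which accounts for the term ``$4$'' inside the parenthesis. It therefore remains to show that the double sum in~\eqref{eq:asvar:non-affine} is asymptotically of order $\ordo(t)$ with the stated constant; more precisely, I would establish a bound $B$, \emph{uniform in} $\ell \leq s \leq t-1$, on the $(\ell,s)$-summand stripped of its factor $\K^{\ell-(s+1)}$, and then sum the geometric weights.

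For the local (backward) factor, I would use the update formula~\eqref{eqn:tstat:update}, which gives $\bk{\post{\ell}}(\tstat{\ell}\af{\ell} + \term{\ell} - \tstat{\ell+1}\af{\ell+1}) = 0$; hence $\bk{\post{\ell}}(\tstat{\ell}\af{\ell} + \term{\ell} - \tstat{\ell+1}\af{\ell+1})^2$ is, for each fixed second argument, a conditional variance under the reverse kernel and is thus bounded by $\tfrac14\oscn{\tstat{\ell}\af{\ell} + \term{\ell}}^2 \leq \tfrac14(\oscn{\tstat{\ell}\af{\ell}} + \hbd)^2$ via Popoviciu's inequality and \autoref{ass:term:bounded}. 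The essential point is that $\oscn{\tstat{\ell}\af{\ell}}$ is bounded \emph{uniformly in $\ell$}: unfolding $\tstat{\ell+1}\af{\ell+1} = \bk{\post{\ell}}(\tstat{\ell}\af{\ell} + \term{\ell})$ recursively and invoking \autoref{ass:strong:mixing}(i), the reverse kernels $\bk{\post{\ell}}$ are uniformly Doeblin with Dobrushin coefficient at most $\mr$ (which under the assumption may be taken as $\mr = 1 - \hklow/\hkup$), so each past increment $\term{k}$ contributes to $\oscn{\tstat{\ell}\af{\ell}}$ an amount decaying geometrically like $\mr^{\ell-k}$; summing the resulting geometric series against $\oscn{\term{k}} \leq \hbd$ yields a bound of order $\hbd/(1-\mr)$, uniform in $\ell$.

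For the forward factor, I would then control the ratio
$$
\frac{\post{\ell}\lk{\ell}\{\bk{\post{\ell}}(\cdots)^2 \, \lk{\ell+1}\cdots\lk{s}(\md{s+1}\{\lk{s+1}\cdots\lk{t-1}\1{\set{X}}\}^2)\}}{(\post{\ell}\lk{\ell}\cdots\lk{s-1}\1{\set{X}})(\post{s}\lk{s}\cdots\lk{t-1}\1{\set{X}})^2}
$$
by combining the just-obtained uniform bound on the backward factor with the standard mixing estimates for the unnormalized forward semigroup generated by the kernels $\lk{t}$. Under \autoref{ass:strong:mixing} these kernels satisfy the one-step bounds $\mdlow \leq \lk{t}\1{\set{X}} \leq \mdup$ together with uniformly bounded likelihood ratios $\lk{t}h(x)/\lk{t}h(x') \leq \hkup/\hklow$, which let one cancel the forward semigroup in the numerator against the two normalization products in the denominator up to an explicit constant depending only on $\mdup$, $\mdlow$, and $\mr$. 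It is this forward cancellation, feeding on the backward oscillation bound above, that produces the constant $(4\mdup\mr^2 + 1 - \mr)^2$, and the upshot is a uniform bound $B = \hbd^2 \mdup (4\mdup\mr^2 + 1 - \mr)^2 / \{\mdlow(1-\mr)^5\}$ on the stripped $(\ell,s)$-summand.

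Finally, with such a uniform $B$ the summation is mechanical: for fixed $s$, $\sum_{\ell=0}^{s}\K^{\ell-(s+1)} \leq (\K-1)^{-1}$ (which is precisely where $\K \geq 2$ and the inverse proportionality to $\K-1$ enter), so the inner sum over $\ell$ contributes a factor $(\K-1)^{-1}$, and summing $B/(\K-1)$ over $s \in \intvect{0}{t-1}$ gives $tB/(\K-1)$; dividing by $t$ and taking $\limsup$ yields $B/(\K-1)$ for the double sum, which combines with the FFBSm contribution to give the stated estimate. The main obstacle is the \emph{joint} mixing analysis of the third paragraph: the numerator couples a backward object (the conditional variance of $\tstat{\ell}\af{\ell} + \term{\ell}$) with a forward object (the propagated squared normalization), and controlling their product against the two denominators — uniformly in all of $\ell, s, t$ and with a constant sharp enough to reproduce $(4\mdup\mr^2 + 1 - \mr)^2$ — requires the careful bookkeeping of the technical lemmas, rather than the comparatively routine geometric summation.
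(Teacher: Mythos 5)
Your proposal follows the paper's proof in all structural respects: \autoref{prop:bound:normalized:FFBSm} handles the FFBSm term, the sandwich \eqref{eq:L:prod:up:low:bd} yields the forward-ratio bound $\mdup/\{\mdlow(1-\mr)^3\}$ (this is exactly \eqref{eq:cov:term:bd:add:func}), the backward factor is bounded uniformly in $\ell$, and the conclusion follows from $\sum_{\ell=0}^{s}\K^{\ell-(s+1)}\rightarrow(\K-1)^{-1}$ together with a Ces\`aro-mean argument. The one substantive deviation is in the backward factor, and there your sketch does not deliver the stated constant. The paper bounds $\bk{\post{\ell}}(\tstat{\ell}\af{\ell}+\term{\ell}-\tstat{\ell+1}\af{\ell+1})^2$ simply by the squared sup norm, via $\supn{\tstat{\ell}\af{\ell}+\term{\ell}-\tstat{\ell+1}\af{\ell+1}}\leq\oscn{\tstat{\ell}\af{\ell}}+\oscn{\term{\ell}}\leq 4\supn{\BFcent{\ell}{\ell}\af{\ell}}+\hbd\leq\hbd(4\mdup\mr^2/(1-\mr)+1)$, the last step coming from \eqref{eq:cyrille:bd}; it is this bound, and not the forward cancellation (which only contributes $\mdup/\{\mdlow(1-\mr)^3\}$), that produces the factor $(4\mdup\mr^2+1-\mr)^2/(1-\mr)^2$, contrary to what you suggest in your third paragraph. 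Your Popoviciu/Dobrushin alternative is a legitimate way to control the same factor, but it yields $\tfrac14(\oscn{\tstat{\ell}\af{\ell}}+\hbd)^2$, which with the uniform contraction estimate $\oscn{\tstat{\ell}\af{\ell}}\leq\hbd\mr/(1-\mr)$ evaluates to $\hbd^2/\{4(1-\mr)^2\}$: a genuinely different constant from $\hbd^2(4\mdup\mr^2+1-\mr)^2/(1-\mr)^2$, and one that is not always dominated by it (take $\mdup$ small and $\mr$ close to $1$, so that $4\mdup\mr^2+1-\mr<1/2$). So, as written, your argument proves the theorem with a different constant rather than the statement verbatim; to recover the exact bound, replace the Popoviciu step by the sup-norm/oscillation bound above. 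The remaining claims (the uniform Doeblin property of the backward kernels with coefficient $1-\hklow/\hkup=\mr$, the geometric summation, and the Ces\`aro limit) are all correct and coincide with the paper's treatment.
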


\subsubsection*{Marginal smoothing}

We turn to marginal smoothing, i.e., the situation when all terms of the additive functional are zero but a single one. For such a particular objective function we able to construct a time uniform bound on the \emph{unnormalized} asymptotic variance of the same form as before, with one term representing the FFBSm asymptotic variance (see \cite[Theorem~12]{douc:garivier:moulines:olsson:2010}) and one additional term being inversely proportional to the precision parameter and representing the loss of accuracy introduced by backward sampling.   

\begin{assumption} \label{ass:marginal:smoothing}
    The additive functional has the following form. For some $\sptime \in \nset$,
    $$
        \term{s}(x_{s:s + 1}) = 
        \begin{cases}
            0 & \mbox{for } s \neq \sptime \eqsp, \\
            \term{\sptime}(x_\sptime) & \mbox{for } s = \sptime \eqsp.   
        \end{cases}
    $$
\end{assumption}

\begin{theorem} \label{thm:stability:marginal:smoothing:PaRIS}
    Let \autoref{ass:strong:mixing} and \autoref{ass:marginal:smoothing} hold. Then for all $t \in \nset$ and $\K \geq 2$,
    \begin{equation*}
        \asvarstd[2]{t} 
        \leq \oscn[2]{\term{\sptime}} \frac{\mdup}{(1 - \mr)^3} \left( \mdup \frac{1 + \mr^2}{1 + \mr} + 4 \frac{1}{ \mdlow (\K - 1)\{(1 - \mr^2) \wedge (1/2) \}} \right) \eqsp, 
    \end{equation*}
    where $\asvarstd[2]{t}$ is defined in \eqref{eq:asvar:non-affine}. 
\end{theorem}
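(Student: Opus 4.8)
The plan is to exploit the degeneracy that \autoref{ass:marginal:smoothing} forces on the increments in \eqref{eq:asvar:non-affine}. Under this assumption $\af{\ell}\equiv 0$ for $\ell\le\sptime$ and $\af{\ell}=\term{\sptime}$ (viewed as a function of $x_\sptime$) for $\ell\ge\sptime+1$, so the increment $\tstat{\ell}\af{\ell}+\term{\ell}-\tstat{\ell+1}\af{\ell+1}$ vanishes identically for every $\ell<\sptime$. Both the FFBSm term and the double sum in \eqref{eq:asvar:non-affine} therefore reduce to sums over $\ell\ge\sptime$. Setting $\Phi_\sptime\eqdef\term{\sptime}$ and $\Phi_\ell\eqdef\tstat{\ell}\af{\ell}$ for $\ell>\sptime$, the recursion \eqref{eqn:tstat:update} gives $\tstat{\ell+1}\af{\ell+1}=\bk{\post{\ell}}\Phi_\ell$, whence the increment equals $\Phi_\ell-\bk{\post{\ell}}\Phi_\ell$ and, crucially, $\bk{\post{\ell}}(\tstat{\ell}\af{\ell}+\term{\ell}-\tstat{\ell+1}\af{\ell+1})^2$ is exactly the conditional variance of $\Phi_\ell$ under the backward kernel $\bk{\post{\ell}}$.

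First I would establish the geometric forgetting of the backward smoother. Under \autoref{ass:strong:mixing} the reverse kernels are uniformly Doeblin, so the standard contraction argument (as in \cite{douc:garivier:moulines:olsson:2010} and the lemmas of \autoref{sec:technical:lemmas}) yields $\oscn{\Phi_\ell}\le\mr^{\ell-\sptime}\oscn{\term{\sptime}}$ for all $\ell\ge\sptime$, where $\mr$ is the mixing rate furnished by the assumption. Popoviciu's inequality then gives the uniform variance bound
$$
\bk{\post{\ell}}(\Phi_\ell-\bk{\post{\ell}}\Phi_\ell)^2\le\tfrac14\oscn[2]{\Phi_\ell}\le\tfrac14\mr^{2(\ell-\sptime)}\oscn[2]{\term{\sptime}},
$$
which furnishes the decay in $\ell$ needed to keep the outer sum finite uniformly in $t$.

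Second I would control the retro-prospective normalisations. By \autoref{ass:strong:mixing}(ii) the one-step quantities $\lk{\cdot}\1{\set{X}}$ lie in $[\mdlow,\mdup]$, and by the multiplicative mixing machinery underlying \autoref{prop:bound:normalized:FFBSm} the ratio
$$
\frac{\post{\ell}\lk{\ell}\{\bk{\post{\ell}}(\cdots)^2\,\lk{\ell+1}\cdots\lk{s}(\md{s+1}\{\lk{s+1}\cdots\lk{t-1}\1{\set{X}}\}^2)\}}{(\post{\ell}\lk{\ell}\cdots\lk{s-1}\1{\set{X}})(\post{s}\lk{s}\cdots\lk{t-1}\1{\set{X}})^2}
$$
is bounded, after factoring out the uniform variance bound of the previous step, by a constant of the form $\tfrac{\mdup}{\mdlow}(1-\mr)^{-3}$ arising from the telescoping forward normalisations (the squared inner likelihood product in the numerator being matched against the linearly appearing product in the denominator). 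Inserting this and the variance bound, each summand is dominated by a constant multiple of $\mdup\mdlow^{-1}(1-\mr)^{-3}\,\mr^{2(\ell-\sptime)}\K^{-(s+1-\ell)}\oscn[2]{\term{\sptime}}$. Summing over $s\ge\ell$ gives $\sum_{s\ge\ell}\K^{-(s+1-\ell)}\le(\K-1)^{-1}$ for $\K\ge2$; combining the two geometric scales---the mixing rate $\mr^2$ and the Monte Carlo rate $\K^{-1}$---and using $\K\ge2$ to guarantee $\K^{-1}\le1/2$ bounds the residual geometric factor by $\{(1-\mr^2)\wedge(1/2)\}^{-1}$, producing the second, $\K$-dependent term of the claimed bound. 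The first term is obtained by bounding $\asvarFFBSmstd[2]{t}$ through the time-uniform FFBSm marginal-smoothing estimate of \cite[Theorem~12]{douc:garivier:moulines:olsson:2010}, which gives the contribution $\oscn[2]{\term{\sptime}}\mdup^2(1+\mr^2)\{(1-\mr)^3(1+\mr)\}^{-1}$; adding the two pieces yields the stated bound, manifestly independent of $t$.

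The main obstacle will be the second step: tracking the exact constants through the retro-prospective normalisations so that the forward mixing absorbs the mismatch between the squared inner likelihood product $\{\lk{s+1}\cdots\lk{t-1}\1{\set{X}}\}^2$ in the numerator and the product entering $(\post{s}\lk{s}\cdots\lk{t-1}\1{\set{X}})^2$ linearly, without accumulating a constant that grows with $t-s$. One must combine the uniform bounds of \autoref{ass:strong:mixing} with the contraction of the reverse kernels so that only a fixed power of $(1-\mr)^{-1}$ survives, and then arrange the interaction of the two geometric rates $\mr^2$ and $\K^{-1}$ to collapse into the clean factor $[(\K-1)\{(1-\mr^2)\wedge(1/2)\}]^{-1}$ rather than a looser constant; this bookkeeping, rather than any single conceptual step, is the crux.
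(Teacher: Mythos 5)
Your proposal is correct and follows essentially the same route as the paper: reduce both terms of \eqref{eq:asvar:non-affine} to indices $\ell, s \geq \sptime$, invoke \cite[Theorem~12]{douc:garivier:moulines:olsson:2010} for the FFBSm part, bound the retro-prospective normalization ratio by $\mdup/\{\mdlow(1-\mr)^3\}$ exactly as in \eqref{eq:cov:term:bd:add:func}, exploit the geometric decay $\mr^{\ell-\sptime}$ of the backward increments, and sum the resulting double geometric series. Your two deviations are harmless refinements: Popoviciu's inequality applied to the conditional-variance interpretation of $\bk{\post{\ell}}(\tstat{\ell}\af{\ell}+\term{\ell}-\tstat{\ell+1}\af{\ell+1})^2$ yields a constant $1/4$ where the paper's sup-norm bound via \cite[Lemma~10]{douc:garivier:moulines:olsson:2010} yields $4$ (so you prove a stronger bound that still implies the statement), and summing over $s$ before $\ell$ factorizes the double sum into $(\K-1)^{-1}(1-\mr^2)^{-1}$ directly, bypassing the paper's case distinction between $\K\mr^2\neq 1$ and $\K\mr^2=1$, which is what forces the $(1-\mr^2)\wedge(1/2)$ appearing in the stated constant.
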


\subsubsection{Computational complexity}

We conclude this section with some comments on the complexity of the algorithm. Under \autoref{ass:boundedness:g:q}(ii), we may cast the accept-reject technique proposed in \cite[Algorithm~1]{douc:garivier:moulines:olsson:2010} into the framework of PaRIS. A pseudo-code describing the resulting scheme is provided by \autoref{alg:accept:reject} in \autoref{sec:accept-reject}. For a given $t \in \nset$, we denote by $C_t \langle \N, \K \rangle$Ê the (random) number of elementary operations needed for executing the PaRIS algorithm parameterized by $(\N, \K) \in (\nsetpos)^2$ from time zero to time $t$. Note that $C_t \langle \N, \K \rangle$Ê is strongly data dependent, as the observations $\{Êy_s \}_{s = 0}^t$ effect, via the particle weights, the acceptance probabilities at the different time steps. Still, under the strong mixing assumption above it is possible to bound uniformly this random variable. The following result is an immediate consequence of \cite[Proposition~2]{douc:garivier:moulines:olsson:2010}. 

\begin{theorem} \label{thm:comlexity}
    Let \autoref{ass:strong:mixing} hold. Then there exists a constant $c \in \rsetpos$Êsuch that $\E[ÊC_t \langle \N, \K \rangle ] \leq c t \N \K / (1 - \mr)$ for all $t \in \nset$. 
\end{theorem}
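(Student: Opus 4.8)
The plan is to reduce the statement to the per-draw analysis of the accept--reject sampler, which is essentially the content of \cite[Proposition~2]{douc:garivier:moulines:olsson:2010}. First I would decompose the cumulative operation count $C_t \langle \N, \K \rangle$ over the $t$ iterations of \autoref{alg:PaRIS}. At each time step the work splits into (a) the particle-filter propagation in Step~1, whose cost is deterministically $\ordo(\N)$, and (b) the backward-sampling loop, which for each of the $\N$ particles and each of the $\K$ precision draws invokes the accept--reject routine of \autoref{alg:accept:reject}. Since the selection, mutation, and weight/density evaluations each cost $\ordo(1)$, the only genuinely random contribution is the number of proposals generated by the accept--reject loops, and it is this quantity that I would bound in expectation.

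Next I would analyse a single accept--reject draw. Fix $t$ and $i \in \intvect{1}{\N}$, and condition on the particle filtration $\partfilt{t + 1}$, with respect to which the particles and weights at times $t$ and $t + 1$ are measurable. A candidate index is proposed from $\probdist(\{ \wgt{t}{j} \}_{j = 1}^\N)$ and accepted with probability $\hd(\epart{t}{J^*}, \epart{t + 1}{i}) / \hkup$, so that the conditional acceptance probability is
$$
    p_t^i \eqdef \frac{\sum_{j = 1}^\N \wgt{t}{j} \hd(\epart{t}{j}, \epart{t + 1}{i})}{\hkup \, \wgtsum{t}} \eqsp.
$$
Given $\partfilt{t + 1}$, the number of proposals required to obtain one accepted index is geometrically distributed with success probability $p_t^i$, hence has conditional expectation $1 / p_t^i$. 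Invoking the lower bound $\hd \geq \hklow$ from \autoref{ass:strong:mixing}(i) yields $\sum_j \wgt{t}{j} \hd(\epart{t}{j}, \epart{t + 1}{i}) \geq \hklow \wgtsum{t}$, and therefore $1/p_t^i \leq \hkup / \hklow$, a deterministic bound that is \emph{uniform} in the particle configuration, in $i$, and in $t$.

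Summing this bound over the $\K$ draws and the $\N$ particles at each step, adding the $\ordo(\N)$ per-step filter cost, and then summing over the $t$ time steps via the tower property, I would obtain $\E[C_t \langle \N, \K \rangle] \leq c' \, t \N \K \, \hkup / \hklow$ for a constant $c'$ absorbing the fixed per-operation costs (the $\ordo(t\N)$ filter contribution being dominated since $\K \geq 1$ and $\hkup/\hklow \geq 1$). Since under \autoref{ass:strong:mixing} the ratio $\hkup / \hklow$ is, up to a constant, of order $1/(1 - \mr)$ --- exactly the dependence extracted in \cite[Proposition~2]{douc:garivier:moulines:olsson:2010} --- the claimed bound $\E[C_t \langle \N, \K \rangle] \leq c\, t \N \K / (1 - \mr)$ follows.

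The only real subtlety, and the reason the strong-mixing lower bound is indispensable, is the strong data dependence emphasised in the preceding discussion: the acceptance probabilities $p_t^i$ are random and correlated across time through the entire particle history, so one cannot treat the trial counts as i.i.d.\ geometrics. The remedy is to bound each conditional expectation $1/p_t^i$ by the deterministic constant $\hkup/\hklow$ \emph{before} taking total expectations, which decouples the steps and renders the per-step bound independent of the observations. Everything else is the routine bookkeeping of the elementary operations in \autoref{alg:accept:reject}.
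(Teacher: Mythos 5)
Your argument is correct and is essentially the paper's own: the paper proves this theorem simply by citing \cite[Proposition~2]{douc:garivier:moulines:olsson:2010}, and your reconstruction --- conditioning on the particle filtration so that each accept--reject draw requires a geometric number of proposals with success probability at least $\hklow/\hkup$, hence conditional expectation at most $\hkup/\hklow = 1/(1-\mr)$, then summing over the $\K$ draws, $\N$ particles, and $t$ time steps --- is precisely the content of that cited result adapted to PaRIS. The only cosmetic remark is that the identification $\hkup/\hklow = 1/(1-\mr)$ is exact (not merely up to a constant) under the convention $\mr = 1 - \hklow/\hkup$ used implicitly throughout the paper.
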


Thus, the expected number of trials grows linearly with time, the number of particles, and the precision parameter, showing the importance of keeping the latter at a minimum. On the other hand, since the variance bound derived in \autoref{prop:bound:normalized:FFBSm} (and \autoref{thm:stability:marginal:smoothing:PaRIS}) is inversely proportional to $\K$, using an excessively large precision parameter will not pay off in terms of variance reduction (as the variance term controlled by the precision parameter will be negligible beside the variance corresponding to FFBSm). \emph{We hence advocate keeping} $\K$ \emph{at a highly moderate value}, and will return to this matter in connection to the numerical illustrations of the next section.


\section{Simulations}
\label{sec:simulations}

An exhaustive study of the numerical aspects of PaRIS is beyond the scope of the present paper; nevertheless, we benchmark the algorithm on two different models, namely
\begin{itemize}
\item a linear Gaussian state-space model (for which all quantities of interest can be computed exactly for comparison) and 
\item a stochastic volatility model \cite{hull:white:1987}.
\end{itemize} 

\subsection{Linear Gaussian state-space model}

We first consider the linear Gaussian state-space model 
\begin{equation} \label{eq:lin:gaussian:hmm}
    \begin{split}
        X_{t + 1} &= a X_t + \sigmas \noises{t + 1} \\
        Y_t &= b X_t + \sigmao \noiseo{t} 
    \end{split}
    \quad (t \in \nset) \eqsp,
\end{equation}
where $\set{Y} = \set{X} = \rset$ and $\{Ê\noises{t} \}_{t \in \nsetpos}$ and $\{Ê\noiseo{t} \}_{t \in \nset}$  are sequences of mutually independent standard normally distributed random variables. The parameters $(a, b) \in \rset^2$ and $(\sigmas, \sigmao)\in (\rsetpos)^2$ are considered to be known. We aim at computing smoothed expectations of the sufficient statistics 
\begin{equation} \label{eqn:lg:af}
    \af{t}^{(1)}(x_{0:t}) \eqdef \sum_{s = 0}^t x_s, \quad \af{t}^{(2)}(x_{0:t}) \eqdef \sum_{s = 0}^t x_s^2, \quad \af{t}^{(3)}(x_{0:t}) = \sum_{s = 0}^{t - 1} x_s x_{s + 1} \quad (x_{0:t} \in \set{X}^{t + 1}) 
\end{equation}
under the dynamics governed by the parameter vector $(a, b, \sigmas, \sigmao) = (.7,1,.2,1)$, and assume for simplicity that the model is well-specified. For this model, the \emph{disturbance smoother} (see, e.g., \cite[Algorithm 5.2.15]{cappe:moulines:ryden:2005}) provides the exact values of the smoothed sufficient statistics, and we compared these values with approximations obtained using PaRIS as well as the forward-only implementation of FFBSm. With our implementation, parameterizing PaRIS and FFBSm with $(\N, \K) = (150, 2)$ and $\N = 50$, respectively, resulted in very similar computational times for the two algorithms, with PaRIS being slightly faster (recall that FFBSm has a quadratic complexity). As clear from the box plots (based on \emph{time-normalized} estimates) displayed in \autoref{fig:lg:bp2}, PaRIS outperforms clearly FFBSm as the former exhibits lower variance as well as smaller bias for equal computational time. 

\begin{figure}
	\centering
  	\centerline{\includegraphics[width=\textwidth]{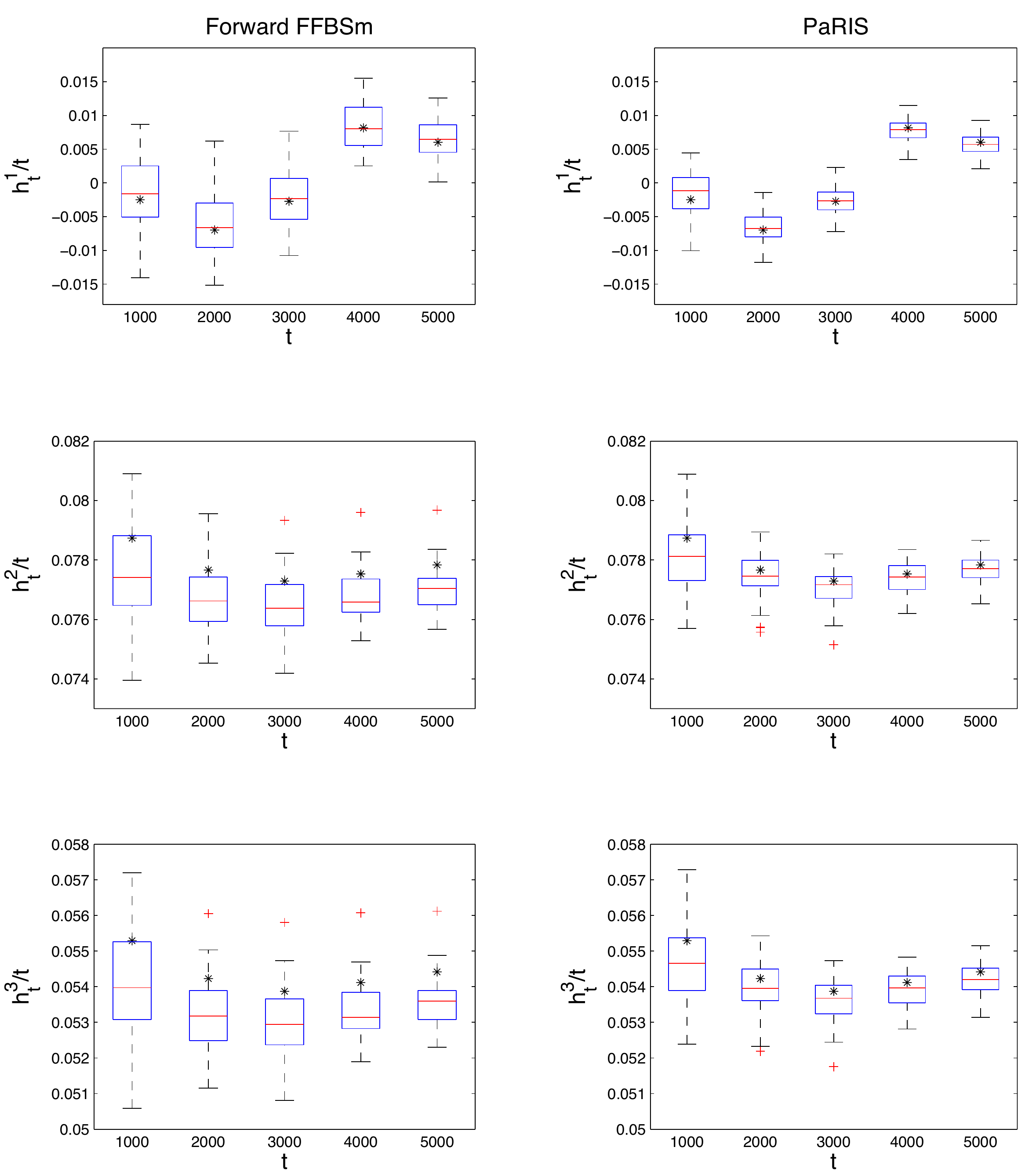}}
 	\caption{Box plots of estimates of smoothed sufficient statistics~\eqref{eqn:lg:af} for the linear Gaussian model \eqref{eq:lin:gaussian:hmm} produced by PaRIS (right column) and the forward-only version of FFBSm (left column) using $(\N, \K) = (150, 2)$ andÊ $\N = 50$, respectively (yielding close to identical computational times). The boxes are based on 50 replicates of the estimates for the same fixed observation sequence and asterisks indicate exact values obtained with the disturbance smoother.}
 	\label{fig:lg:bp2}
\end{figure}

As a measure of numerical performance, we define \emph{efficiency} as inverse sample variance over computational time. \autoref{fig:lg:eff} reports the efficiencies by which the PaRIS and forward-only FFBSm algorithms estimate $\post{0:t \mid t} \af{t}^{(1)}$ using each $\N = 500$ particles. As evident from the plot, PaRIS exhibits a higher efficiency uniformly over all time points. The variance estimates were based on 50 replicates. 

\begin{figure}
	\centering
  	\centerline{\includegraphics[width=0.75\textwidth]{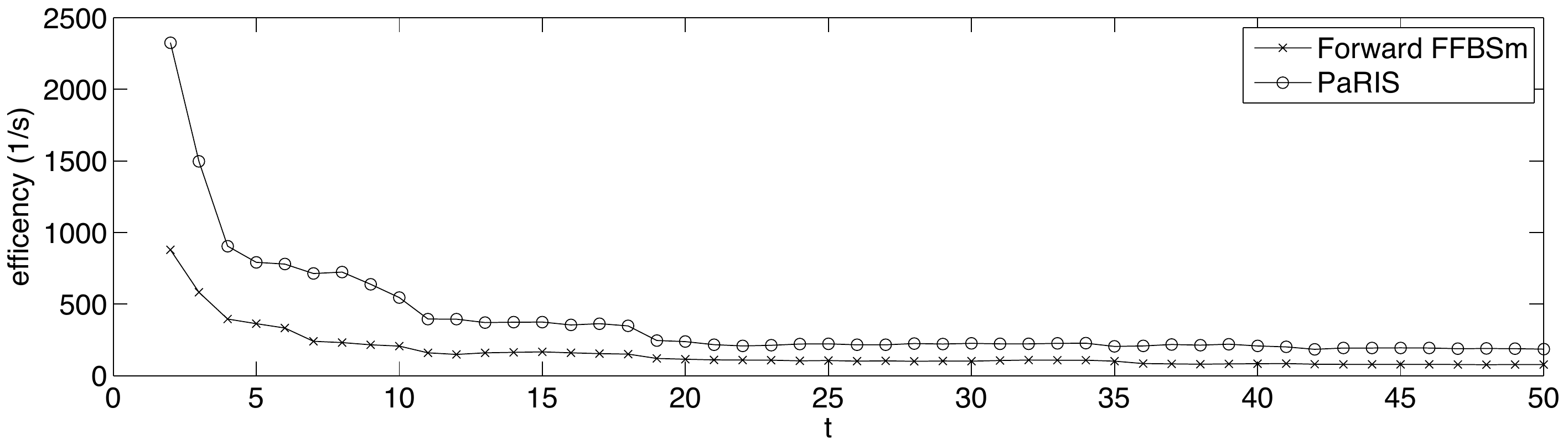}}
 	\caption{Estimated efficiencies for the PaRIS and forward-only FFBSm algorithms using each $\N = 500$ particles. (The first time step is removed from the plot due to very high efficiencies for both algorithms.)}
 	\label{fig:lg:eff}
\end{figure}

In order to examine the dependence of the performance of PaRIS on the design of the precision parameter $\K$, we produced estimates of $\post{0:t \mid t} \af{t}^{(1)}$Êfor $t \in \intvect{0}{1000}$ using the algorithm for each of the precision parameters $\K \in \{ 1, 2, 3, 4, 10, 30 \}$. All these estimators were computed on the basis of the same forward particles, so also an additional FFBSm-based estimator. This experiment was, in order to estimate the variances of the (seven) different estimators, replicated $100$ times for the same fixed sequence of observations. \autoref{fig:var}, displaying estimated variance as a function of time, shows a momentous difference between the cases $\K = 1$ and $\K > 1$ (note the difference in y-axis scale between the two graphs); the graphs in the top ($\K = 1$) and bottom ($\K > 1$) figures exhibit variance growths that appear to be close to quadratic and linear, respectively, which is well in accordance with the theory. Increasing the precision parameter $\K$ Êfrom $2$ to $4$ implies some decrease of variance, while increasing the same from $4$ to $30$ has only marginal effect on the accuracy of the estimator (the difference between the variances corresponding to $\K = 10$ and $\K = 30$ is close to indistinguishable). This is perfectly in line with the theoretical results obtained in \autoref{sec:main:results}, where the second term of the variance bound in \autoref{thm:linear:bound:normalized:PaRIS} is inversely proportional to the precision parameter. Finally, ratios of variances of estimators associated with different $\K$ are displayed in~\autoref{fig:quot}, which shows a linearly increasing ratio of the variances associated with $\K = 1$ and $\K = 2$ and a close to constant ratio of the variances associated with $\K = 2$ and $\K = 3$. 

Finally, in order to illustrate our algorithm's capacity of coping with particle path degeneracy, we report, in \autoref{fig:support}, the ratios $\# \set{S}_t / \N^{t + 1}$, $t \in \intvect{0}{1000}$, where $\# \set{S}_t$ is the cardinality of the support of the PaRIS algorithm at time $t$ (in the notation of \autoref{sec:PaRIS}), for the precision parameters $\K \in \{Ê1, 2, 3, 10, 30 \}$. Here $\N = 100$, and again the estimators associated with different precision parameters were based on the same forward particles. The $95\%$ confidence bounds displayed the same plot were obtained on the basis of $100$ replicates of observation record. Judging by these confidence bounds, the dependence of the copiousness of the support on the observations is fairly robust. Interestingly, for $\K = 1$ the sequence of ratios tends quickly to zero, while letting $\K > 1$ stabilizes completely the support of the estimator. Already $\K = 2$ yields a support that involves, on the average and in the long run, more than $50\%$ of all forward particles. Again, increasing the precision parameter has some effect for moderate values of the same, say, up to $\K = 10$, while increasing the parameter further from $10$ to $30$ (which implies a significant increase of computational overhead) effects only marginally the cardinality of the support. Also this observation is perfectly in line with the theory presented in \autoref{sec:main:results}, consolidating our apprehension that only a modest value of $\K$ is required as long as $\K \geq 2$. 

\begin{figure}[htb]
\centering
  \centerline{\includegraphics[width=\textwidth]{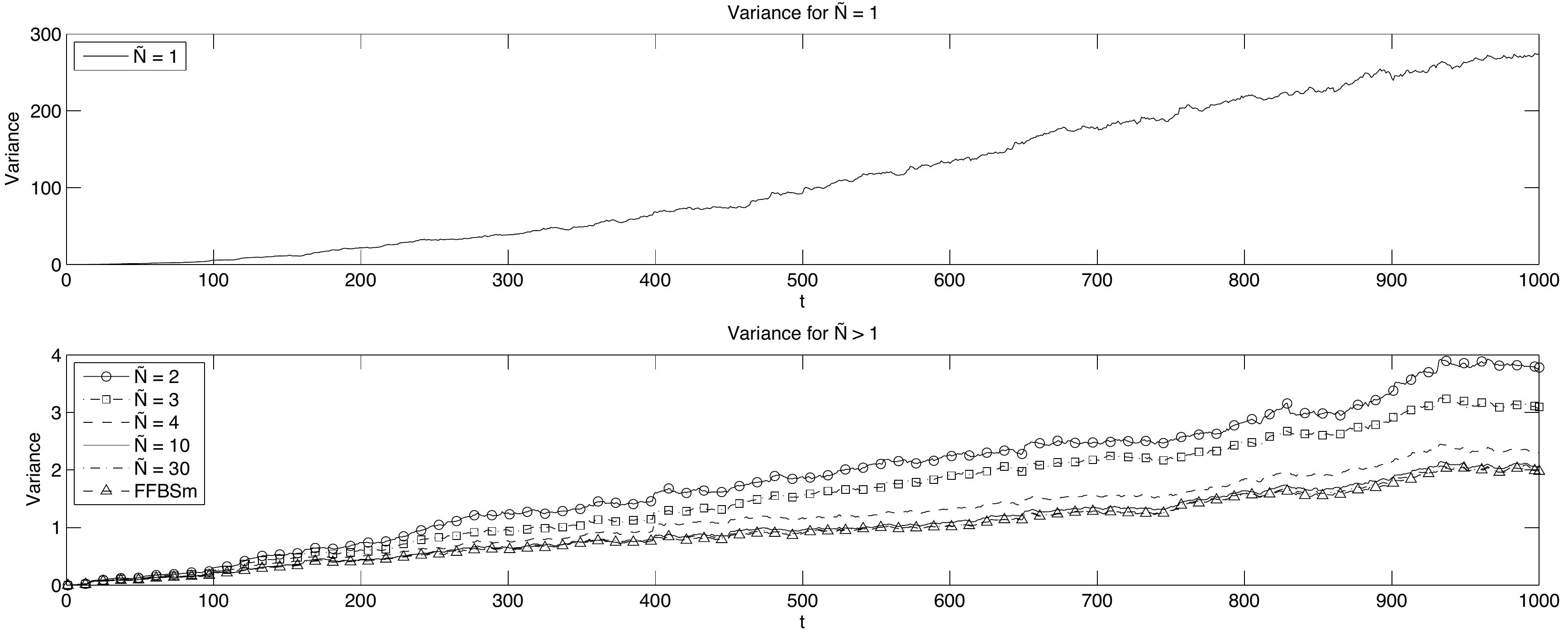}}
  \caption{Estimated variances of PaRIS estimators for $\K = 1$ (top graph) and $\K \in \{ 2 , 3, 4, 10, 30 \}$ (bottom graph) at different time steps $t \in \intvect{0}{1000}$. The bottom graph includes variance estimates of the forward-only FFBSm estimator. The variance estimates are based on $100$ replicates.}
  \label{fig:var}
\end{figure}

\begin{figure}[htb]
\centering
  \centerline{\includegraphics[width=\textwidth]{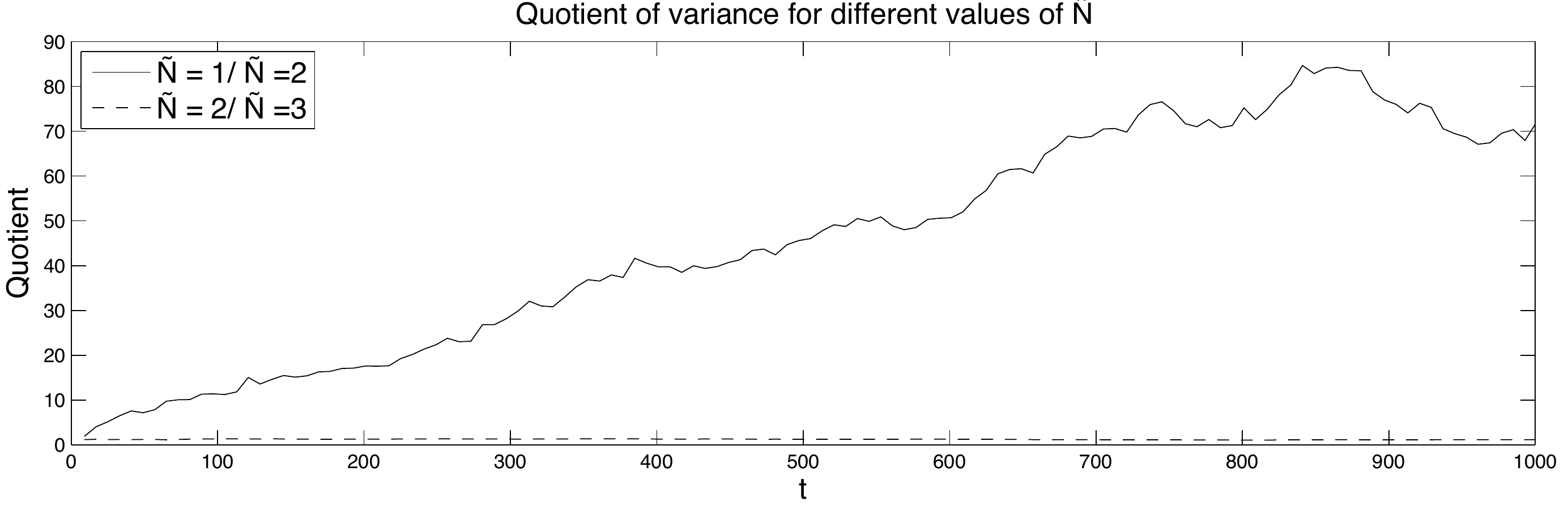}}
  \caption{Ratios of variances of the estimators associated with $\K = 1$ and $\K = 2$ (solid line) and $\K = 2$ and $\K = 3$ (dashed line).}
  \label{fig:quot}
\end{figure}

\begin{figure}[htb]
\centering
  \centerline{\includegraphics[width=\textwidth]{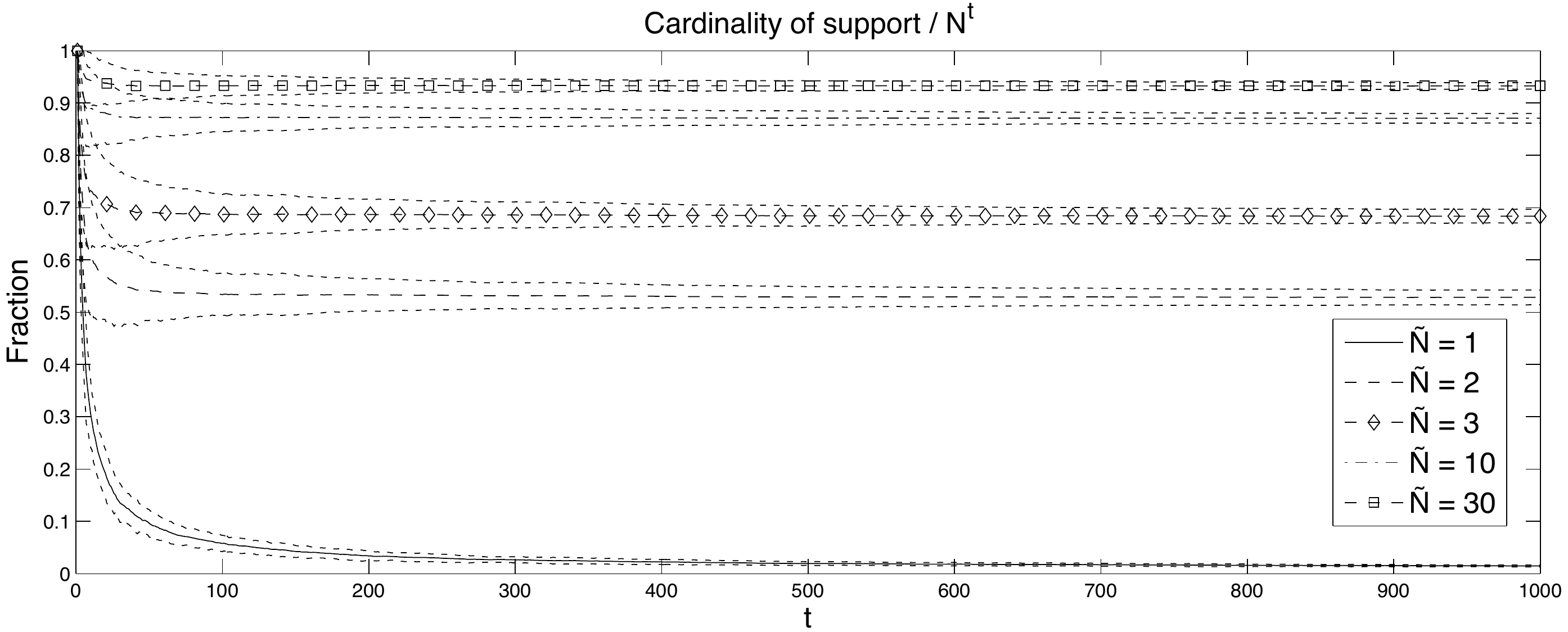}}
  \caption{Plot the ratios $\# \set{S}_t / \N^{t + 1}$, $t \in \intvect{0}{1000}$, for the precision parameters $\K \in \{Ê1, 2, 3, 10, 30 \}$ and $\N = 100$. The $95\%$ confidence bounds are obtained on the basis of $100$ replicates of the observation record.}
  \label{fig:support}
\end{figure}

\subsection{Stochastic volatility model}

For the sake of completeness we also consider a nonlinear model, namely the standard stochastic volatility model 
\begin{equation} \label{eq:stovol}
    \begin{split}
        X_{t + 1} &= \phi X_t + \sigmas \noises{t + 1} \\
        Y_t &= \beta \exp( X_t / 2 ) \noiseo{t} 
    \end{split}
    \quad (t \in \nset) \eqsp, 
\end{equation}
where $\set{X} = \set{Y} = \rset$Ê and $\{Ê\noises{t} \}_{t \in \nsetpos}$ and $\{Ê\noiseo{t} \}_{t \in \nset}$  are as in the previous example. We assume that the model parameters $\phi \in \mathbb{R}$ and $(\sigma, \beta) \in (\rsetpos)^2$ are known and that the model is well-specified. Our aim is to compute, using again PaRIS and the forward-only implementation of FFBSm, smoothed expectations of the sufficient statistics 
\begin{equation} \label{eqn:sv:af}
    \af{t}^{(1)}(x_{0:t}) \eqdef \sum_{s = 0}^t x_s^2, \quad \af{t}^{(2)}(x_{0:t}) \eqdef \sum_{s = 0}^{s - 1} x_s x_{s + 1} \quad (x_{0:t} \in \set{X}^{t + 1}) 
\end{equation}
for a model parameterized by $(\phi, \sigma, \beta) = ({.975}, {.16}, {.63})$. In this case, both algorithms used $\N = 250$ particles and the precision parameter of PaRIS was set to $\K = 2$. \autoref{fig:sv:bp} shows box plots based on $100$ replicates of estimates of $\post{0:t \mid t} \af{t}^{(i)} / t$, for $i \in \{ 1, 2 \}$ and $t \in \{ {2,\!000}, {4,\!000}, {6,\!000}, {8,\!000}, {10,\!000} \}$, obtained using these methods. Even though the variance and the bias of the estimates produced by the two algorithms are comparable, PaRIS was now \emph{5 times faster} than the FFBSm algorithm. 

\begin{figure}
	\centering
  	\centerline{\includegraphics[width=\textwidth]{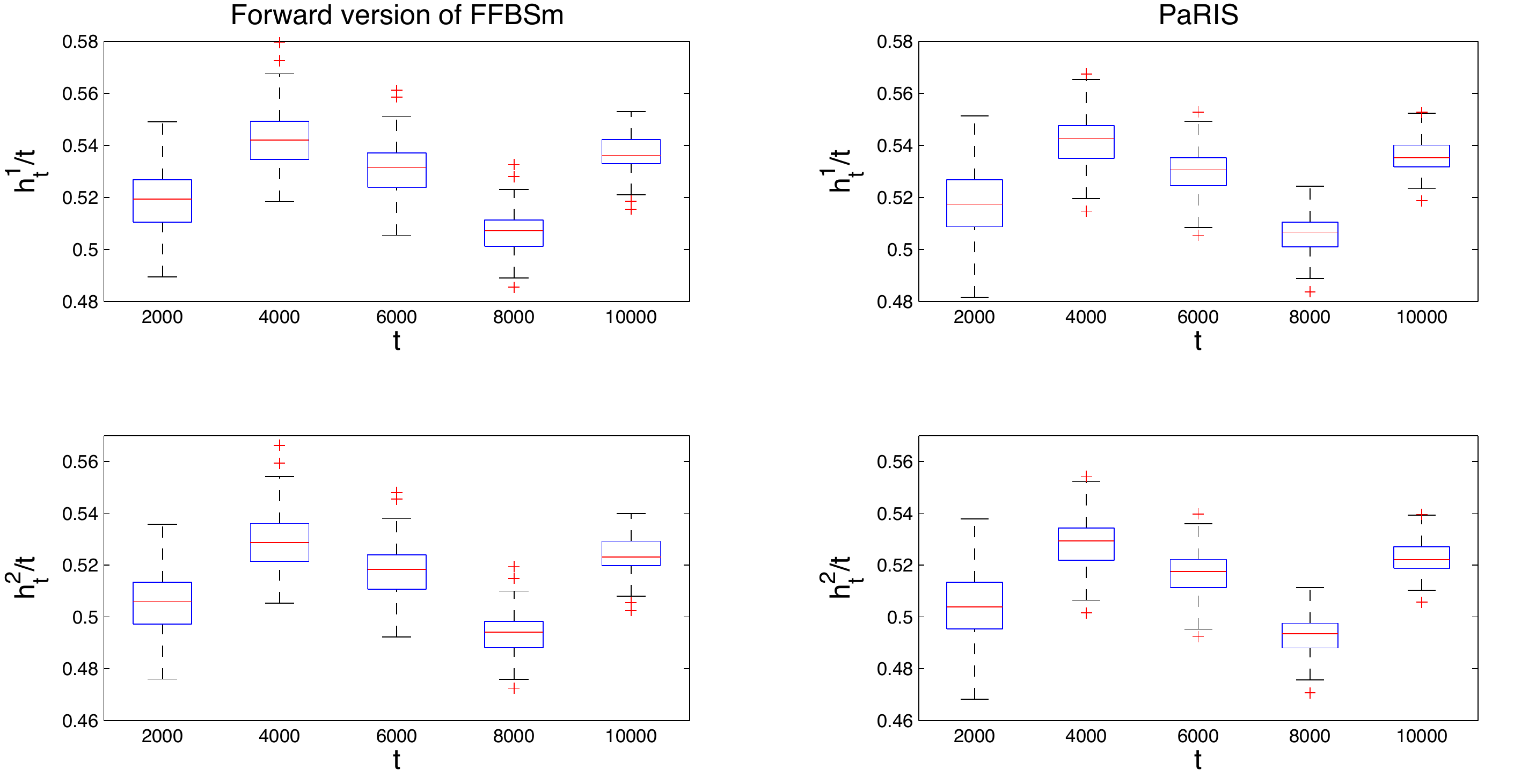}}
 	\caption{Box plots of estimates of smoothed sufficient statistics~\eqref{eqn:sv:af} for the stochastic volatility model \eqref{eq:stovol} produced by PaRIS (right column) and the forward-only version of FFBSm (left column) using $(\N, \K) = (250, 2)$ andÊ $\N = 250$, respectively. With this parameterization, PaRIS was 5 times faster than the FFBSm algorithm. The boxes are based on 100 replicates of the estimates for the same fixed observation sequence.}
 	\label{fig:sv:bp}
\end{figure}

\subsection{Some comments on the implementation}

When applying accept-reject-based backward sampling (\autoref{alg:accept:reject}), some acceptance probabilities will be small due to the random support of the particle-based backward kernel. In order to avoid getting stuck, it may be convenient to equip the algorithm with a threshold for the number of trials used at each accept-reject operation; when the threshold is reached, accept-reject sampling is cancelled and replaced by a draw from original distribution (recall that we are just using accept-reject sampling in order to reduce the computational work). \autoref{fig:LG:num:tries} displays computational time as a function of the size of this threshold for the linear Gaussian model and $\N = 250$ particles. Interestingly, the graph has a minimum for the threshold value $14$, and using this value we run the algorithm and counted the number of trials at any accept-reject sampling operation. The outcome is presented in the histogram plot to the right, from which it is clear that the majority of the particles are accepted after just a few trials (moreover, an index is most commonly accepted at once). In addition, at only $3.55\%$ of the occasions, the number of trials exceeded the threshold. Needless to say, the optimal threshold depends on the model as well as the number of particles (when the number of particles is small, a too high threshold may have significant negative effect on the computational efficiency; on the contrary, when the number of particles is large, the performance of the algorithm is relatively robust vis-\`a-vis the design of the threshold). Further simulations not presented here indicate however that a threshold value around $\sqrt{N}$ could be a rule of thumb. 

\begin{figure}
\centering
\includegraphics[width = \linewidth]{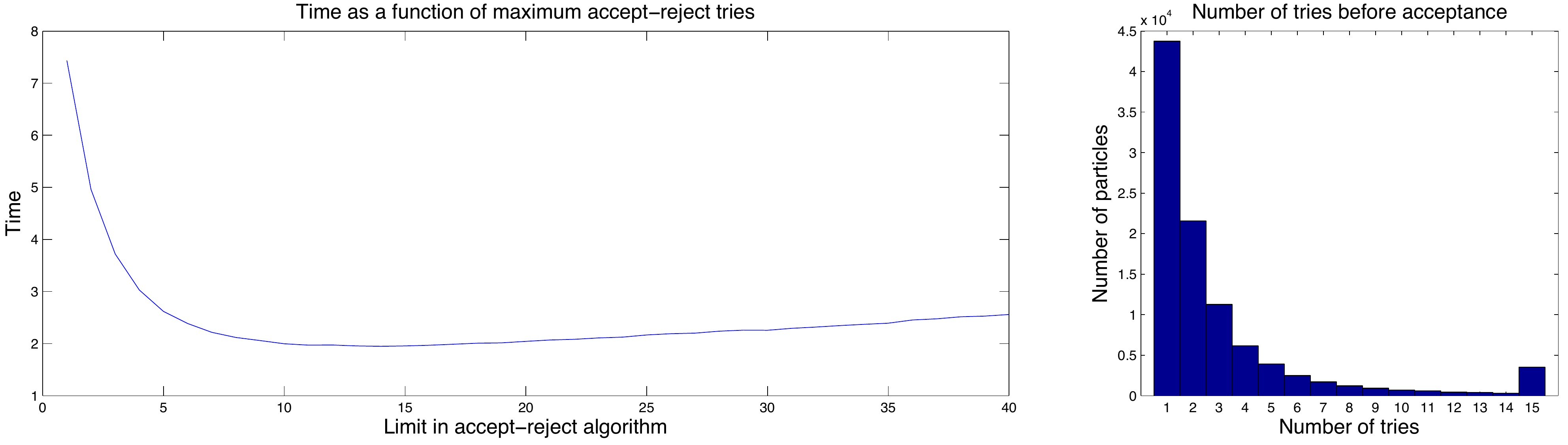}
\caption{Computational time as a function of the size of the accept-reject threshold for the linear Gaussian model and $\N = 250$ particles (left panel). The histogram to the right displays the number of trials needed before acceptance at any accept-reject sampling operation in algorithm when the threshold is $14$ (corresponding to minimal computational time); the bar at $15$ represents $3.55\%$ of the occasions.}
\label{fig:LG:num:tries}
\end{figure}


\section{Conclusions}
\label{sec:con}

We have presented a novel algorithm, the particle-based, rapid incremental smoother, PaRIS, for computationally efficient online smoothing of additive state functionals in general HMMs. The algorithm, which is based on a backward decomposition of the smoothing distribution which can be implemented recursively for objective functions of additive type, can be viewed as a hybrid between the forward-only implementation of FFBSm and the FFBSi algorithm; more specifically, forward-only FFBSm may be viewed as a Rao-Blackwellized version of PaRIS. The algorithm is furnished with a number of convergence results, where the main result is a CLT of PaRIS's Monte Carlo output at the rate $\sqrt{\N}$. The analysis of PaRIS is considerably more involved than that of the FFBSi algorithm due to the complex dependence structure introduced by the retrospective simulation (on the contrary to FFBSi, where the trajectories are conditionally independent given the particles generated in the forward pass). Interestingly, the design of the precision parameter, i.e., the number of Monte Carlo simulations used for approximating the backward decomposition, turns out to be critical, since using a single backward draw yields a degeneracy phenomenon that resembles closely that of the Poor man's smoother. However, as established theoretically as well as through simulations, using at lest \emph{two} such draws stabilizes completely the support of the estimator. For $\K \geq 2$ Êwe are able to derive $\ordo(1 + 1/(\K - 1))$ and $\ordo(t \{Ê1 + 1/(\K- 1) \})$ bounds on the asymptotic variance in the cases of marginal and joint smoothing, respectively, and since the second term of these bounds is inversely proportional to the precision parameter, we suggest this parameter to be kept at a moderate value in order to gain computational speed. As known to the authors, this is the first analysis ever of this kind. 

The algorithm we propose has a linear complexity in the number of particles while the forward-only implementation of FFBSm has a quadratic complexity, and a numerical comparison between the two shows clearly that PaRIS achieves the same accuracy as FFBSm at a considerably lower computational cost. In addition, similarly to forward-only FFBSm, our smoother has limited and constant memory requirements, as it needs only the current particle sample and a set of estimated auxiliary statistics to be stored at each iteration. 

Smoothing of additive state functionals is a key ingredient of most---frequentistic or Bayesian---online parameter estimation techniques for HMMs. Since these applications are most often characterized by strict computational requirements, PaRIS can be naturally cast into any such framework.


\appendix


\section{Proofs}
\label{sec:proofs}

\subsection{Two prefatory lemmas}

\begin{lemma} \label{lemma:property-1}
    For all $t \in \nset$ and $(\testf[t + 1], \testfp[t + 1]) \in \bmf{\alg{X}}^2$ it holds that 
    $$
        \post{t + 1}(\tstat{t + 1} \af{t + 1} \testf[t + 1] + \testfp[t + 1]) = \frac{\post{t}\{\tstat{t} \af{t} \lk{t} \testf[t + 1] + \lk{t}(\term{t} \testf[t + 1] + \testfp[t + 1]) \}}{\post{t} \lk{t} \1{\set{X}}} \eqsp. 
    $$
\end{lemma}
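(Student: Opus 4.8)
The plan is to reduce the identity to the filter recursion \eqref{eq:filter:recursion} together with the backward update \eqref{eqn:tstat:update}, and then to invoke the defining property of the reverse kernel $\bk{\post{t}}$ in order to eliminate its intractable normalization. First I would split the left-hand side into the two additive contributions coming from $\testf[t + 1]$ and $\testfp[t + 1]$. Writing $\post{t + 1}(g) = \post{t} \lk{t} g / (\post{t} \lk{t} \1{\set{X}})$ for any $g \in \bmf{\alg{X}}$ — which is precisely \eqref{eq:filter:recursion} combined with $\lk{t} g = \hk(\md{t + 1} g)$ — the contribution of $\testfp[t + 1]$ is immediately $\post{t} \lk{t} \testfp[t + 1] / (\post{t} \lk{t} \1{\set{X}})$, matching the corresponding term on the right-hand side. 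It therefore remains to treat the $\testf[t + 1]$ part, i.e.\ to establish, after clearing the common denominator $\post{t} \lk{t} \1{\set{X}}$, that
\[
    \post{t} \lk{t}(\tstat{t + 1} \af{t + 1} \testf[t + 1]) = \post{t}\{\tstat{t} \af{t} \lk{t} \testf[t + 1] + \lk{t}(\term{t} \testf[t + 1])\} \eqsp.
\]

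For this I would substitute the backward update \eqref{eqn:tstat:update}, $\tstat{t + 1} \af{t + 1} = \bk{\post{t}}(\tstat{t} \af{t} + \term{t})$, and expand $\lk{t} = \hk(\md{t + 1} \, \cdot)$, so that the left-hand side becomes
\[
    \int \md{t + 1}(x_{t + 1}) \testf[t + 1](x_{t + 1}) \, \bk{\post{t}}(\tstat{t} \af{t} + \term{t})(x_{t + 1}) \, (\post{t} \hk)(\rmd x_{t + 1}) \eqsp.
\]
Setting $h(x_t, x_{t + 1}) \eqdef \md{t + 1}(x_{t + 1}) \testf[t + 1](x_{t + 1}) (\tstat{t} \af{t}(x_t) + \term{t}(x_t, x_{t + 1}))$, this integral is exactly $[(\post{t} \hk) \varotimes \bk{\post{t}}] h$. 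The defining reverse-kernel relation $(\post{t} \varotimes \hk) h = (\post{t} \hk) \varotimes \bk{\post{t}} h$ then lets me rewrite it as $(\post{t} \varotimes \hk) h = \int \post{t}(\rmd x_t) \, \hk(x_t, \rmd x_{t + 1}) \, h(x_t, x_{t + 1})$; carrying out the inner $x_{t + 1}$-integral and recognizing $\hk(\md{t + 1} \testf[t + 1]) = \lk{t} \testf[t + 1]$ and $\hk(\md{t + 1} \term{t} \testf[t + 1]) = \lk{t}(\term{t} \testf[t + 1])$ yields the right-hand side above.

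The crux — and the step I expect to require the most care — is the bookkeeping of arguments when applying the reverse-kernel identity: the factor $\tstat{t} \af{t}$ depends only on $x_t$, whereas $\term{t}$ is a genuine function on $\set{X}^2$ and the weight $\md{t + 1}$ and test function $\testf[t + 1]$ live on the $x_{t + 1}$-coordinate, so the two-variable function $h$ must be assembled correctly before the identity is invoked. The mechanism that makes everything work is that the reverse-kernel relation exchanges the roles of $x_t$ and $x_{t + 1}$ and thereby cancels the (generally intractable) normalizing constant $\int \hd(\tilde{x}, x_{t + 1}) \, \post{t}(\rmd \tilde{x})$ hidden inside $\bk{\post{t}}$ against the mass that $\post{t} \hk$ places at $x_{t + 1}$; equivalently, in the explicit form \eqref{eqn:bk} this normalization is precisely the $x_t$-marginal produced when $\tstat{t + 1} \af{t + 1}(x_{t + 1}) \testf[t + 1](x_{t + 1})$ is integrated against $\hd(\cdot, x_{t + 1}) \, \post{t}$. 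A final application of Fubini's theorem — justified by \autoref{ass:boundedness:g:q}, which guarantees boundedness of $\hd$ and positiveness and finiteness of the weights — completes the identification with the stated right-hand side.
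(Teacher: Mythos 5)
Your proposal is correct and follows essentially the same route as the paper's proof: both reduce the claim to the identity $\post{t} \lk{t}(\tstat{t + 1} \af{t + 1} \testf[t + 1]) = \post{t}\{ \tstat{t} \af{t} \lk{t} \testf[t + 1] + \lk{t}(\term{t} \testf[t + 1])\}$, which is obtained by substituting \eqref{eqn:tstat:update} and invoking the defining (reversibility) property of $\bk{\post{t}}$ on the two-variable function $(\tstat{t}\af{t} + \term{t})\md{t + 1}\testf[t + 1]$, before dividing through by $\post{t}\lk{t}\1{\set{X}}$. The only cosmetic difference is that you peel off the $\testfp[t + 1]$ contribution via the filter recursion up front, whereas the paper absorbs it in the final normalization step.
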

\begin{proof}
    By combining the definitions of $\tstat{t}$, $\lk{t}$ and using reversibility, 
    \[
        \begin{split}
            \post{t} \lk{t}(\tstat{t + 1} \af{t + 1} \testf[t + 1]) &= \post{t} \hk \{\bk{\post{t}} (\tstat{t} \af{t} + \term{t}) \md{t + 1} \testf[t + 1] \} \\
            &= \post{t} \hk \varotimes \bk{\post{t}} \{( \tstat{t} \af{t} + \term{t} ) \md{t + 1} \testf[t + 1] \} \\
            &= \post{t} \varotimes \hk \{( \tstat{t} \af{t} + \term{t} ) \md{t + 1} \testf[t + 1] \} \\ 
            &= \post{t} \{ \tstat{t} \af{t} \lk{t} \testf[t + 1] + \lk{t}( \term{t} \testf[t + 1] ) \}\eqsp.
        \end{split}
    \]
    Now the statement of the lemma follows by dividing both sides of the previous equation by $\post{t} \lk{t} \1{\set{X}}$ and using the identity $\post{t + 1} = \post{t} \lk{t} / \post{t} \lk{t} \1{\set{X}}$.
\end{proof}


\begin{lemma} \label{lemma:property-2}
    For all $t \in \nset$, $(\testf[t + 1], \testfp[t + 1]) \in \bmf{\alg{X}}^2$, and $(\N, \K) \in (\nsetpos)^2$ the random variables $\{ \wgt{t + 1}{i} (\tstat[i]{t + 1} \testf[t + 1](\epart{t + 1}{i}) + \testfp[t + 1](\epart{t + 1}{i})) \}_{i = 1}^\N$ are, conditionally on $\partfiltbar{t - 1}$, i.i.d. with expectation
    \begin{equation} \label{eq:cond:exp}
    	\E \left[ \wgt{t + 1}{1} \{\tstat[1]{t + 1} \testf[t + 1](\epart{t + 1}{1}) + \testfp[t + 1](\epart{t + 1}{1}) \} \mid \partfiltbar{t} \right] = \sum_{i = 1}^\N \frac{\wgt{t}{i}}{\wgtsum{t}} \{ \tstat[i]{t} \lk{t} \testf[t + 1](\epart{t}{i}) + \lk{t}(\term{t} \testf[t + 1] + \testfp[t + 1])(\epart{t}{i}) \} \eqsp.
    \end{equation}
\end{lemma}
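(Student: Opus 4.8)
The plan is to treat the two assertions of the lemma—conditional independence and the form of the conditional mean—separately, the first being structural and the second following from a cancellation that mirrors the reversibility argument used in the proof of \autoref{lemma:property-1}. Throughout I condition on $\partfiltbar{t}$, which carries the particles, weights, and statistics produced up to time $t$ and hence renders the right-hand side of \eqref{eq:cond:exp} measurable.

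First I would settle the conditional i.i.d.\ claim. Conditionally on $\partfiltbar{t}$, \autoref{alg:PaRIS} generates the data attached to each index $i$—the selection index, the mutated particle $\epart{t+1}{i}$, and the backward indices $\{\bi{t+1}{i}{\k}\}_{\k=1}^\K$—by an independent replication of one and the same sampling mechanism, using the common $\partfiltbar{t}$-measurable cloud as its only shared input. Since $\wgt{t+1}{i}\{\tstat[i]{t+1}\testf[t+1](\epart{t+1}{i}) + \testfp[t+1](\epart{t+1}{i})\}$ is a fixed measurable function of the $i$-th such triple together with that cloud, these quantities are, conditionally on $\partfiltbar{t}$, i.i.d.

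Next I would compute the common conditional expectation by conditioning in two stages. In the inner stage I condition additionally on $\epart{t+1}{1}$ and average over the backward indices $\{\bi{t+1}{1}{\k}\}_\k$, which are i.i.d.\ draws from $\probdist(\{\kernel{\Lambda}_t^\N(1,\ell)\}_{\ell=1}^\N)$ and enter $\tstat[1]{t+1}$ only through the linear average $\K^{-1}\sum_\k(\cdots)$. This yields $\E[\tstat[1]{t+1}\mid\partfiltbar{t},\epart{t+1}{1}] = \sum_{\ell=1}^\N \kernel{\Lambda}_t^\N(1,\ell)\{\tstat[\ell]{t}+\term{t}(\epart{t}{\ell},\epart{t+1}{1})\}$, i.e.\ the particle backward-kernel average from \eqref{eq:part:approx:bk}. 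In the outer stage I average over $\epart{t+1}{1}$, whose conditional law given $\partfiltbar{t}$ is the bootstrap proposal mixture $\sum_{m=1}^\N (\wgt{t}{m}/\wgtsum{t})\,\hk(\epart{t}{m},\cdot)$.

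The crux is the cancellation in this outer integral. Writing the mixture through the density $\hd$ of $\hk$ with respect to $\refM$, the mixture density $\wgtsum{t}^{-1}\sum_m \wgt{t}{m}\hd(\epart{t}{m},x)$ cancels exactly the normalizing denominator $\sum_{m'}\wgt{t}{m'}\hd(\epart{t}{m'},x)$ carried by $\kernel{\Lambda}_t^\N(1,\cdot)$, collapsing the double sum to a single sum over $\ell$ with integrand $\md{t+1}(x)\hd(\epart{t}{\ell},x)\,\refM(\rmd x) = \md{t+1}(x)\,\hk(\epart{t}{\ell},\rmd x)$. Recognising $\lk{t}h(x_t) = \int \md{t+1}(x_{t+1})\, h(x_{t+1})\,\hk(x_t,\rmd x_{t+1})$ then reassembles the right-hand side of \eqref{eq:cond:exp}. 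I expect this cancellation—and, relatedly, the need to keep track of the fact that both $\term{t}$ and the backward weights $\kernel{\Lambda}_t^\N(1,\cdot)$ depend on the integration variable $\epart{t+1}{1}$, so the two conditioning stages cannot be separated naively—to be the only genuinely delicate point; everything else is routine bookkeeping. This is precisely the empirical-measure analogue of the reversibility identity $\post{t}\hk\varotimes\bk{\post{t}} = \post{t}\varotimes\hk$ invoked in \autoref{lemma:property-1}.
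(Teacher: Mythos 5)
Your proposal is correct and follows essentially the same route as the paper's proof: the conditional i.i.d.\ property from the independent replication of the selection/mutation/backward-sampling mechanism given $\partfiltbar{t}$, followed by the tower property (conditioning first on $\partfiltbar{t} \vee \partfilt{t+1}$, then integrating $\epart{t+1}{1}$ against the bootstrap mixture), with the cancellation of the mixture density against the normalizing denominator of $\kernel{\Lambda}_t^\N$ collapsing the double sum into $\lk{t}$. The only cosmetic difference is that the paper explicitly splits off the $\testfp[t+1]$ term before applying the tower property, which your computation absorbs into the same integral.
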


\begin{proof}
    The multinomial selection procedure implies that the particles $\{\epart{t + 1}{i} \}_{i = 1}^\N$ are i.i.d. conditionally on $\partfiltbar{t}$. Hence, since also the backward indices $\{\bi{t + 1}{i}{j} \}_{j = 1}^{\K}$ are i.i.d. conditionally on the particle $\epart{t}{i}$ and the $\sigma$-field $\partfiltbar{t}$, we conclude that $\{ \wgt{t + 1}{i} ( \tstat[i]{t + 1} \testf[t + 1](\epart{t + 1}{i}) + \testfp[t + 1](\epart{t + 1}{i})) \}_{i = 1}^\N$ are i.i.d. conditionally on $\partfiltbar{t}$. 
    
    In order to compute the common conditional expectation we decompose the same according to 
    \begin{multline*}
        \E \left[ \wgt{t + 1}{1} \{\tstat[1]{t + 1} \testf[t + 1](\epart{t + 1}{1}) + \testfp[t + 1](\epart{t + 1}{1}) \} \mid \partfiltbar{t} \right] \\ 
        = \E \left[ \wgt{t + 1}{1} \tstat[1]{t + 1} \testf[t + 1](\epart{t + 1}{1}) \mid \partfiltbar{t} \right] + \E \left[ \wgt{t + 1}{1} \testfp[t + 1](\epart{t + 1}{1}) \mid \partfiltbar{t} \right] \eqsp, 
    \end{multline*}
    where, by the tower property,
    \[
        \begin{split}
        \lefteqn{\E \left[ \wgt{t + 1}{1} \tstat[1]{t + 1} \testf[t + 1](\epart{t + 1}{1}) \mid \partfiltbar{t} \right]} \\
        	&= \E \left[ \wgt{t + 1}{1} \testf[t + 1](\epart{t + 1}{1}) \E \left[ \tstat[1]{t + 1} \mid \partfiltbar{t} \vee \partfilt{t + 1} \right] \mid \partfiltbar{t} \right] \\
        	&= \E \left[ \wgt{t + 1}{1} \testf[t + 1](\epart{t + 1}{1}) \sum_{\ell = 1}^\N \frac{\wgt{t}{\ell} \hd(\epart{t}{\ell}, \epart{t + 1}{1})}{\sum_{\ell' = 1}^\N \wgt{t}{\ell'} \hd(\epart{t}{\ell'}, \epart{t + 1}{1})} \{\tstat[\ell]{t} + \term{t}(\epart{t}{\ell}, \epart{t + 1}{1}) \} \mid \partfiltbar{t} \right] \\
        	&= \sum_{i = 1}^\N \frac{\wgt{t}{i}}{\wgtsum{t}} \int \hd(\epart{t}{i}, x) \md{t + 1}(x) \testf[t + 1](x) \sum_{\ell = 1}^\N \frac{\wgt{t}{\ell} \hd(\epart{t}{\ell}, x)}{\sum_{\ell' = 1}^\N  \wgt{t}{\ell'} \hd(\epart{t}{\ell'}, x) } \{ \tstat[\ell]{t} + \addf{t}(\epart{t}{\ell}, x) \} \, \refM(\rmd x) \\
        	 &= \sum_{\ell = 1}^\N \frac{\wgt{t}{\ell}}{\wgtsum{t}} \{ \tstat[\ell]{t} \lk{t} \testf[t + 1](\epart{t}{\ell}) + \lk{t} (\term{t} \testf[t + 1])(\epart{t}{\ell}) \} \eqsp. \\
        \end{split}
    \]
    We conclude the proof by noting that 
    $$
        \E \left[ \wgt{t + 1}{1} \testfp[t + 1](\epart{t + 1}{1}) \mid \partfiltbar{t} \right] = \sum_{\ell = 1}^\N \frac{\wgt{t}{\ell}}{\wgtsum{t}} \hk(\md{t + 1} \testfp[t + 1])(\epart{t}{\ell}) = \sum_{\ell = 1}^\N \frac{\wgt{t}{\ell}}{\wgtsum{t}} \lk{t} \testfp[t + 1](\epart{t}{\ell}) \eqsp. 
    $$
\end{proof}


\subsection{Proof of \autoref{thm:hoeffding:affine}}

We proceed by induction and assume that the claim of the theorem holds for $t \in \nset$. To establish (i) for $t + 1$, write, using \autoref{lemma:property-1},
\begin{multline*}
    \frac{1}{\N} \sum_{i = 1}^\N \wgt{t+1}{i} \{\tstat[i]{t + 1} \testf[t + 1](\epart{t+1}{i}) + \testfp[t + 1](\epart{t+1}{i}) \} - 
    \post{t} \lk{t} (\tstat{t + 1} \af{t + 1} \testf[t + 1] + \testfp[t + 1]) \\
    = \frac{1}{\N} \sum_{i = 1}^\N \wgt{t + 1}{i} \{\tstat[i]{t + 1} \testf[t + 1](\epart{t + 1}{i}) + \testfp[t + 1](\epart{t+1}{i}) \} - \E\left[ \wgt{t + 1}{1} \{\tstat[1]{t + 1} \testf[t + 1](\epart{t + 1}{1}) + \testfp[t + 1](\epart{t + 1}{1}) \} \mid \partfiltbar{t} \right] \\
    + \sum_{i = 1}^\N \frac{\wgt{t}{i}}{\wgtsum{t}} \{ \tstat[i]{t} \lk{t} \testf[t + 1](\epart{t}{i}) + \lk{t}(\addf{t} \testf[t + 1] + \testfp[t + 1])(\epart{t}{i}) \} - \post{t}\{\tstat{t} \af{t} \lk{t} \testf[t + 1] + \lk{t}(\addf{t} \testf[t + 1] + \testfp[t + 1]) \} \eqsp.
\end{multline*}
Since the functions $\lk{t} \testf[t + 1]$ and $\lk{t}(\term{t} \testf[t + 1] + \testfp[t + 1])$ belong to $\bmf{\alg{X}}$, the induction hypothesis (ii) implies that for all $\varepsilon \in \rsetpos$,
\begin{multline*}
    \prob \left( \left| \sum_{i = 1}^\N \frac{\wgt{t}{i}}{\wgtsum{t}} \{ \tstat[i]{t} \lk{t} \testf[t + 1](\epart{t}{i}) + \lk{t}(\addf{t} \testf[t + 1] + \testfp[t + 1])(\epart{t}{i}) \} - \post{t}\{\tstat{t} \af{t} \lk{t} \testf[t + 1] + \lk{t}(\addf{t} \testf[t + 1] + \testfp[t + 1]) \} \right| \geq \varepsilon \right) \\
    \leq c_t \exp(- \tilde{c}_t \N \varepsilon^2) \eqsp. 
\end{multline*}
In addition, by \autoref{lemma:property-2}, $\{ \wgt{t + 1}{i} ( \tstat[i]{t + 1} \testf[t + 1](\epart{t + 1}{i}) + \testfp[t + 1](\epart{t + 1}{i})) \}_{i = 1}^\N$ are conditionally i.i.d. given $\partfiltbar{t}$; thus, since for all $i \in \intvect{1}{\N}$,
$$
    |\wgt{t + 1}{i} \{\tstat[i]{t + 1} \testf[t + 1](\epart{t + 1}{i}) + \testfp[t + 1](\epart{t + 1}{i}) \} |\leq \supn{\md{t + 1}} ( \supn{\af{t + 1}} \supn{\testf[t + 1]} + \supn{\testfp[t + 1]} ) < \infty, 
$$
the conditional Hoeffding inequality provides constants $(d, \tilde{d}) \in (\rsetpos)^2$ such that 
\begin{multline*}
    \prob \left( \left| \frac{1}{\N} \sum_{i = 1}^\N \wgt{t + 1}{i} \{\tstat[i]{t + 1} \testf[t + 1](\epart{t + 1}{i}) + \testfp[t + 1](\epart{t+1}{i}) \} - \E\left[ \wgt{t + 1}{1} \{\tstat[1]{t + 1} \testf[t + 1] (\epart{t + 1}{1}) + \testfp[t + 1](\epart{t + 1}{1}) \} \mid \partfiltbar{t} \right] \right| \geq \varepsilon \right) \\
    \leq d \exp(- \tilde{d} \N \varepsilon^2) \eqsp. 
\end{multline*}
This establishes (i). 

The inequality (ii) for the self-normalized estimator is an immediate consequence of (i) and the generalized Hoeffding inequality in \cite[Lemma~4]{douc:garivier:moulines:olsson:2010}. 

Finally, we conclude the proof by checking that the result is straightforwardly true for the base case $t = 0$, since $\tstat{0} \af{0} = 0$, $\tstat[i]{0} = 0$ for all $i \in \intvect{1}{\N}$, and the weighted sample $\{ (\epart{0}{i}, \wgt{0}{i}) \}_{i = 1}^\N$ (targeting $\post{0}$) is generated by standard importance sampling.  


\subsection{Proof of \autoref{thm:CLT:affine}}

We proceed by induction and suppose that the claim of the theorem holds true for some $t \in \nset$. Thus, pick $(\testf[t + 1], \testfp[t + 1]) \in \bmf{\alg{X}}^2$ and assume first that $\post{t + 1}(\tstat{t + 1} \af{t + 1} \testf[t + 1] + \testfp[t + 1]) = 0$. Write
\begin{multline*}
    \sqrt{\N} \sum_{i = 1}^{\N} \frac{\wgt{t}{i}}{\wgtsum{t}} \{\tstat[i]{t + 1} \testf[t + 1](\epart{t + 1}{i}) + \testfp[t + 1](\epart{t + 1}{i}) \} = \N \wgtsum{t}^{-1} \frac{1}{\sqrt{\N}} \sum_{i = 1}^{\N} \left( \vphantom{\sum_{\ell = 1}^\N} \wgt{t + 1}{i} \{\tstat[i]{t + 1} \testf[t + 1](\epart{t + 1}{i}) + \testfp[t + 1](\epart{t + 1}{i}) \} \right. \\
    \left. - \sum_{\ell = 1}^\N \frac{\wgt{t}{\ell}}{\wgtsum{t}}\{\tstat[\ell]{t} \lk{t} \testf[t + 1](\epart{t + 1}{\ell}) + \lk{t}(\term{t} \testf[t + 1] + \testfp[t + 1])(\epart{t + 1}{\ell}) \} \right) \\
    + \N \wgtsum{t}^{-1} \sqrt{\N} \sum_{\ell = 1}^\N \frac{\wgt{t}{\ell}}{\wgtsum{t}}\{\tstat[\ell]{t} \lk{t} \testf[t + 1](\epart{t + 1}{\ell}) + \lk{t}(\term{t} \testf[t + 1] + \testfp[t + 1])(\epart{t + 1}{\ell}) \} \eqsp,  
\end{multline*}
where, by \autoref{thm:hoeffding:affine}, $\N \wgtsum{t}^{-1}$ tends to $(\post{t} \lk{t} \1{\set{X}})^{-1}$ in probability. In order to establish the weak convergence of the first term, we will apply \autoref{thm:DM:A-3} to the triangular array 
$$
    \arr{i} \eqdef \frac{1}{\K \sqrt{\N}} \sum_{j = 1}^{\K} \arrterm{\epart{t + 1}{i}}{\bi{t + 1}{i}{j}} \quad (i \in \intvect{1}{\N}, \N \in \nsetpos) \eqsp, 
$$
where 
\begin{multline} \label{eq:def:upsilon:tilde}
    \arrterm{x}{j} \eqdef \md{t + 1}(x) \{(\tstat[j]{t} + \term{t}(\epart{t}{j}, x)) \testf[t + 1](x) + \testfp[t + 1](x)\} \\- \sum_{\ell = 1}^\N \frac{\wgt{t}{\ell}}{\wgtsum{t}}\{\tstat[\ell]{t} \lk{t} \testf[t + 1](\epart{t}{\ell}) + \lk{t}(\term{t} \testf[t + 1] + \testfp[t + 1])(\epart{t}{\ell}) \}  \quad (x \in \set{X}, j \in \intvect{1}{\K}, \N \in \nsetpos) \eqsp,  
\end{multline}
furnished with the filtration $\{\partfiltbar{t} \}_{\N \in \nsetpos}$. Note that for all $i \in \intvect{1}{\N}$, $\cexp[txt]{\arr{i}}{\partfiltbar{t}} = 0$ (by \autoref{lemma:property-2}) and $|\arr{i}| \leq 2 \supn{\md{t + 1}}(\supn{\af{t + 1}} \supn{\testf[t + 1]} + \supn{\testfp[t + 1]}) / \sqrt{\N}$. To check the condition \B{ass:DM:norm:asymptotic:variance} in \autoref{thm:DM:A-3}, write, using, first, that $\{\arr{i} \}_{i = 1}$ are conditionally i.i.d. given $\partfiltbar{t}$ and, second, that the backward indices $\{\bi{t + 1}{i}{j} \}_{j = 1}^{\K}$ are, for all $i \in \intvect{1}{\N}$, i.i.d. conditionally on $\partfiltbar{t}$ and $\epart{t + 1}{i}$,
\begin{align}
    \sum_{i = 1}^{\N} \E \left[ \arr[2]{i} \mid \partfiltbar{t} \right] &= \K[-2] \E \left[ \left( \sum_{j = 1}^{\K} \arrterm{\epart{t + 1}{1}}{\bi{t + 1}{1}{j}} \right)^2 \mid \partfiltbar{t} \right] \nonumber \\
    &= \K[-1] \E\left[ \E \left[ \arrterm[2]{\epart{t + 1}{1}}{\bi{t + 1}{1}{1}}  \mid \partfiltbar{t} \vee \partfilt{t + 1} \right] \mid \partfiltbar{t} \right] \label{CLT:decomp:var-1} \\
    &\quad + \K[-1](\K - 1) \E \left[ \E^2 \left[ \arrterm{\epart{t + 1}{1}}{\bi{t + 1}{1}{1}} \mid \partfiltbar{t} \vee \partfilt{t + 1} \right] \mid \partfiltbar{t} \right] \eqsp. \label{CLT:decomp:var-2}
\end{align}
We treat separately the two terms \eqref{CLT:decomp:var-1} and \eqref{CLT:decomp:var-2}. Concerning \eqref{CLT:decomp:var-1},
\[
    \begin{split} 
        \E\left[ \E \left[ \arrterm[2]{\epart{t + 1}{1}}{\bi{t + 1}{1}{1}}  \mid \partfiltbar{t} \vee \partfilt{t + 1} \right] \mid \partfiltbar{t} \right] &= \E \left[ \sum_{\ell = 1}^{\N} \arrterm[2]{\epart{t + 1}{1}}{\ell} \frac{\wgt{t}{\ell} \hd(\epart{t}{\ell}, \epart{t + 1}{1})}{\sum_{\ell' = 1}^{\N} \wgt{t}{\ell'} \hd(\epart{t}{\ell'}, \epart{t + 1}{1})} \mid \partfiltbar{t} \right] \\
         &= \sum_{i = 1}^{\N} \frac{\wgt{t}{i}}{\wgtsum{t}} \int \hd(\epart{t}{i}, x) \sum_{\ell = 1}^{\N} \arrterm[2]{x}{\ell} \frac{\wgt{t}{\ell} \hd(\epart{t}{\ell}, x)}{\sum_{\ell' = 1}^{\N} \wgt{t}{\ell'} \hd(\epart{t}{\ell'}, x)} \, \refM(\rmd x) \\
         &= \sum_{\ell = 1}^{\N} \frac{\wgt{t}{\ell}}{\wgtsum{t}} \int \hd(\epart{t}{\ell}, x) \arrterm[2]{x}{\ell} \, \refM(\rmd x) \eqsp.
    \end{split}
\]
Now, using the definition \eqref{eq:def:upsilon:tilde}, 
\begin{multline*}
    \sum_{\ell = 1}^{\N} \frac{\wgt{t}{\ell}}{\wgtsum{t}} \int \hd(\epart{t}{\ell}, x) \arrterm[2]{x}{\ell} \, \refM(\rmd x) = \sum_{\ell = 1}^{\N} \frac{\wgt{t}{\ell}}{\wgtsum{t}} (\tstat[\ell]{t})^2 \lk{t}(\md{t + 1} \testf[t + 1]^2)(\epart{t}{\ell}) \\ 
    + \sum_{\ell = 1}^{\N} \frac{\wgt{t}{\ell}}{\wgtsum{t}} \left( \tstat[\ell]{t} 2 \lk{t} \{\md{t + 1} \testf[t + 1]( \term{t} \testf[t + 1] + \testfp[t + 1]) \}(\epart{t}{\ell}) + \lk{t}\{\md{t + 1}(\term{t} \testf[t + 1] + \testfp[t + 1])^2 \}(\epart{t}{\ell}) \right) \\
    - \left( \sum_{\ell = 1}^\N \frac{\wgt{t}{\ell}}{\wgtsum{t}}\{\tstat[\ell]{t} \lk{t} \testf[t + 1](\epart{t}{\ell}) + \lk{t}(\term{t} \testf[t + 1] + \testfp[t + 1])(\epart{t}{\ell}) \} \right)^2 \eqsp. 
\end{multline*}
In the previous expression, by \autoref{lemma:T2:conv},
$$
    \sum_{\ell = 1}^{\N} \frac{\wgt{t}{\ell}}{\wgtsum{t}} (\tstat[\ell]{t})^2 \lk{t}(\md{t + 1} \testf[t + 1]^2)(\epart{t}{\ell}) \convp \post{t}\{\tstat{t}^2 \af{t} \lk{t}(\md{t + 1} \testf[t + 1]^2) \} + \eta_t\{\lk{t}(\md{t + 1} \testf[t + 1]^2) \} \eqsp, 
$$
where the functional $\eta_t$ is defined in \eqref{eq:def:covterm}, and, by \autoref{thm:hoeffding} and \autoref{lemma:property-1} (recalling that $\post{t} \lk{t}(\tstat{t + 1} \testf[t + 1] + \testfp[t + 1]) = 0$ by assumption),
\begin{align}
    &\sum_{\ell = 1}^{\N} \frac{\wgt{t}{\ell}}{\wgtsum{t}} \left( \tstat[\ell]{t} 2 \lk{t} \{\md{t + 1} \testf[t + 1]( \term{t} \testf[t + 1] + \testfp[t + 1]) \}(\epart{t}{\ell}) + \lk{t}\{\md{t + 1}(\term{t} \testf[t + 1] + \testfp[t + 1])^2 \}(\epart{t}{\ell}) \right) \nonumber \\
    &\hspace{25mm} \convp 2 \post{t}(\tstat{t} \af{t} \lk{t} \{\md{t + 1} \testf[t + 1](\term{t} \testf[t + 1] + \testfp[t + 1])\})+ \post{t} \lk{t}\{\md{t + 1}(\term{t} \testf[t + 1] + \testfp[t + 1])^2 \} \eqsp, \nonumber \\
    &\sum_{\ell = 1}^\N \frac{\wgt{t}{\ell}}{\wgtsum{t}}\{\tstat[\ell]{t} \lk{t} \testf[t + 1](\epart{t}{\ell}) + \lk{t}(\term{t} \testf[t + 1] + \testfp[t + 1])(\epart{t}{\ell}) \} \convp 0 \eqsp. \label{eq:zero:convergence:term}
\end{align}
We hence conclude that
\begin{multline} \label{eq:as:var:first:term:limit}
    \eqref{CLT:decomp:var-1} \convp \K[-1] \left( \vphantom{ 2 \post{t}(\tstat{t} \af{t} \lk{t} \{\md{t + 1} \testf[t + 1](\term{t} \testf[t + 1] + \testfp[t + 1])\})+ \post{t} \lk{t}\{\md{t + 1}(\term{t} \testf[t + 1] + \testfp[t + 1])^2 \}} \post{t}\{\tstat{t}^2 \af{t} \lk{t}(\md{t + 1} \testf[t + 1]^2) \} \right. + \eta_t\{\lk{t}(\md{t + 1} \testf[t + 1]^2) \} \\
    \left. + 2 \post{t}(\tstat{t} \af{t} \lk{t} \{\md{t + 1} \testf[t + 1](\term{t} \testf[t + 1] + \testfp[t + 1])\})+ \post{t} \lk{t}\{\md{t + 1}(\term{t} \testf[t + 1] + \testfp[t + 1])^2 \} \right) \\
    = \K[-1] \left( \post{t} \lk{t}( \md{t + 1} \{ (\tstat{t} \af{t} + \term{t} ) \testf[t + 1] + \testfp[t + 1] \}^2)+ \eta_t \{\lk{t}(\md{t + 1} \testf[t + 1]^2 ) \} \right) \eqsp. 
\end{multline}

We turn to \eqref{CLT:decomp:var-2} and write 
\[
    \begin{split}
        \lefteqn{\E \left[ \E^2 \left[ \arrterm{\epart{t + 1}{1}}{\bi{t + 1}{1}{1}} \mid \partfiltbar{t} \vee \partfilt{t + 1} \right] \mid \partfiltbar{t} \right]} \hspace{20mm} \\ 
        &= \E \left[ \left( \sum_{\ell = 1}^{\N} \arrterm{\epart{t + 1}{1}}{\ell} \frac{\wgt{t}{\ell} \hd(\epart{t}{\ell}, \epart{t + 1}{1})}{\sum_{\ell' = 1}^{\N} \wgt{t}{\ell'} \hd(\epart{t}{\ell'}, \epart{t + 1}{1})} \right)^2 \mid \partfiltbar{t} \right] \\
                 &= \sum_{i = 1}^{\N} \frac{\wgt{t}{i}}{\wgtsum{t}} \int \hd(\epart{t}{i}, x) \left( \sum_{\ell = 1}^{\N} \arrterm{x}{\ell} \frac{\wgt{t}{\ell} \hd(\epart{t}{\ell}, x)}{\sum_{\ell' = 1}^{\N} \wgt{t}{\ell'} \hd(\epart{t}{\ell'}, x)} \right)^2 \, \refM(\rmd x) \eqsp,
    \end{split}
\]
where we note that  the right hand side can, by \eqref{eq:def:upsilon:tilde}, be written as $\post[part]{t} \hk \lebfun[\N]$, with
\begin{multline*}
    \lebfun[\N](x) \eqdef \left( \md{t + 1}(x) \testf[t + 1](x) \sum_{\ell = 1}^{\N} \frac{\wgt{t}{\ell} \hd(\epart{t}{\ell}, x)}{\sum_{\ell' = 1}^{\N} \wgt{t}{\ell'} \hd(\epart{t}{\ell'}, x)} \{ \tstat[\ell]{t} + \term{t}(\epart{t}{\ell}, x) \} + \md{t + 1}(x) \testfp[t + 1](x) \right. \\
    \left. - \sum_{\ell' = 1}^\N \frac{\wgt{t}{\ell'}}{\wgtsum{t}}\{\tstat[\ell']{t} \lk{t} \testf[t + 1](\epart{t}{\ell'}) + \lk{t}(\term{t} \testf[t + 1] + \testfp[t + 1])(\epart{t}{\ell'}) \} \right)^2 \quad (x \in \set{X}) \eqsp. 
\end{multline*}
Note that $\supn{\lebfun[\N]} \leq 4 \supn{\md{t + 1}}^2 (\supn{\af{t + 1}} \supn{\testf[t + 1]} + \supn{\testfp[t + 1]})^2$ for all $\N \in \nset$. Moreover, by \autoref{thm:hoeffding} (and, in particular, the implication \eqref{eq:zero:convergence:term}) it holds, for all $x \in \set{X}$, $\prob$-a.s., 
\[
\begin{split}
    \lebfun[\N](x) \rightarrow& \hphantom{\ }\md{t + 1}^2(x) \left( \testf[t + 1](x) \frac{\int \hd(\tilde{x}, x) \{\tstat{t} \af{t} (\tilde{x}) + \term{t}(\tilde{x}, x) \} \, \post{t}(\rmd \tilde{x})}{\int \hd(\tilde{x}, x) \, \post{t}(\rmd \tilde{x})} + \testfp[t + 1](x) \right)^2 \\
            &= \md{t + 1}^2(x) \{\testf[t + 1](x) \bk{\post{t}}(\tstat{t} \af{t} + \term{t})(x) + \testfp[t + 1](x) \}^2 \\
            &= \md{t + 1}^2(x) \{\testf[t + 1](x) \tstat{t + 1} \af{t + 1}(x) + \testfp[t + 1](x) \}^2 \eqsp. 
\end{split}
\]
Thus, under \autoref{ass:boundedness:g:q} we may apply \autoref{lemma:generalized:lebesgue}, yielding 
$$
    \post[part]{t} \hk \lebfun[\N] \convp \post{t} \lk{t}\{\md{t + 1}(\testf[t + 1] \tstat{t + 1} \af{t + 1} + \testfp[t + 1])^2\} \eqsp,
$$
and we may hence conclude that 
$$
    \eqref{CLT:decomp:var-2} \convp \K^{-1}(\K - 1) \post{t} \lk{t} \{ \md{t + 1} (\testf[t + 1] \tstat{t + 1} \af{t + 1} + \testfp[t + 1] )^2 \} \eqsp.
$$
Finally, by combining this limit with \eqref{eq:as:var:first:term:limit} we obtain, 
using the identity
\[
    \begin{split}
        \lefteqn{\post{t} \lk{t}( \md{t + 1} \{ (\tstat{t} \af{t} + \term{t} ) \testf[t + 1] + \testfp[t + 1] \}^2)- \post{t} \lk{t}\{\md{t + 1} (\testf[t + 1] \tstat{t + 1} \af{t + 1} + \testfp[t + 1])^2\}} \hspace{10mm} \\
        &= \post{t} \varotimes \hk( \md{t + 1}^2 \{ (\tstat{t} \af{t} + \term{t}) \testf[t + 1] + \testfp[t + 1] \}^2 - \md{t + 1}^2 \{ \testf[t + 1] \tstat{t + 1} \af{t + 1} + \testfp[t + 1] \}^2) \\
        &= \post{t} \hk \varotimes \bk{\post{t}} ( \md{t + 1}^2 \{ (\tstat{t} \af{t} + \term{t}) \testf[t + 1] + \testfp[t + 1] \}^2 - \md{t + 1}^2 \{ \testf[t + 1] \tstat{t + 1} \af{t + 1} + \testfp[t + 1] \}^2) \\
        &= \post{t} \hk \varotimes \bk{\post{t}} \{ \md{t + 1}^2 \testf[t + 1]^2 ( \tstat{t} \af{t} + \term{t} - \tstat{t + 1} \af{t + 1} )^2 \} \\
        &= \post{t} \lk{t} \bk{\post{t}} \{ \md{t + 1} \testf[t + 1]^2 ( \tstat{t} \af{t} + \term{t} - \tstat{t + 1} \af{t + 1} )^2 \} \eqsp,
    \end{split}
\]
the convergence
\begin{multline} \label{eq:var:conv-1}
    \sum_{i = 1}^{\N} \E \left[ \arr[2]{i} \mid \partfiltbar{t} \right] \convp \post{t} \lk{t} \{\md{t + 1} (\testf[t + 1] \tstat{t + 1} \af{t + 1} + \testfp[t + 1] )^2 \} \\ 
    + \K^{-1} \left( \post{t} \lk{t} \bk{\post{t}} \{\md{t + 1} \testf[t + 1]^2 ( \tstat{t} \af{t} + \term{t} - \tstat{t + 1} \af{t + 1} )^2 \} + \eta_t\{\lk{t}(\md{t + 1} \testf[t + 1]^2) \}\right) \eqsp,
\end{multline}
which verifies \B{ass:DM:norm:asymptotic:variance}. In order to check also the condition \B{ass:DM:norm:Lindeberg}, write, for $\varepsilon \in \rsetpos$,
\begin{multline*}
        \sum_{i = 1}^{\N} \cexp{\arr[2]{i} \1{\{|\arr{i}| \geq \varepsilon \}}}{\partfiltbar{t}} \leq 4 \supn{\md{t + 1}}^2 (\supn{\af{t + 1}} \supn{\testf[t + 1]} + \supn{\testfp[t + 1]})^2 \\
        \times \1{\{2 \supn{\md{t + 1}}(\supn{\af{t + 1}} \supn{\testf[t + 1]} + \supn{\testfp[t + 1]}) \geq \varepsilon \sqrt{\N} \}} \eqsp,
\end{multline*}
where the indicator function on the right hand side is zero for $\N$ large enough. This shows the condition \B{ass:DM:norm:Lindeberg}. Hence, for general $(\testf[t + 1], \testfp[t + 1])$ (by just replacing $\testfp[t + 1]$ by $\testfp[t + 1] - \post{t + 1}(\tstat{t + 1} \af{t + 1} \testf[t + 1] + \testfp[t + 1])$), by \autoref{thm:DM:A-3}, \cite[Lemma~A.5]{verge:delmoral:moulines:olsson:2014}, and Slutsky's lemma, 
$$
    \sqrt{\N} \sum_{i = 1}^{\N} \frac{\wgt{t + 1}{i}}{\wgtsum{t + 1}} \{\tstat[i]{t + 1} \testf[t + 1](\epart{t + 1}{i}) + \testfp[t + 1](\epart{t + 1}{i}) - \post{t + 1}(\tstat{t + 1} \af{t + 1} \testf[t + 1] + \testfp[t + 1]) \} \convd \asvar{t + 1}{\testf[t + 1]}{\testfp[t + 1]} Z \eqsp,
$$
where $Z$ is a standard Gaussian variable and 
\begin{multline} \label{eq:CLT:recursive:variance}
    \asvar[2]{t + 1}{\testf[t + 1]}{\testfp[t + 1]} \eqdef \frac{\post{t} \lk{t} (\md{t + 1} \{\testf[t + 1] \tstat{t + 1} \af{t + 1} + \testfp[t + 1] - \post{t + 1}(\tstat{t + 1} \af{t + 1} \testf[t + 1] + \testfp[t + 1]) \}^2 )}{(\post{t} \lk{t} \1{\set{X}})^2} \\ 
     + \sum_{\ell = 0}^t \K^{\ell - (t + 1)} \frac{\post{\ell} \lk{\ell} \{ \bk{\post{\ell}} ( \tstat{\ell} \af{\ell} + \term{\ell} - \tstat{\ell + 1} \af{\ell + 1} )^2 \lk{\ell + 1} \cdots \lk{t} (\md{t + 1} \testf[t + 1]^2) \}}{(\post{\ell} \lk{\ell} \cdots \lk{t - 1} \1{\set{X}} )( \post{t} \lk{t} \1{\set{X}} )^2} \\
      + \frac{\sigma^2_t (\lk{t} \testf[t + 1], \lk{t} \{ \term{t + 1} \testf[t + 1] + \testfp[t + 1] - \post{t + 1}(\tstat{t + 1} \af{t + 1} \testf[t + 1] + \testfp[t + 1]) \} )}{(\post{t} \lk{t} \1{\set{X}})^2} \eqsp. 
\end{multline}
We now apply the induction hypothesis to the last term. For this purpose, note that, by \autoref{lemma:property-1},
\begin{multline*}
    \af{t} \lk{t} \testf[t + 1] + \lk{t} \{ \term{t + 1} \testf[t + 1] + \testfp[t + 1] - \post{t + 1}(\tstat{t + 1} \af{t + 1} \testf[t + 1] + \testfp[t + 1]) \} \\
    - \post{t}( \af{t} \lk{t} \testf[t + 1] + \lk{t} \{ \term{t + 1} \testf[t + 1] + \testfp[t + 1] - \post{t + 1}(\tstat{t + 1} \af{t + 1} \testf[t + 1] + \testfp[t + 1]) \} ) \\
    = \lk{t} \{\af{t + 1}  \testf[t + 1] + \testfp[t + 1] - \post{t + 1}( \tstat{t + 1} \af{t + 1} \testf[t + 1] + \testfp[t + 1] ) \} \eqsp,
\end{multline*}
yielding, for all $\nset \ni s < t$, 
\[ 
    \begin{split}
        \lefteqn{\BFcent{s + 1}{t}( \af{t} \lk{t} \testf[t + 1] + \lk{t} \{ \term{t + 1} \testf[t + 1] + \testfp[t + 1] - \post{t + 1}(\tstat{t + 1} \af{t + 1} \testf[t + 1] + \testfp[t + 1]) \} )} \hspace{10mm} \\
        &= \BF{s + 1}{t} \lk{t} \{ \af{t + 1}  \testf[t + 1] + \testfp[t + 1] - \post{t + 1}(\tstat{t + 1} \af{t + 1} \testf[t + 1] + \testfp[t + 1]) \} \\
        &= \BFcent{s + 1}{t + 1}( \af{t + 1}  \testf[t + 1] + \testfp[t + 1] ) \eqsp.
    \end{split}
\]
We may hence conclude that 
\begin{multline*}
    \frac{\sigma^2_t (\lk{t} \testf[t + 1], \lk{t} \{ \term{t + 1} \testf[t + 1] + \testfp[t + 1] - \post{t + 1}(\tstat{t + 1} \af{t + 1} \testf[t + 1] + \testfp[t + 1]) \} )}{(\post{t} \lk{t} \1{\set{X}})^2} \\
    = \sum_{s = 0}^{t - 1} \frac{\post{s} \lk{s} \{\md{s + 1} \BFcent[2]{s + 1}{t + 1} ( \af{t + 1} \testf[t + 1] + \testfp[t + 1] ) \}}{(\post{s} \lk{s} \cdots \lk{t} \1{\set{X}})^2} \\
    + \sum_{s = 0}^{t - 1} \sum_{\ell = 0}^s \K^{\ell - (s + 1)} \frac{\post{\ell} \lk{\ell} \{ \bk{\post{\ell}} ( \tstat{\ell} \af{\ell} + \term{\ell} - \tstat{\ell + 1} \af{\ell + 1} )^2 \lk{\ell + 1} \cdots \lk{s} (\md{s + 1} \{\lk{s + 1} \cdots \lk{t} \testf[t + 1] \}^2 ) \}}{(\post{\ell} \lk{\ell} \cdots \lk{s - 1} \1{\set{X}}) (\post{s} \lk{s} \cdots \lk{t} \1{\set{X}})^2} \eqsp.
\end{multline*}
Finally, we complete the induction step by noting that 
\begin{multline*}
     \frac{\post{t} \lk{t} (\md{t + 1} \{\testf[t + 1] \tstat{t + 1} \af{t + 1} + \testfp[t + 1] - \post{t + 1}(\tstat{t + 1} \af{t + 1} \testf[t + 1] + \testfp[t + 1]) \}^2 )}{(\post{t} \lk{t} \1{\set{X}})^2} \\
    = \frac{\post{t} \lk{t} \{\md{t + 1} \BFcent[2]{t + 1}{t + 1} ( \af{t + 1} \testf[t + 1] + \testfp[t + 1] ) \}}{(\post{t} \lk{t} \1{\set{X}})^2} \eqsp.
\end{multline*}

It remains to check the base case; however, letting, in \eqref{eq:CLT:recursive:variance}, $t = 0$ and $\sigma^2_0 \equiv 0$ (as $\tstat{0} \af{0} = 0$ and $\tstat[i]{0} = 0$ for all $i \in \intvect{1}{\N}$) yields 
\[
    \begin{split}
        \asvar[2]{1}{\testf[1]}{\testfp[1]} &= \frac{\post{0} \lk{0} (\md{1} \{\testf[1] \tstat{1} \af{1} + \testfp[1] - \post{1}(\tstat{1} \af{1} \testf[1] + \testfp[1]) \}^2 )}{(\post{0} \lk{0} \1{\set{X}})^2} + \K^{-1} \frac{\post{0} \lk{0} \{ \bk{\post{0}} ( \term{0} - \tstat{1} \af{1} )^2 \md{1} \testf[1]^2 \}}{(\post{0} \lk{0} \1{\set{X}})^2} \\
        &=  \frac{\post{0} \lk{0} \{\md{1} \BFcent{1}{1}(\af{1} \testf[1] + \testfp[1] )^2 \}}{(\post{0} \lk{0} \1{\set{X}})^2} +  \K^{-1} \frac{\post{0} \lk{0} \{ \bk{\post{0}} ( \term{0} - \tstat{1} \af{1} )^2 \md{1} \testf[1]^2 \}}{(\post{0} \lk{0} \1{\set{X}})^2}
    \end{split}
\]
which is, under the standard convention that $\lk{m} \lk{n} = \operatorname{id}$ if $m > n$, in agreement with \eqref{eq:asvar:closed:form}. This completes the proof.


\begin{lemma} \label{lemma:T2:conv}
    Let \autoref{ass:boundedness:g:q} hold. Then for all $t \in \nset$, $\testf[t] \in \bmf{\alg{X}}$, and $\K \in \nsetpos$,  
    $$
        \sum_{i = 1}^\N \frac{\wgt{t}{i}}{\wgtsum{t}}(\tstat[i]{t})^2 \testf[t](\epart{t}{i}) \convp \post{t}(\tstat{t}^2 \af{t} \testf[t]) + \covterm{t}{\testf[t]} \eqsp, 
    $$
    where
    \begin{equation} \label{eq:def:covterm}
        \covterm{t}{\testf[t]} \eqdef \sum_{\ell = 0}^{t - 1} \K^{\ell - t} \frac{\post{\ell} \lk{\ell} \{ \bk{\post{\ell}} ( \tstat{\ell} \af{\ell} + \term{\ell} - \tstat{\ell + 1} \af{\ell + 1} )^2 \lk{\ell + 1} \cdots \lk{t - 1} \testf[t] \}}{\post{\ell} \lk{\ell} \cdots \lk{t - 1} \1{\set{X}}} \eqsp.
    \end{equation}
\end{lemma}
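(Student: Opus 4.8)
The plan is to prove the statement by induction on $t$, the base case $t=0$ being immediate: $\tstat[i]{0}=0$ for every $i$, $\tstat{0}\af{0}=0$, and $\covterm{0}{\testf[0]}=0$ as an empty sum. So assume the claim holds at time $t$ for all bounded test functions and pass to $t+1$.

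First I would average out the backward draws. Since, conditionally on $\partfiltbar{t}\vee\partfilt{t+1}$, the index families $\{\bi{t+1}{i}{j}\}_{j=1}^\K$ are independent over $i$ and the summands are uniformly bounded (the statistics being bounded by $\supn{\af{t+1}}$ for fixed $t+1$), a conditional Chebyshev estimate---using that $\sum_i(\wgt{t+1}{i}/\wgtsum{t+1})^2$ is of order $1/\N$ because $\md{t+1}$ is bounded and $\wgtsum{t+1}$ is of order $\N$---shows that the self-normalized sum has the same probability limit as its $\partfiltbar{t}\vee\partfilt{t+1}$-conditional expectation. Writing
\[
    \bar H^i\eqdef\sum_{\ell=1}^\N\kernel{\Lambda}_t^\N(i,\ell)\{\tstat[\ell]{t}+\term{t}(\epart{t}{\ell},\epart{t+1}{i})\},\qquad V^i\eqdef\sum_{\ell=1}^\N\kernel{\Lambda}_t^\N(i,\ell)\{\tstat[\ell]{t}+\term{t}(\epart{t}{\ell},\epart{t+1}{i})\}^2,
\]
the fact that the $\K$ draws are i.i.d. gives $\E[(\tstat[i]{t+1})^2\mid\partfiltbar{t}\vee\partfilt{t+1}]=(1-\K^{-1})(\bar H^i)^2+\K^{-1}V^i$, so it suffices to identify the limit of $B_\N\eqdef(1-\K^{-1})B_\N^{(1)}+\K^{-1}B_\N^{(2)}$, where $B_\N^{(1)}$ and $B_\N^{(2)}$ denote the self-normalized particle averages of $(\bar H^i)^2\testf[t+1](\epart{t+1}{i})$ and $V^i\testf[t+1](\epart{t+1}{i})$, respectively.

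The heart of the argument is that these two pieces behave completely differently. The quantity $\bar H^i=G_\N(\epart{t+1}{i})$, where $G_\N(x)=\sum_{\ell}\frac{\wgt{t}{\ell}\hd(\epart{t}{\ell},x)}{\sum_{\ell'}\wgt{t}{\ell'}\hd(\epart{t}{\ell'},x)}\{\tstat[\ell]{t}+\term{t}(\epart{t}{\ell},x)\}$, is a ratio of weighted sums that are \emph{linear} in the statistics $\tstat[\ell]{t}$; hence \autoref{thm:hoeffding} (consistency of $\sum_\ell\wgt{t}{\ell}\tstat[\ell]{t}\psi(\epart{t}{\ell})/\wgtsum{t}$) together with ordinary filter consistency yields $G_\N(x)\to\bk{\post{t}}(\tstat{t}\af{t}+\term{t}(\cdot,x))(x)=\tstat{t+1}\af{t+1}(x)$ pointwise in probability, by \eqref{eqn:tstat:update}. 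Because the inner object is consistent, squaring adds \emph{no} variance, and conditioning on $\partfiltbar{t}$ and invoking a law of large numbers over the conditionally i.i.d. time-$(t+1)$ particles---justified for the $\N$-dependent integrand by the generalized Lebesgue argument of \autoref{lemma:generalized:lebesgue}---gives $B_\N^{(1)}\convp\post{t+1}(\tstat{t+1}^2\af{t+1}\testf[t+1])$. In $B_\N^{(2)}$, by contrast, the squared statistics enter as an \emph{average of squares}. Here I would again condition on $\partfiltbar{t}$, apply the law of large numbers over the time-$(t+1)$ particles, and use the reversibility identity $\post[part]{t}\hk\varotimes\bk{\post[part]{t}}=\post[part]{t}\varotimes\hk$ to collapse the double-particle structure to a single sum over $\ell$; after pulling $\md{t+1}$ through $\lk{t}$ the genuinely quadratic contribution becomes $\sum_\ell\wgt{t}{\ell}(\tstat[\ell]{t})^2\lk{t}\testf[t+1](\epart{t}{\ell})/\wgtsum{t}$, to which the \emph{induction hypothesis} applies and produces $\post{t}(\tstat{t}^2\af{t}\lk{t}\testf[t+1])+\covterm{t}{\lk{t}\testf[t+1]}$, while the remaining mixed and constant terms converge by \autoref{thm:hoeffding}. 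Altogether $B_\N^{(2)}\convp(\post{t}\lk{t}\1{\set{X}})^{-1}\{\post{t}\varotimes\hk(\md{t+1}(\tstat{t}\af{t}+\term{t})^2\testf[t+1])+\covterm{t}{\lk{t}\testf[t+1]}\}$.

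It remains to assemble the pieces. Combining the two limits and using $\post{t+1}(\tstat{t+1}^2\af{t+1}\testf[t+1])=(\post{t}\lk{t}\1{\set{X}})^{-1}\post{t}\lk{t}((\tstat{t+1}\af{t+1})^2\testf[t+1])$, the coefficient of $\post{t+1}(\tstat{t+1}^2\af{t+1}\testf[t+1])$ equals one, and the residual main term is handled by precisely the reversibility-and-variance-decomposition identity used at the end of the proof of \autoref{thm:CLT:affine}, namely
\[
    \post{t}\varotimes\hk(\md{t+1}(\tstat{t}\af{t}+\term{t})^2\testf[t+1])-\post{t}\lk{t}((\tstat{t+1}\af{t+1})^2\testf[t+1])=\post{t}\lk{t}\{\bk{\post{t}}(\tstat{t}\af{t}+\term{t}-\tstat{t+1}\af{t+1})^2\,\testf[t+1]\},
\]
which follows from $\tstat{t+1}\af{t+1}=\bk{\post{t}}(\tstat{t}\af{t}+\term{t})$. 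This shows that $B_\N$ converges in probability to $\post{t+1}(\tstat{t+1}^2\af{t+1}\testf[t+1])$ plus
\[
    \K^{-1}\frac{\post{t}\lk{t}\{\bk{\post{t}}(\tstat{t}\af{t}+\term{t}-\tstat{t+1}\af{t+1})^2\testf[t+1]\}}{\post{t}\lk{t}\1{\set{X}}}+\frac{\K^{-1}}{\post{t}\lk{t}\1{\set{X}}}\covterm{t}{\lk{t}\testf[t+1]},
\]
and splitting off the $\ell=t$ summand of \eqref{eq:def:covterm} and using the telescoping identity $\post{\ell}\lk{\ell}\cdots\lk{t}\1{\set{X}}=(\post{\ell}\lk{\ell}\cdots\lk{t-1}\1{\set{X}})(\post{t}\lk{t}\1{\set{X}})$ shows this sum is exactly $\covterm{t+1}{\testf[t+1]}$, completing the induction. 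The main obstacle is precisely the contrast between the two pieces: one must recognize that the ``square of the backward average'' is asymptotically deterministic and so contributes only the naive term, whereas the ``average of the backward squares'' retains the non-vanishing fluctuation of the statistics and is the unique place where the induction hypothesis injects the propagated correction $\covterm{t}{\lk{t}\testf[t+1]}$; the accompanying normalization bookkeeping needed to recover the precise form of $\covterm{t+1}{}$ is the remaining delicate point.
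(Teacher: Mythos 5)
Your proposal is correct and follows essentially the same route as the paper's proof: induction on $t$, the decomposition of $\E[(\tau_{t+1}^i)^2]$ into the $\K^{-1}$-weighted average-of-squares and the $(1-\K^{-1})$-weighted square-of-the-average, the generalized Lebesgue lemma for the latter, the particle-level reversibility collapse plus the induction hypothesis for the former, and the final reversibility identity and normalization bookkeeping to recover $\covterm{t+1}{\testf[t+1]}$. The only (cosmetic) difference is that you kill the triangular-array fluctuations in two stages (conditional Chebyshev given $\partfiltbar{t}\vee\partfilt{t+1}$, then a law of large numbers over the time-$(t+1)$ particles) whereas the paper conditions once on $\partfiltbar{t}$ and applies its Theorem~\ref{thm:DM:A-1} directly.
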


\begin{proof}
    Again, we proceed by induction. First, the base case $t = 0$ is trivially true since $\tstat{0} \af{0} = 0$, $\tstat[i]{0} = 0$ for all $i \in \intvect{1}{\N}$, and $\sum_{\ell = 0}^{-1} = 0$ by convention. We now assume that the claim of the lemma holds true for some $t \in \nset$. Since \autoref{thm:hoeffding} implies that $\N^{-1} \wgtsum{t + 1} \convp \post{t} \lk{t} \1{\set{X}}$ it is enough to study the convergence of $\N^{-1} \sum_{i = 1}^\N \wgt{t + 1}{i}(\tstat[i]{t + 1})^2 \testf[t + 1](\epart{t + 1}{i})$. For this purpose we will apply \autoref{thm:DM:A-1} to the triangular array 
    $$
        \arr{i} \eqdef \N^{-1} \wgt{t + 1}{i} (\tstat[i]{t + 1})^2 \testf[t + 1](\epart{t + 1}{i}) \quad (i \in \intvect{1}{\N}, \N \in \nsetpos)  
    $$
    furnished with the filtration $\{\partfiltbar{t} \}_{\N \in \nsetpos}$. Note that $|\arr{i}| \leq \supn{\md{t + 1}} \supn{\af{t + 1}}^2 \supn{\testf[t + 1]} / \N$ for all $i \in \intvect{1}{\N}$ and $\N \in \nsetpos$. In addition, using, first, that $\{\arr{i} \}_{i = 1}^{\N}$ are conditionally i.i.d given $\partfiltbar{t}$ and, second, that for all $i \in \intvect{1}{\N}$, the backward indices $\{\bi{t + 1}{i}{j} \}_{j = 1}^{\K}$are conditionally i.i.d. given $\partfiltbar{t}$ and $\epart{t + 1}{i}$, 
    \begin{align}
        \lefteqn{\sum_{i = 1}^{\N} \E \left[\arr{i} \mid \partfiltbar{t} \right]} \nonumber \\
        &= \E \left[ \wgt{t + 1}{1} (\tstat[1]{t + 1})^2 \testf[t + 1](\epart{t + 1}{1}) \mid \partfiltbar{t} \right] \nonumber \\
        &= \K^{-1} \E \left[ \wgt{t + 1}{1} \testf[t + 1](\epart{t + 1}{1}) \E \left[ \left( \tstat[\bi{t + 1}{1}{1}]{t} + \term{t}(\epart{t}{\bi{t + 1}{1}{1}}, \epart{t + 1}{1}) \right)^2 \mid \partfiltbar{t} \vee \partfilt{t + 1} \right] \mid \partfiltbar{t} \right] \label{eq:T2-decomp-1} \\
        &\quad + \K^{-1} (\K - 1) \E \left[ \wgt{t + 1}{1} \testf[t + 1](\epart{t + 1}{1}) \E^2 \left[ \tstat[\bi{t + 1}{1}{1}]{t} + \addf{t}(\epart{t}{\bi{t + 1}{1}{1}}, \epart{t + 1}{1}) \mid \partfiltbar{t} \vee \partfilt{t + 1} \right] \mid \partfiltbar{t} \right] \eqsp. \label{eq:T2-decomp-2}
    \end{align}
    We treat separately the two terms \eqref{eq:T2-decomp-1} and \eqref{eq:T2-decomp-2}. First,  
    \[
    \begin{split}
        \eqref{eq:T2-decomp-1} &=  \K^{-1} \E \left[ \wgt{t + 1}{1} \testf[t + 1](\epart{t + 1}{1}) \sum_{\ell = 1}^\N \frac{\wgt{t}{\ell} \hd(\epart{t}{\ell}, \epart{t + 1}{1})}{\sum_{\ell' = 1}^\N \wgt{t}{\ell'} \hd(\epart{t}{\ell'}, \epart{t + 1}{1})} \{\tstat[\ell]{t} + \term{t}(\epart{t}{\ell}, \epart{t + 1}{1}) \}^2 \mid \partfiltbar{t} \right] \\
    	&= \K^{-1}  \sum_{i = 1}^\N \frac{\wgt{t}{i}}{\wgtsum{t}} \int \hd(\epart{t}{i}, x) \md{t + 1}(x) \testf[t + 1](x) \sum_{\ell = 1}^\N \frac{\wgt{t}{\ell} \hd(\epart{t}{\ell}, x)}{\sum_{\ell' = 1}^\N \wgt{t}{\ell'} \hd(\epart{t}{\ell'}, x)} \{\tstat[\ell]{t} + \term{t}(\epart{t}{\ell}, x) \}^2  \, \refM(\rmd x) \\
    	&= \K^{-1} \sum_{\ell = 1}^\N \frac{\wgt{t}{\ell}}{\wgtsum{t}} \int \hd(\epart{t}{\ell}, x) \, \md{t + 1}(x) \testf[t + 1](x) \{\tstat[\ell]{t} + \term{t}(\epart{t}{\ell}, x) \}^2 \, \refM(\rmd x).
    \end{split}
    \]
    Using \autoref{thm:hoeffding} and the induction hypothesis we obtain the limits 
        \[
        \begin{split}
        	&\sum_{\ell = 1}^\N \frac{\wgt{t}{\ell}}{\wgtsum{t}} (\tstat[\ell]{t})^2 \lk{t}(\epart{t}{\ell}, \testf[t + 1]) \convp \post{t}(\tstat{t}^2 \af{t} \lk{t} \testf[t + 1]) + \covterm{t}{\lk{t} \testf[t + 1]} \eqsp, \\
            	&\sum_{\ell = 1}^\N \frac{\wgt{t}{\ell}}{\wgtsum{t}} \tstat[\ell]{t} \lk{t}(\epart{t}{\ell}, \term{t} \testf[t + 1]) \convp \post{t}\{\tstat{t} \af{t} \lk{t}(\term{t} \testf[t + 1]) \} \eqsp, \\
    	&\sum_{\ell = 1}^\N \frac{\wgt{t}{\ell}}{\wgtsum{t}} \lk{t}(\epart{t}{\ell}, \term{t}^2 \testf[t + 1]) \convp \post{t} \lk{t}(\term{t}^2 \testf[t + 1]) \eqsp,
        \end{split}
    \]
    which yield
    \begin{multline*}
        \eqref{eq:T2-decomp-1} \convp \K^{-1} \left( \post{t}(\tstat{t}^2 \af{t} \lk{t} \testf[t + 1]) + \covterm{t}{\lk{t} \testf[t + 1]} + 2 \post{t}\{\tstat{t} \af{t} \lk{t}(\term{t} \testf[t + 1]) \} + \post{t} \lk{t}(\term{t}^2 \testf[t + 1])\right) \\
        = \K^{-1} \left( \post{t} \lk{t}\{(\tstat{t} \af{t} + \term{t})^2 \testf[t + 1] \}+ \covterm{t}{\lk{t} \testf[t + 1]}\right) \eqsp. 
    \end{multline*}
    We turn to the second term \eqref{eq:T2-decomp-2} and equate the same with $\K^{-1}(\K - 1) \post[part]{t} \hk \lebfun[\N]$, where 
    $$
        \lebfun[\N](x) \eqdef \md{t + 1}(x) \testf[t + 1](x) \left( \sum_{\ell = 1}^\N \frac{\wgt{t}{\ell} \hd(\epart{t}{\ell}, x)}{\sum_{\ell' = 1}^\N \wgt{t}{\ell'} \hd(\epart{t}{\ell'}, x)} \{ \tstat[\ell]{t} + \term{t}(\epart{t}{\ell}, x) \} \right)^2 \quad (x \in \set{X}) \eqsp. 
    $$
    Since, $\supn{\lebfun[\N]} \leq \supn{\md{t + 1}} \supn{\testf[t + 1]} \supn{\af{t + 1}}$ for all $\N \in \nset$, and, by \autoref{thm:hoeffding}, for all $x \in \set{X}$, $\prob$-a.s., 
    \[
        \begin{split}
            \lebfun[\N](x) \rightarrow& \hphantom{\ } \md{t + 1}(x) \testf[t + 1](x) \left( \frac{\int \hd(\tilde{x}, x) \{\tstat{t} \af{t} (\tilde{x}) + \term{t}(\tilde{x}, x) \} \, \post{t}(\rmd \tilde{x})}{\int \hd(\tilde{x}, x) \, \post{t}(\rmd \tilde{x})} \right)^2 \\
            &= \md{t + 1}(x) \testf[t + 1](x) \bk[2]{\post{t}}(\tstat{t} \af{t} + \term{t})(x) \\
            &= \md{t + 1}(x) \testf[t + 1](x) \tstat{t + 1}^2 \af{t + 1}(x) \eqsp, 
        \end{split}
    \]
    we may, under \autoref{ass:boundedness:g:q}, apply \autoref{lemma:generalized:lebesgue}, yielding
    $$
        \post[part]{t} \hk \lebfun[\N] \convp \post{t} \hk (\md{t + 1} \testf[t + 1] \tstat{t + 1}^2 \af{t + 1}) = \post{t} \lk{t}(\tstat{t + 1}^2 \af{t + 1} \testf[t + 1]) \eqsp.
    $$
    Consequently,  
    \begin{multline} \label{cond:exp:limit-1}
        \sum_{i = 1}^{\N} \E \left[\arr{i} \mid \partfiltbar{t} \right] \convp \post{t} \lk{t}(\tstat{t + 1}^2 \af{t + 1} \testf[t + 1] ) \\
        + \K^{-1} \left( \post{t} \lk{t}\{(\tstat{t} \af{t} + \term{t})^2 \testf[t + 1] \}- \post{t} \lk{t}(\tstat{t + 1}^2 \af{t + 1} \testf[t + 1] ) + \covterm{t}{\lk{t} \testf[t + 1]}\right) \eqsp. 
    \end{multline}
    In order to show that $\sum_{i = 1}^{\N} \arr{i}$ has the same limit \eqref{cond:exp:limit-1} in probability we use \autoref{thm:DM:A-1}. Condition \A{ass:DM:cons:tightness} is easily checked by reusing \eqref{cond:exp:limit-1} with $\testf[t + 1]$ replaced by $|\testf[t + 1]|$. In order to check \A{ass:DM:cons:Lindeberg} we simply note that for all $\varepsilon \in \rsetpos$,
    $$
        \sum_{i = 1}^{\M[\N]} \cexp{|\arr{i}| \1{\{|\arr{i}| \geq \varepsilon \}}}{\partfiltbar{t}} \leq \supn{\md{t + 1}} \supn{\af{t + 1}}^2 \supn{\testf[t + 1]} \1{ \{\supn{\md{t + 1}} \supn{\af{t + 1}}^2 \supn{\testf[t + 1]} \geq \varepsilon \N \}} \eqsp,
    $$
    where the right hand side is zero for $\N$ large enough. Thus, \autoref{thm:DM:A-1} applies and since, by reversibility, 
    \[
        \begin{split}
            \lefteqn{\post{t} \lk{t}\{(\tstat{t} \af{t} + \term{t})^2 \testf[t + 1] \}- \post{t} \lk{t}(\tstat{t + 1}^2 \af{t + 1} \testf[t + 1])}\hspace{10mm} \\
            &= \post{t} \varotimes \hk ( \{(\tstat{t} \af{t} + \term{t})^2- \tstat{t + 1}^2 \af{t + 1} \}\md{t + 1} \testf[t + 1]) \\
            &= \post{t} \hk \varotimes\bk{\post{t}} \{(\tstat{t} \af{t} + \term{t}- \tstat{t + 1} \af{t + 1} )^2 \md{t + 1} \testf[t + 1] \} \\
            &= \post{t} \lk{t} \{\bk{\post{t}} (\tstat{t} \af{t} + \term{t}- \tstat{t + 1} \af{t + 1} )^2 \testf[t + 1] \} \eqsp,
        \end{split}
    \]
    Slutsky's lemma implies 
    \begin{multline*} 
        \sum_{i = 1}^\N \frac{\wgt{t + 1}{i}}{\wgtsum{t + 1}}(\tstat[i]{t + 1})^2 \testf[t + 1](\epart{t + 1}{i}) = \N \wgtsum{t + 1}^{-1} \sum_{i = 1}^{\N} \arr{i} \convp \post{t + 1}(\tstat{t + 1}^2 \af{t + 1} \testf[t + 1]) \\
        + \frac{1}{\K (\post{t} \lk{t} \1{\set{X}})} \left( \post{t} \lk{t} \{\bk{\post{t}} (\tstat{t} \af{t} + \term{t}- \tstat{t + 1} \af{t + 1} )^2 \testf[t + 1] \} + \covterm{t}{\lk{t} \testf[t + 1]}\right) \eqsp.
    \end{multline*}
    We may now conclude the proof by noting, using the induction hypothesis, the identity 
    $$
        (\post{\ell} \lk{\ell} \cdots \lk{t - 1} \1{\set{X}})(\post{t} \lk{t} \1{\set{X}}) = \post{\ell} \lk{\ell} \cdots \lk{t} \1{\set{X}} \eqsp,
    $$
    and the convention $\lk{t + 1} \lk{t} = \operatorname{id}$, that
    \begin{multline*}
        \frac{1}{\K (\post{t} \lk{t} \1{\set{X}})} \left( \post{t} \lk{t} \{\bk{\post{t}} (\tstat{t} \af{t} + \term{t}- \tstat{t + 1} \af{t + 1} )^2 \testf[t + 1] \} + \covterm{t}{\lk{t} \testf[t + 1]}\right) \\
        = \sum_{\ell = 0}^t \K^{\ell - (t + 1)} \frac{\post{\ell} \lk{\ell} \{ \bk{\post{\ell}} ( \tstat{\ell} \af{\ell} + \term{\ell} - \tstat{\ell + 1} \af{\ell + 1} )^2 \lk{\ell + 1} \cdots \lk{t} \testf[t + 1] \}}{\post{\ell} \lk{\ell} \cdots \lk{t} \1{\set{X}}} \eqsp.
    \end{multline*}
\end{proof}

The following lemma formalizes an argument used in the proof of \cite[Theorem~8]{douc:garivier:moulines:olsson:2010}. 


\begin{lemma} \label{lemma:generalized:lebesgue}
    Let \autoref{ass:boundedness:g:q} hold. Let $\lebker$ be a possibly unnormalized transition kernel on $(\set{X}, \alg{X})$ having transition density $\lebden \in \bmf{\alg{X}^2}$ with respect to some reference measure $\lebref$. Moreover, let $\{\lebfun[\N] \}_{\N \in \nsetpos}$ be a sequence of functions in $\bmf{\alg{X}}$ for which 
    \begin{enumerate}[(i)]
        \item there exists $\lebfun \in \bmf{\alg{X}}$ such that for all $x \in \set{X}$, $\lebfun[\N](x) \rightarrow \lebfun(x)$, $\prob$-a.s., and
        \item there exists $\lebfunbd \in \rsetpos$ such that $\supn{\lebfun[\N]} \leq \lebfunbd$ for all $\N \in \nsetpos$. 
    \end{enumerate}
    Then for all $t \in \nset$, $\post[part]{t} \lebker \lebfun[\N] \convp \post{t} \lebker \lebfun$.
\end{lemma}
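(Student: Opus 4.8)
The plan is to split the error via the triangle inequality and treat the two pieces by different means. Fix $t \in \nset$ and write
$$
\post[part]{t}\lebker\lebfun[\N] - \post{t}\lebker\lebfun = \post[part]{t}\lebker(\lebfun[\N] - \lebfun) + \left( \post[part]{t}\lebker\lebfun - \post{t}\lebker\lebfun \right).
$$
The second term is handled by plain filter consistency: since $\lebker$ maps bounded functions to bounded functions, $\lebker\lebfun \in \bmf{\alg{X}}$, and applying \autoref{thm:hoeffding:affine}(ii) with $\testf[t] \equiv 0$ and $\testfp[t] = \lebker\lebfun$ gives $\post[part]{t}\lebker\lebfun \convp \post{t}\lebker\lebfun$. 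It thus remains to prove $\post[part]{t}\lebker(\lebfun[\N] - \lebfun) \convp 0$, and this is the genuine content of the lemma.

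For this term I would introduce the $\lebref$-densities $p_\N(y) \eqdef \post[part]{t}\lebden(\cdot, y)$ and $p(y) \eqdef \post{t}\lebden(\cdot, y)$ of the random measure $\post[part]{t}\lebker$ and of $\post{t}\lebker$, so that
$$
\left| \post[part]{t}\lebker(\lebfun[\N] - \lebfun) \right| \leq \int p_\N(y)\, |\lebfun[\N](y) - \lebfun(y)|\, \lebref(\rmd y) \eqdef R_\N,
$$
and then show $R_\N \convp 0$ by a \emph{generalized} dominated convergence argument (Pratt's lemma) applied to the integrand $f_\N \eqdef p_\N\,|\lebfun[\N] - \lebfun|$ with dominating sequence $h_\N \eqdef 2\lebfunbd\, p_\N$. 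Three ingredients feed this. First, a Fubini/Tonelli argument upgrades the hypothesis ``for every $x$, $\lebfun[\N](x) \to \lebfun(x)$ $\prob$-a.s.'' (valid since $\lebref$ and $\prob$ are $\sigma$-finite) to ``$\prob$-a.s., $\lebfun[\N] \to \lebfun$ $\lebref$-a.e.''; combined with $p_\N \leq \supn{\lebden}$ this yields $f_\N \to 0$ $\lebref$-a.e. Second, the total masses converge, $\int h_\N\, \lebref = 2\lebfunbd\, \post[part]{t}\lebker\1{\set{X}} \convp 2\lebfunbd\, \post{t}\lebker\1{\set{X}} = \int h\, \lebref$, again by filter consistency applied to the bounded function $\lebker\1{\set{X}}$, where $h \eqdef 2\lebfunbd\, p$. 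Third, one needs the pointwise density convergence $h_\N \to h$ $\lebref$-a.e.

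The main obstacle is precisely this third ingredient. Because $\lebref$ is only $\sigma$-finite, $f_\N$ cannot be dominated by a fixed $\lebref$-integrable function, so the classical dominated convergence theorem is unavailable and Pratt's lemma genuinely requires the a.e.\ convergence $p_\N \to p$. Yet \autoref{thm:hoeffding:affine} (with $\testf[t] \equiv 0$, $\testfp[t] = \lebden(\cdot, y)$) delivers only $p_\N(y) \convp p(y)$ \emph{for each fixed $y$}. I would therefore upgrade this to $\lebref$-a.e.\ $\prob$-a.s.\ convergence \emph{along a subsequence}: writing $\set{X} = \bigcup_m \set{E}_m$ with $\lebref(\set{E}_m) < \infty$, bounded convergence on each finite piece gives $\int_{\set{E}_m}(1 \wedge |p_\N - p|)\, \lebref \convp 0$, and a diagonal extraction over $m$ produces a subsequence along which $p_\N \to p$ $\lebref$-a.e., $\prob$-a.s.; along the same subsequence $\int h_\N\, \lebref \to \int h\, \lebref$ holds a.s.\ as well. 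Pratt's lemma then forces $R_\N \to 0$ a.s.\ along the subsequence, and since the original subsequence was arbitrary, the subsequence principle yields $R_\N \convp 0$. In short, the difficulty lies entirely in the interplay between the random occupation measure $\post[part]{t}$, the randomly $\lebref$-a.e.-convergent functions $\lebfun[\N]$, and the possibly infinite reference measure $\lebref$, which is what turns a would-be routine dominated convergence into the generalized statement of the lemma.
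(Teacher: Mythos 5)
Your argument is correct and ends at the same place as the paper's, but it runs in the opposite order and is correspondingly heavier. The paper likewise splits off $\post[part]{t} \lebker \lebfun$ by \autoref{thm:hoeffding} and reduces to $\post[part]{t} \lebker (\lebfun[\N] - \lebfun) \convp 0$; it then immediately takes expectations: setting $\leba(x) = |\lebfun[\N](x) - \lebfun(x)| \int \lebden(\tilde{x}, x) \, \post[part]{t}(\rmd \tilde{x})$, Markov's inequality reduces the claim to $\E[\lebref \leba] \to 0$, and Fubini turns this into $\int \E[\leba(x)] \, \lebref(\rmd x) \to 0$ --- a statement about a \emph{deterministic} sequence of functions of $x$. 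One application of the generalized (Pratt) dominated convergence theorem with dominating functions $c \, \E[\lebb(x)]$ then finishes, because the pointwise limits $\E[\leba(x)] \to 0$ and $\E[\lebb(x)] \to \int \lebden(\tilde{x}, x) \, \post{t}(\rmd \tilde{x})$ hold for \emph{every} $x$ (by \autoref{thm:hoeffding} together with ordinary bounded convergence in $\omega$ for each fixed $x$), so no null-set bookkeeping, no Fubini upgrade of hypothesis (i), and no subsequence extraction are needed. You instead apply Pratt pathwise to the random densities $p_\N$, which is exactly where your extra machinery comes from. The one step you should tighten is the diagonal extraction: almost sure convergence of $\int_{\set{E}_m} (1 \wedge |p_\N - p|) \, \rmd \lebref$ to zero gives convergence of $p_\N$ to $p$ in $\lebref$-measure on $\set{E}_m$, which yields $\lebref$-a.e.\ convergence only along a \emph{further} subsequence that a priori depends on $\omega$; to keep the subsequence deterministic, extract $\N_k$ with $\E[\int_{\set{E}_m} (1 \wedge |p_{\N_k} - p|) \, \rmd \lebref]$ summable for each $m$, so that $\sum_k (1 \wedge |p_{\N_k}(y) - p(y)|) < \infty$ for $\lebref$-a.e.\ $y$, $\prob$-a.s. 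With that repair your route closes and even yields almost sure convergence along subsequences as a by-product; the paper's expectation-first ordering buys the same conclusion with a single, entirely deterministic dominated-convergence argument.
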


\begin{proof}
    Since, by \autoref{thm:hoeffding}, $\post[part]{t} \lebker \lebfun \convp \post{t} \lebker \lebfun$, it is enough to establish that 
    \begin{equation*}
        \post[part]{t} \lebker \lebfun[\N] \convp \post[part]{t} \lebker \lebfun \eqsp. 
    \end{equation*}
    For this purpose, set 
    \[
        \begin{split}
            \leba(x) &\eqdef | \lebfun[\N](x) - \lebfun(x) |\int \lebden(\tilde{x}, x) \, \post[part]{t}(\rmd \tilde{x}) \eqsp, \\
            \lebb(x) &\eqdef \int \lebden(\tilde{x}, x) \, \post[part]{t}(\rmd \tilde{x})
        \end{split}
        (\N \in \nsetpos, x \in \set{X}) \eqsp.
    \]
    Since $| \post[part]{t} \lebker \lebfun[\N] - \post[part]{t} \lebker \lebfun | \leq \lebref \leba$ it is, by Markov's inequality, enough to show that $\E[\lebref \leba]$ tends to zero as $\N$ tends to infinity. However, by Fubini's theorem, 
    $$
        \lim_{\N \rightarrow \infty} \E[\lebref \leba] = \lim_{\N \rightarrow \infty} \int \E[ \leba(x)] \, \lebref(\rmd x)= 0 \eqsp,
    $$ 
    where the last equality is a consequence of the generalized Lebesgue dominated convergence theorem provided that 
    \begin{enumerate}[(i)]
        \item  $\lim_{\N \rightarrow \infty} \E[ \leba(x)] = 0$ for all $x \in \set{X}$,
        \item  there exists $c \in \rsetpos$ such that $\E[ \leba(x) ] \leq c \, \E[ \lebb(x) ]$ for all $x \in \set{X}$, 
        \item  $\lim_{\N \rightarrow \infty} \int \E[ \lebb(x)] \, \lebref(\rmd x) = \int \lim_{\N \rightarrow \infty} \E[ \lebb(x)] \, \lebref(\rmd x)$. 
    \end{enumerate}
    Here (i) is implied by \autoref{thm:hoeffding} and, as $|\leba(x)| \leq \supn{\lebden} (\supn{\lebfun} + \lebfunbd)$ for all $x \in \set{X}$, the standard dominated convergence theorem. Moreover, (ii) is satisfied with $c = \supn{\lebfun} + \lebfunbd$. Finally, to check (iii), notice that
    \begin{multline*}
        \lim_{\N \to \infty} \int \E[ \lebb(x) ] \, \lebref(\rmd x) \overset{(a)}{=} \lim_{\N \to \infty} \E [ \post[part]{t} \lebker \1{\set{X}} ] \overset{(b)}{=} \post{t} \lebker \1{\set{X}} \overset{(c)}{=} \iint \lebden(\tilde{x}, x) \, \post{t}(\rmd \tilde{x}) \, \lebref(\rmd x) \\\overset{(d)}{=} \int \lim_{\N \to \infty} \E [ \lebb(x) ] \, \lebref(\rmd x) \eqsp,
    \end{multline*}
    where (a) and (c) follow by Fubini's theorem and (b) and (d) are obtained from~\autoref{thm:hoeffding} and the standard dominated convergence theorem (as $\lebker \1{\set{X}} \in \bmf{\alg{X}}$ and $\lebden \in \bmf{\alg{X}^2}$ by assumption). This completes the proof. 
\end{proof}


\subsection{Proof of \autoref{prop:bound:normalized:FFBSm}}

By~\cite[Lemma 1]{dubarry:lecorff:2013}, 
\begin{equation} \label{eq:cyrille:bd}
    \supn{\BFcent{s + 1}{t} \af{t}} \leq \supn{\lk{s+1} \ldots \lk{t} \1{\set{X}}} \sum_{\ell = 0}^{t - 1} \mr^{\max \{s - \ell + 2, \ell - s - 3, 0 \}} \oscn{\term{\ell}} \eqsp,
\end{equation}
and, consequently, 
$$
    \sum_{s = 0}^{t - 1} \frac{\post{s} \lk{s} ( \md{s + 1} \BFcent[2]{s + 1}{t} \af{t} )}{(\post{s} \lk{s} \cdots \lk{t - 1} \1{\set{X}})^2} \leq \hbd^2 \sum_{s = 0}^{t - 1} \frac{( \post{s} \lk{s} \md{s + 1} ) \supn{\lk{s+1} \ldots \lk{t} \1{\set{X}}}^2 }{(\post{s} \lk{s} \cdots \lk{t - 1} \1{\set{X}})^2} \left( \sum_{\ell = 0}^{t - 1} \mr^{\max \{s - \ell + 2, \ell - s - 3, 0 \}} \right)^2 \eqsp.
$$
Now, under \autoref{ass:strong:mixing}, for all $x \in \set{X}$,
\begin{equation} \label{eq:L:prod:up:low:bd}
    \hklow \refM( \md{s + 2} \lk{s + 2} \cdots \lk{t} \1{\set{X}} ) \leq \lk{s + 1} \cdots \lk{t} \1{\set{X}}(x) \leq \hkup \refM( \md{s + 2} \lk{s + 2} \ldots \lk{t} \1{\set{X}} )  
\end{equation}
implying that 
$$
    \frac{( \post{s} \lk{s} \md{s + 1} ) \supn{\lk{s + 1} \cdots \lk{t} \1{\set{X}}}^2 }{(\post{s} \lk{s} \cdots \lk{t - 1} \1{\set{X}})^2} = \frac{(\post{s + 1} \md{s + 1})  \supn{\lk{s + 1} \cdots \lk{t} \1{\set{X}}}^2}{(\post{s} \lk{s} \1{\set{X}})(\post{s + 1} \lk{s + 1} \cdots \lk{t - 1} \1{\set{X}})^2} \leq \frac{\mdup}{\mdlow} \left( \frac{\hkup}{\hklow} \right)^2 = \frac{\mdup}{\mdlow (1 - \mr)^2} \eqsp.
$$
Moreover, as 
\begin{align*}
    \sum_{s = 0}^{t - 1} \left( \sum_{\ell = 0}^{t - 1} \mr^{\max \{s - \ell + 2, \ell - s - 3, 0 \}} \right)^2 = \sum_{s = 0}^{t - 4} \left( \frac{2 - \mr^{s + 3} - \mr^{t - s - 3}}{1 - \mr} \right)^2 \\
    + \sum_{s = t-3}^{t-1} \left( \rho^{s+2} \frac{ 1 - \rho^{-t} }{1 - \rho^{-1}} \right) = \frac{4 t}{(1 - \mr)^2} + o(t) \eqsp,
\end{align*}
we conclude that 
$$
    \limsup_{t \rightarrow \infty} \frac{1}{t} \sum_{s = 0}^{t - 1} \frac{\post{s} \lk{s} ( \md{s + 1} \BFcent[2]{s + 1}{t} \af{t} )}{(\post{s} \lk{s} \cdots \lk{t - 1} \1{\set{X}})^2} \leq \hbd^2 \frac{4 \mdup}{\mdlow (1 - \mr)^4} \eqsp. 
$$


\subsection{Proof of \autoref{thm:linear:bound:normalized:PaRIS}}

The first term of $\asvarstd[2]{t}$ is the asymptotic variance of the FFBSm algorithm, which is,  by \autoref{prop:bound:normalized:FFBSm}, bounded by $4 \hbd^2 \mdup / \{\mdlow (1 - \mr)^4 \}$. To treat the second term, we bound, using \eqref{eq:L:prod:up:low:bd}, 
    \begin{align}
        \frac{\lk{\ell + 1} \cdots \lk{s} (\md{s + 1} \{\lk{s + 1} \cdots \lk{t - 1} \1{\set{X}} \}^2 ) }{(\post{\ell + 1} \lk{\ell + 1} \cdots \lk{s - 1} \1{\set{X}}) (\post{s} \lk{s} \cdots \lk{t - 1} \1{\set{X}})^2} &\leq \frac{\supn{\lk{\ell + 1} \cdots \lk{s - 1} \1{\set{X}}} \mdup \supn{\lk{s + 1} \cdots \lk{t - 1} \1{\set{X}} }^2}{(\post{\ell + 1} \lk{\ell + 1} \cdots \lk{s - 1} \1{\set{X}}) (\post{s} \lk{s} \1{\set{X}}) (\post{s + 1} \lk{s + 1} \cdots \lk{t - 1} \1{\set{X}})^2} \nonumber \\
        &\leq \frac{\mdup}{\mdlow (1 - \mr)^3} \label{eq:cov:term:bd:add:func} \eqsp. 
    \end{align}
Moreover, since $\tstat{\ell + 1} \af{\ell + 1} = \bk{\post{\ell}}(\tstat{\ell} \af{\ell} + \term{\ell})$ and $\tstat{\ell} \af{\ell} = \BF{\ell}{\ell} \af{\ell}$, we obtain, by reusing \eqref{eq:cyrille:bd},
\begin{equation*}
    \supn{\tstat{\ell} \af{\ell} + \term{\ell} - \tstat{\ell + 1} \af{\ell + 1}} \leq \oscn{\tstat{\ell} \af{\ell}} + \oscn{\term{\ell}} \leq 4 \supn{ \BFcent{\ell}{\ell} \af{\ell}} + \oscn{\term{\ell}} \leq \hbd \left( \frac{4 \mdup \mr^2}{1 - \mr} + 1 \right) \eqsp. 
\end{equation*}
Thus, 
\begin{multline*}
    \frac{1}{t} \sum_{s = 0}^{t - 1} \sum_{\ell = 0}^s \K^{\ell - (s + 1)} \frac{\post{\ell + 1}\{ \bk{\post{\ell}} ( \tstat{\ell} \af{\ell} + \term{\ell} - \tstat{\ell + 1} \af{\ell + 1} )^2 \lk{\ell + 1} \cdots \lk{s} (\md{s + 1} \{\lk{s + 1} \cdots \lk{t - 1} \testf[t] \}^2 ) \}}{(\post{\ell + 1} \lk{\ell + 1} \cdots \lk{s - 1} \1{\set{X}}) (\post{s} \lk{s} \cdots \lk{t - 1} \1{\set{X}})^2} \\\leq \hbd^2 \frac{\mdup}{\mdlow (1 - \mr)^3} \left( \frac{4 \mdup \mr^2}{1 - \mr} + 1 \right)^2 \frac{1}{t} \sum_{s = 0}^{t - 1} \sum_{\ell = 0}^s \K^{\ell - (s + 1)} \eqsp,
\end{multline*}
and since 
$$
    \lim_{s \rightarrow \infty} \sum_{\ell = 0}^s \K^{\ell - (s + 1)} = (\K - 1)^{-1} 
$$ 
we may conclude the proof by taking the Ces\`{a}ro mean.

\subsection{Proof of \autoref{thm:stability:marginal:smoothing:PaRIS}}

In the case of marginal smoothing, \cite[Theorem~12]{douc:garivier:moulines:olsson:2010} provides, for $t \geq \sptime$ (since the variance vanishes for $t < \sptime$, the result holds trivially true in this case), the time uniform bound 
$$
    \asvarFFBSmstd[2]{t} \leq \oscn[2]{\term{\sptime}} \frac{\mdup^2 (1 + \mr^2)}{(1 + \mr)(1 - \mr)^3} 
$$
and hence, since all terms are zero except $\term{\sptime}$, it is enough to bound the quantity 
\begin{equation} \label{eq:key:reminder:marginal:smoothing}
    \sum_{s = \sptime}^{t - 1} \sum_{\ell = \sptime}^s \K^{\ell - (s + 1)} \frac{\post{\ell + 1}\{ \bk{\post{\ell}} ( \tstat{\ell} \af{\ell} + \term{\ell} - \tstat{\ell + 1} \af{\ell + 1} )^2 \lk{\ell + 1} \cdots \lk{s} (\md{s + 1} \{\lk{s + 1} \cdots \lk{t - 1} \testf[t] \}^2 ) \}}{(\post{\ell + 1} \lk{\ell + 1} \cdots \lk{s - 1} \1{\set{X}}) (\post{s} \lk{s} \cdots \lk{t - 1} \1{\set{X}})^2} 
\end{equation}
(where $\term{\ell} = 0$ for $\ell > \sptime$). In addition, by \cite[Lemma~10]{douc:garivier:moulines:olsson:2010}, for all $\ell \geq \sptime$, 
$$
    \supn{\BFcent{\ell}{\ell} \af{\ell}} \leq \mr^{\ell - \sptime} \oscn{\term{\sptime}} \eqsp, 
$$
yielding
$$
    \supn{\tstat{\ell} \af{\ell} + \term{\ell} - \tstat{\ell + 1} \af{\ell + 1}} \leq 2 \mr^{\ell - \sptime} \oscn{\term{\sptime}} \eqsp. 
$$
By combining this with \eqref{eq:cov:term:bd:add:func} we obtain, via standard operations on geometric sums, 
\[
    \begin{split}
        \eqref{eq:key:reminder:marginal:smoothing} &\leq 4 \oscn[2]{\term{\sptime}} \frac{\mdup}{\mdlow (1 - \mr)^3} \sum_{s = \sptime}^{t - 1} \sum_{\ell = \sptime}^s \K^{\ell - (s + 1)} \mr^{2(\ell - \sptime)} \\
        &= 4 \oscn[2]{\term{\sptime}} \frac{\mdup}{\mdlow (1 - \mr)^3} \times 
        \begin{cases} 
            \displaystyle \frac{1}{(1 - \K \mr^2)} \left( \frac{1 - \K^{- (t - \sptime)}}{\K - 1} - \mr^2 \frac{1 - \mr^{2(t - \sptime)}}{1 - \mr^2} \right) & \mbox{if } \K \mr^2 \neq 1 \eqsp, \\ 
            \displaystyle \frac{1}{\K - 1} \left( 2 - \K^{- (t - \sptime)} - (t - \sptime) \K^{- (t - \sptime) + 1} - \K^{- (t - \sptime)} \right) & \mbox{if }\K \mr^2 = 1 \eqsp,
        \end{cases}
    \end{split}
\]
and hence, letting $t$ tend to infinity, 
$$
    \eqref{eq:key:reminder:marginal:smoothing} \leq 4 \oscn[2]{\term{\sptime}} \frac{\mdup}{\mdlow (1 - \mr)^3} \times 
    \begin{cases} 
        \displaystyle \frac{1}{(\K - 1) (1 - \mr^2)} & \mbox{if } \K \mr^2 \neq 1 \eqsp, \\ 
        \displaystyle \frac{2}{\K - 1} & \mbox{if }\K \mr^2 = 1 \eqsp,
    \end{cases}
$$
which concludes the proof.


\section{Technical results}
\label{sec:technical:lemmas}

\subsection{Conditional limit theorems for triangular arrays of dependent random variables}

We first recall two results, obtained in \cite{douc:moulines:2008} (but reformulated slightly here for our purposes), which are essential for the developments of the present paper. 

\begin{theorem}[\cite{douc:moulines:2008}] \label{thm:DM:A-1}
    Let $(\Omega, \mathcal{A}, \{\genfd[\N] \}_{\N \in \nsetpos}, \prob)$ be a filtered probability space. In addition, let $\{\arr{i} \}_{i = 1}^{\M[\N]}$, $\N \in \nsetpos$, be a triangular array of random variables on $(\Omega, \mathcal{A}, \prob)$ such that for all $\N \in \nsetpos$, the variables $\{\arr{i} \}_{i = 1}^{\M[\N]}$ are conditionally independent given $\genfd[\N]$ with $\cexp[txt]{|\arr{i}|}{\genfd[\N]} < \infty$, $\prob\mbox{-a.s.}$, for all $i \in \intvect{1}{\M[\N]}$. Moreover, assume that    
    \begin{hypA} \label{ass:DM:cons:tightness}
        $ 
        		\displaystyle \quad \lim_{\lambda \rightarrow \infty} \sup_{\N \in \nsetpos} \prob \left( \sum_{i = 1}^{\M[\N]} \cexp[txt]{|\arr{i}|}{\genfd[\N]} \geq \lambda \right) = 0 \eqsp.
        $
    \end{hypA}
    \begin{hypA} \label{ass:DM:cons:Lindeberg}
        For all $\varepsilon > 0$, as $\N \rightarrow \infty$,
        $$
        		\sum_{i = 1}^{\M[\N]} \cexp{|\arr{i}| \1{\{|\arr{i}| \geq \varepsilon \}}}{\genfd[\N]} \convp 0 \eqsp. 
        $$
    \end{hypA}
    Then, as $\N \rightarrow \infty$, 
    $$ 
        \max_{m \in \intvect{1}{\M[\N]}} \left|\sum_{i = 1}^m \arr{i} - \sum_{i = 1}^m \cexp{\arr{i}}{\genfd[\N]} \right| \convp 0 \eqsp.  
    $$
\end{theorem}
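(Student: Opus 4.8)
The plan is to prove the statement by a truncation argument coupled with a conditional $L^2$ Doob maximal inequality, using \A{ass:DM:cons:tightness} through a conditioning-on-a-good-event step rather than by taking full expectations (which need not be finite here). Throughout, the object of interest is $M_\N \eqdef \max_{m \in \intvect{1}{\M[\N]}} \left| \sum_{i=1}^m (\arr{i} - \cexp{\arr{i}}{\genfd[\N]}) \right|$, and the goal is $M_\N \convp 0$. I would fix a truncation level $\varepsilon \in \rsetpos$ and split each variable as $\arr{i} = \arr{i}\1{\{|\arr{i}| < \varepsilon\}} + \arr{i}\1{\{|\arr{i}| \geq \varepsilon\}}$, which induces a decomposition of the centred partial sums into a uniformly bounded part and a tail part, bounding $M_\N$ by the sum of the corresponding two maxima.

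For the tail part, the triangle inequality bounds its maximum over $m$ by $R_\N \eqdef \sum_i \big( |\arr{i}|\1{\{|\arr{i}| \geq \varepsilon\}} + \cexp{|\arr{i}|\1{\{|\arr{i}| \geq \varepsilon\}}}{\genfd[\N]} \big)$. The second sum converges to $0$ in probability directly by \A{ass:DM:cons:Lindeberg}, and the first sum has precisely that vanishing quantity as its conditional expectation, so it too tends to $0$ in probability: for $\delta, \eta \in \rsetpos$ one splits on whether the conditional-expectation sum exceeds $\eta$, applies conditional Markov on the complement to bound the remainder by $\eta / \delta$, lets the event probability vanish by \A{ass:DM:cons:Lindeberg}, and then sends $\eta \to 0$. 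Hence the tail part is negligible in probability for each fixed $\varepsilon$.

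For the bounded part, I would note that, conditionally on $\genfd[\N]$, the centred truncated increments are independent and mean-zero, so their partial sums form a martingale relative to $\mathcal{G}_m \eqdef \genfd[\N] \vee \sigma(\arr{1}, \ldots, \arr{m})$ (conditional independence gives $\cexp{\arr{m+1}\1{\{|\arr{m+1}| < \varepsilon\}}}{\mathcal{G}_m} = \cexp{\arr{m+1}\1{\{|\arr{m+1}| < \varepsilon\}}}{\genfd[\N]}$). The conditional $L^2$ Doob inequality then gives $\cexp{(\text{bounded part})^2}{\genfd[\N]} \leq 4 \sum_i \operatorname{Var}\!\big(\arr{i}\1{\{|\arr{i}| < \varepsilon\}} \mid \genfd[\N]\big) \leq 4\varepsilon \sum_i \cexp{|\arr{i}|}{\genfd[\N]}$, the last step using that the truncated increment is dominated by $\varepsilon$. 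A conditional Markov inequality upgrades this to $\prob\big(\text{bounded part} \geq \delta \mid \genfd[\N]\big) \leq 4\varepsilon \delta^{-2} \sum_i \cexp{|\arr{i}|}{\genfd[\N]}$.

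Finally I would de-condition: for any $\lambda \in \rsetpos$, bound the unconditional probability by $\prob\big(\sum_i \cexp{|\arr{i}|}{\genfd[\N]} \geq \lambda\big) + 4\varepsilon\lambda\delta^{-2}$, the second term arising on the good event where the conditional-expectation sum is below $\lambda$. By \A{ass:DM:cons:tightness} the first term can be made arbitrarily small uniformly in $\N$ by taking $\lambda$ large, after which the second term is made small by taking $\varepsilon$ small; since the tail part vanishes for that fixed $\varepsilon$, one concludes $\limsup_\N \prob(M_\N \geq 2\delta) = 0$. The main obstacle is exactly this quantifier management: because \A{ass:DM:cons:tightness} supplies only boundedness in probability (not in $L^1$) of $\sum_i \cexp{|\arr{i}|}{\genfd[\N]}$, one cannot integrate the conditional bounds; the order "fix $\delta$, then $\lambda$, then $\varepsilon$" must be arranged so that the $\varepsilon$-dependent bounded part and the $\varepsilon$-fixed tail part are controlled simultaneously.
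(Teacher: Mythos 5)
This statement is not proved in the paper: it is recalled verbatim (with minor reformulation) from Douc and Moulines (2008), so there is no in-paper proof to compare against. Judged on its own terms, your argument is correct and complete: the exact decomposition into a truncated part and a tail part, the treatment of the tail via \A{ass:DM:cons:Lindeberg} together with the conditional Markov inequality on the $\genfd[\N]$-measurable event where the conditional tail sum is small, the martingale structure of the truncated centred partial sums with respect to $\mathcal{G}_m = \genfd[\N] \vee \sigma(\arr{1}, \ldots, \arr{m})$, the conditional Doob $L^2$ bound $4\varepsilon \sum_{i} \cexp[txt]{|\arr{i}|}{\genfd[\N]}$, and the final de-conditioning through \A{ass:DM:cons:tightness} with the quantifier order ``$\delta$, then $\lambda$, then $\varepsilon$'' are all sound. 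This truncation-plus-conditional-maximal-inequality scheme is essentially the standard route to such degenerate convergence results for conditionally independent triangular arrays and is in the spirit of the argument in the cited reference; the only point worth making explicit in a write-up is the justification of the conditional form of Doob's inequality (e.g., via regular conditional probabilities or by running the usual proof with conditional expectations given $\mathcal{G}_0 = \genfd[\N]$).
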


\begin{theorem}[{\cite{douc:moulines:2008}}] \label{thm:DM:A-3}
    Let the assumptions of Theorem~\ref{thm:DM:A-1} hold with $\cexp[txt]{\arr[2]{i}}{\genfd[\N]} < \infty$, $\prob\mbox{-a.s.}$, for all $i \in \intvect{1}{\M[\N]}$, and
    \A{ass:DM:cons:tightness} and \A{ass:DM:cons:Lindeberg} replaced by
    \begin{hypB} \label{ass:DM:norm:asymptotic:variance}
        For some constant $\varsigma^2 > 0$, as $\N \rightarrow \infty$,
        $$ 
        		\sum_{i = 1}^{\N} \left( \cexp[txt]{\arr[2]{i}}{\genfd[\N]} - \E^2 \left[ \arr{i} \mid \genfd[\N] \right] \right) \convp \varsigma^2 \eqsp.
        $$
    \end{hypB}
    \begin{hypB} \label{ass:DM:norm:Lindeberg}
        For all $\varepsilon > 0$, as $\N \rightarrow \infty$,
        $$
        		\sum_{i = 1}^{\M[\N]} \cexp{\arr[2]{i} \1{\{|\arr{i}| \geq \varepsilon \}}}{\genfd[\N] } \convp 0 \eqsp.
        $$
    \end{hypB}
    Then, for all $u \in \rset$, as $\N \rightarrow \infty$, 
    $$ 
        \cexp{ \exp \left( \operatorname{i} u \sum_{i = 1}^{\M[\N]} \left\{ \arr{i} - \cexp[txt]{\arr{i}}{\genfd[\N]} \right\} \right) }{\genfd[\N]}  \convp \exp \left( -u^2 \varsigma^2 /2 \right) \eqsp.  
    $$
\end{theorem}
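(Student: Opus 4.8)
This is a conditional Lindeberg--Feller central limit theorem, and my plan is to reduce it, via conditional independence, to the convergence of a product of conditional characteristic functions. Fix $u \in \rset$ and write $m_i \eqdef \cexp[txt]{\arr{i}}{\genfd[\N]}$, $\sigma_i^2 \eqdef \cexp[txt]{\arr[2]{i}}{\genfd[\N]} - m_i^2$, and $\psi_i \eqdef \cexp[txt]{\exp(\mathrm{i} u (\arr{i} - m_i))}{\genfd[\N]}$, so that \B{ass:DM:norm:asymptotic:variance} reads $\sum_{i = 1}^{\M[\N]} \sigma_i^2 \convp \varsigma^2$. Because each $m_i$ is $\genfd[\N]$-measurable, the centred variables $\{ \arr{i} - m_i \}_i$ are still conditionally independent given $\genfd[\N]$, so $\cexp[txt]{\exp(\mathrm{i} u \sum_i (\arr{i} - m_i))}{\genfd[\N]} = \prod_{i = 1}^{\M[\N]} \psi_i$ and it suffices to prove $\prod_i \psi_i \convp \exp(-u^2 \varsigma^2 / 2)$.

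First I would record the asymptotic negligibility of the individual summands. From the elementary bound $\cexp[txt]{\arr[2]{i}}{\genfd[\N]} \leq \varepsilon^2 + \sum_j \cexp[txt]{\arr[2]{j} \1{\{ |\arr{j}| \geq \varepsilon \}}}{\genfd[\N]}$ and \B{ass:DM:norm:Lindeberg}, letting $\varepsilon \downarrow 0$ gives $\max_i \cexp[txt]{\arr[2]{i}}{\genfd[\N]} \convp 0$; since $m_i^2 \leq \cexp[txt]{\arr[2]{i}}{\genfd[\N]}$ and $\sigma_i^2 \leq \cexp[txt]{\arr[2]{i}}{\genfd[\N]}$, this yields both $\max_i |m_i| \convp 0$ and $\max_i \sigma_i^2 \convp 0$, which will be used repeatedly.

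Next comes the core expansion. Using $\cexp[txt]{\arr{i} - m_i}{\genfd[\N]} = 0$ I would write $\psi_i - 1 = \cexp[txt]{\exp(\mathrm{i} u (\arr{i} - m_i)) - 1 - \mathrm{i} u (\arr{i} - m_i) + \frac{u^2}{2} (\arr{i} - m_i)^2}{\genfd[\N]} - \frac{u^2}{2} \sigma_i^2$, so that $\sum_i (\psi_i - 1) = -\frac{u^2}{2} \sum_i \sigma_i^2 + R_\N$, where $R_\N$ collects the Taylor remainders. The bound $|\mathrm{e}^{\mathrm{i} x} - 1 - \mathrm{i} x + x^2/2| \leq \min(x^2, |x|^3/6)$, split at a threshold $\delta$, gives $|R_\N| \leq \frac{u^3 \delta}{6} \sum_i \sigma_i^2 + u^2 \sum_i \cexp[txt]{(\arr{i} - m_i)^2 \1{\{ |\arr{i} - m_i| \geq \delta \}}}{\genfd[\N]}$. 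On the $\genfd[\N]$-measurable event $\{ \max_i |m_i| < \delta/2 \}$, whose probability tends to one, one has $|\arr{i} - m_i| \geq \delta \Rightarrow |\arr{i}| \geq \delta/2$ and $(\arr{i} - m_i)^2 \leq 2 \arr[2]{i} + 2 m_i^2$, so the second sum is dominated by $2 \sum_i \cexp[txt]{\arr[2]{i} \1{\{ |\arr{i}| \geq \delta/2 \}}}{\genfd[\N]} + 2 (\max_i m_i^2) \sum_i \cexp[txt]{\1{\{ |\arr{i}| \geq \delta/2 \}}}{\genfd[\N]}$, both terms vanishing in probability by \B{ass:DM:norm:Lindeberg} (and Markov's inequality for the last factor) together with $\max_i m_i^2 \convp 0$. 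Sending $\N \to \infty$ and then $\delta \downarrow 0$ gives $R_\N \convp 0$, hence $\sum_i (\psi_i - 1) \convp -\frac{u^2}{2} \varsigma^2$.

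Finally I would pass from the sum to the product. Each $\psi_i$ is a conditional characteristic function, so $|\psi_i| \leq 1$ and $\operatorname{Re}(\psi_i - 1) \leq 0$, whence $|\exp(\psi_i - 1)| \leq 1$; the telescoping bound $|\prod_i \psi_i - \prod_i \exp(\psi_i - 1)| \leq \sum_i |\psi_i - \exp(\psi_i - 1)|$ then applies. Since $\max_i |\psi_i - 1| \leq \frac{u^2}{2} \max_i \sigma_i^2 \convp 0$, eventually $|\psi_i - 1| \leq 1/2$ for every $i$, and there $|\psi_i - \exp(\psi_i - 1)| = |\mathrm{e}^{\psi_i - 1} - 1 - (\psi_i - 1)| \leq |\psi_i - 1|^2$; combined with $|\psi_i - 1| \leq \frac{u^2}{2} \sigma_i^2$ this gives $|\prod_i \psi_i - \exp(\sum_i (\psi_i - 1))| \leq \frac{u^4}{4} (\max_i \sigma_i^2) \sum_i \sigma_i^2 \convp 0$. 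Together with $\sum_i (\psi_i - 1) \convp -\frac{u^2}{2} \varsigma^2$ and continuity of the exponential this delivers $\prod_i \psi_i \convp \exp(-u^2 \varsigma^2 / 2)$. I expect the main obstacle to be not any single estimate but the careful transfer of \B{ass:DM:norm:Lindeberg} from the raw array $\{ \arr{i} \}$ to the centred array $\{ \arr{i} - m_i \}$ while keeping every convergence in probability; once $\max_i |m_i| \convp 0$ is secured, this reduces to the elementary truncation argument above.
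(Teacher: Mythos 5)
Note first that the paper offers no proof of \autoref{thm:DM:A-3}: the result is imported (up to reformulation) from \cite{douc:moulines:2008}, so there is no in-paper argument to compare yours against. Your proof is correct and follows the same classical conditional Lindeberg--Feller route as the cited reference: factorize the conditional characteristic function by conditional independence, expand each factor as $\psi_i - 1 = -\tfrac{u^2}{2}\sigma_i^2 + r_i$ with the remainder controlled by the $\min(x^2, |x|^3/6)$ bound and a truncation at level $\delta$, and pass from $\sum_i(\psi_i - 1)$ to $\prod_i \psi_i$ using $|\psi_i - 1| \leq \tfrac{u^2}{2}\sigma_i^2$ together with the uniform negligibility $\max_i \sigma_i^2 \convp 0$ (itself correctly extracted from the conditional Lindeberg condition). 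The one genuinely delicate point --- transferring the Lindeberg condition from $\{\upsilon_N^i\}_i$ to the centred array --- you handle properly via $\max_i |\mathbb{E}[\upsilon_N^i \mid \mathcal{F}_N]| \convp 0$ and restriction to the high-probability $\mathcal{F}_N$-measurable event $\{\max_i |m_i| < \delta/2\}$, so I see no gap.
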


\subsection{An accept-reject-based algorithm for backward sampling}
\label{sec:accept-reject}

Given two subsequent particle samples $\{ (\epart{s - 1}{i}, \wgt{s - 1}{i})\}_{i = 1}^\N$Êand $\{ (\epart{s}{i}, \wgt{s}{i})\}_{i = 1}^\N$, the following algorithm, which is a trivial adjustment of \cite[Algorithm~1]{douc:garivier:moulines:olsson:2010}, simulates the full set $\{ \bi{s}{i}{j} : (i, \k) \in \intvect{1}{\N} \times \intvect{1}{\K} \}$ of backward indices required for one iteration of PaRIS. The algorithm requires \autoref{ass:boundedness:g:q}(ii) to hold true. 

\begin{algorithm}[htb]
    \caption{Accept-reject-based backward sampling}
    \label{alg:accept:reject}
    \begin{algorithmic}[1]
        \Require Particle samples $\{ (\epart{s - 1}{i}, \wgt{s - 1}{i})\}_{i = 1}^\N$Êand $\{ (\epart{s}{i}, \wgt{s}{i})\}_{i = 1}^\N$.
        \For{$\k = 1 \to \K$}
            \State set $\set{L} \gets \intvect{1}{\N}$;
            \While {$\set{L} \neq \varnothing$}
                \State set $n \gets \# \set{L}$;
                \State draw $(I_1, \ldots, I_\N) \sim \probdist( \{ \wgt{s - 1}{i} \}_{i = 1}^\N )^{\varotimes \N}$;
                \State draw $(U_1, \ldots, U_\N) \sim \unif(0,1)^{\varotimes \N}$; 
                \State set $\set{L}_n \gets \varnothing$;
                \For{$k = 1 \to n$}
                    \If{$U_k \leq \hd(\epart{s-1}{I_k},\epart{s}{\set{L}(k)})/\hkup$}
                        \State set $\bi{s}{\set{L}(k)}{\k} \gets I_k$;
                    \Else
                        \State set $\set{L}_n \gets \set{L}_n \cup \{ \set{L}(k) \}$;
                    \EndIf
                \EndFor
                \State set $\set{L} \gets \set{L}_n$;
            \EndWhile
        \EndFor
        \State \Return $\{ \bi{s}{i}{j} : (i, \k) \in \intvect{1}{\N} \times \intvect{1}{\K} \}$
    \end{algorithmic}
\end{algorithm}

\bibliographystyle{plain}
\bibliography{biblio}

\end{document}